\def\ve{h}
\def\bra{\langle}
\def\ket{\rangle}
\def\backslash{\symbol{92}}
\newcommand{\norm}[1]{\left\lVert#1\right\rVert}
\newcommand{\dnum}[1][1]{\mathds{#1}}
\DeclareMathOperator{\supp}{supp}
\newtheorem{theorem}{Theorem}[section]
\newtheorem{corollary}[theorem]{Corollary}
\newtheorem{prop}[theorem]{Proposition}
\newtheorem{lemma}[theorem]{Lemma}
\newtheorem{OQ}{Open Question}
\theoremstyle{definition}
\newtheorem{definition}[theorem]{Definition}
\newtheorem{remark}{Remark}
\newcommand{\addperiod}[1]{#1.}
\titleformat*{\subsection}{\bfseries}
\titleformat{\subsubsection}[runin]
  {\normalfont\bfseries}
  {\thesubsubsection.}
  {0.5em}
  {\addperiod}
\titleformat*{\subsubsection}{\bfseries}
\titleformat*{\paragraph}{\bfseries}
\titleformat*{\subparagraph}{\large\bfseries}
\newcommand{\Z}{\mathbb{Z}} 
\newcommand{\beq}{\begin{equation}}
\newcommand{\eeq}{\end{equation}}
\title{Infinite Stability in Disordered Systems}
\author{Andrew C. Yuan
\thanks{Department of Physics, Stanford University, Stanford, CA 94305, USA}
\thanks{Condensed Matter Theory Center and Joint Quantum Institute, Department of Physics, University of Maryland, College Park, Maryland 20742, USA;
{\footnotesize andrewcyuan@gmail.com}
}
\and 
Nick Crawford
\thanks{Department of Mathematics, The Technion, Haifa, Israel; \footnotesize nickc@technion.ac.il}
\date{} 
 }
\begin{document}

\maketitle

\begin{abstract}
    In quenched disordered systems, the existence of ordering is generally believed to be only possible in the weak disorder regime (disregarding models of spin-glass type). 
    In particular, sufficiently large random fields is expected to prohibit any finite temperature ordering. 
    Here, we prove that this is not necessarily true, and show rigorously that for physically relevant systems in $\mathbb{Z}^d$ with $d\ge 3$, disorder can induce ordering that is \textit{infinitely stable}, in the sense that (1) there exists ordering at arbitrarily large disorder strength and (2) the transition temperature is asymptotically nonzero in the limit of infinite disorder.
    Analogous results can hold in 2 dimensions provided that the underlying graph is non-planar (e.g., $\mathbb{Z}^2$ sites with nearest and next-nearest neighbor interactions).
\end{abstract}

\section{Introduction}
\subsection{Classical XY Model}
In this paper, we consider various disordered versions of the XY model defined on finite sub-graphs $G$ of increasing volume in  a fixed infinite graph $G_{\infty}$.  Mostly $G_{\infty}$ will be $\Z^d$ with either a nearest neighbor finite range edge set a bilayer version $\Z^d\times\{0, 1\}$.

We begin by introducing the the classical XY model at inverse temperature $\beta \ge 0,$ gradually adding features to the model until we arrive at the main example studied here.  On $G$, the classical ferromagnetic XY model is an equilibrium spin system given by a Gibbs measure $\mu^\mathrm{XY}_{G,\beta},$ that is a probability measure $\mu^\mathrm{XY}_{G,\beta}$ on the set of spin configurations $\sigma \in (\dS^1)^V$.  It is convenient to parametrize $\sigma_x\in  \dS^1$ by the angle $\phi_x \in (-\pi,\pi]$.  With respect to these angular variables, the Gibbs state takes the form
\begin{equation}
    \label{eq:Gibbs-XY}
    \mu^\mathrm{XY}_{G,\beta}[d\phi] = \frac{1}{\sZ^\mathrm{XY}_{G,\beta}} \exp\left(-\beta \sH^\mathrm{XY}_G (\phi)\right) \prod_{x\in V}d\phi_x,
\end{equation}
Here $d\phi_i$ denotes the uniform measure on the interval $(-\pi,\pi]$, $\sZ^\mathrm{XY}_{G,\beta}$ is the partition function which normalizes the probability measure, and $\sH^\mathrm{XY}_G$ is the Hamiltonian of the system.
In this case given by
\begin{equation}
    \label{eq:H-XY}
    \sH^\mathrm{XY}_G (\phi) = -\sum_{e\in E} \cos \nabla_{\be} \phi.
\end{equation}
Here $\nabla_{\be} \phi = \phi_x-\phi_y$ denoting the phase gradient along the directed edge $\be=xy$, where an arbitrary choice of orientation is made at the outset.
Let us denote expectations with respect to $\mu^\mathrm{XY}_{G,\beta}$ by $\bra \cdot \ket_{G, \beta}^{XY}.$

Thermodynamic limits are obtained by considering possible weak limits of the family $(\mu^\mathrm{XY}_{G,\beta})_G$ as $G \nearrow G_\infty.$ The Ginibre inequalities \cite{ginibre1970general,tokushige2024graphical} imply that, as defined above with free boundary conditions, correlation functions described by cosines of sums of angular variables are increasing in $G.$ As a consequence the thermodynamic limit is unique in this case.  
In particular, for any vertices $x,y\in V_\infty$,
\begin{equation}
    \bra \cos(\phi_x -\phi_y)\ket_{G,\beta}^\mathrm{XY}\nearrow \bra\cos (\phi_x -\phi_y) \ket_{G_\infty,\beta}^\mathrm{XY},  \quad G\nearrow G_\infty
\end{equation}
Similarly, below we will fix a boundary condition in which all spins point in, say, the $X$ (horizontal) direction.  
In this case the Ginibre inequalities imply cosines of sums of angular variables are decreasing in $G$, and again the thermodynamic limit exists and is unique.

When restricted to $G_\infty =\dZ^d,d\ge 3$, the XY model is known to undergo a phase transition as one lowers the temperature $T$ (increases the inverse temperature $\beta\equiv 1/T$), i.e.,
\begin{itemize}
    \item At high $T\gg 1$ (where 1 is the scale of the coupling constants of the XY model in Eq. \eqref{eq:H-XY}), the model exhibits exponentially decaying correlations $\bra \cos(\phi_x -\phi_y)\ket_{\dZ^d,\beta}^\mathrm{XY}$ as $\norm{x-y}\to \infty$ \cite{aizenman1980comparison}.
    \item At low $T\ll 1$, the correlations exhibit \textit{long-range-ordering} (LRO), i.e., $\bra \cos(\phi_x -\phi_y) \ket_{\dZ^d,\beta}^\mathrm{XY} \ge c >0$ is bounded below and away from 0 as $\norm{x-y}\to \infty$ \cite{frohlich1976infrared,garban2022continuous}.
\end{itemize} 
Despite the lack of a rigorous proof, the dichotomy in long range behavior and the abundance of numerical evidence suggests that there exists a sharp phase transition and thus a well-defined transition temperature $T_\mathrm{LRO}^\mathrm{XY} \sim 1$ for $d\ge 3$ separating the high and low temperature phases.

The existence of a phase transition corresponds to the underlying global $U(1)$ symmetry in the model, invariance of the Hamiltonian and also of finite volume measures under uniform rotation of spins,  $\theta_x \mapsto \theta_x +\mathrm{const}$ for all sites $x$.
In particular, the model possesses $U(1)$ degenerate ground states (GS), i.e., any aligned spin configuration ($\theta_x = \mathrm{const}$)  minimizes the Hamiltonian.
 LRO at $T\ll 1$ implies nonergodicity of the infinite-volume system and suggests that the $U(1)$ symmetry has been ``spontaneously broken."

A more direct way to understand this symmetry breaking is to introduce boundary perturbations to the Hamiltonian in the process of taking thermodynamic limits.  
For definiteness, we consider the finite subgraphs induced by boxes $\Lambda_N =\{-N,...,N\}^d$.
Instead of $\mu_{\Lambda_N,\beta}^\mathrm{XY}$, one may be interested in understanding the conditional expectations 
\begin{equation}
    \mu^\mathrm{XY}_{0,\Lambda_N,\beta}[d\phi] \equiv \mu_{\Lambda_N,\beta}^\mathrm{XY}[d\phi | \phi_x=0,\forall x\in \partial \Lambda_N]
\end{equation}
where the XY spins $\phi$ are fixed on the boundary $\partial \Lambda_N \equiv \Lambda_N \backslash \Lambda_{N-1}$.
Due to the fixed boundary conditions, $\mu^\mathrm{XY}_{0,\Lambda_N,\beta}$ is not invariant under the $U(1)$ symmetry. 
Nevertheless, in the thermodynamic limit this setup leads to the same free energy as in the previous case.  

By Ginibre, the thermodynamic limit $\mu^\mathrm{XY}_{0,\dZ^d,\beta}$\cite{ginibre1970general,tokushige2024graphical} exists and in general
\begin{equation}
    \bra e^{i\phi_0}\ket^\mathrm{XY}_{0,\Lambda_N,\beta} \searrow \bra e^{i\phi_0}\ket^\mathrm{XY}_{0,\dZ^d,\beta} \geq 0, \quad N\nearrow \infty
\end{equation}
For $d\ge 3$,
\begin{itemize}
    \item At high $T\gg 1$, the model preserves the $U(1)$ symmetry in the sense that the local order parameter vanishes, i.e., $\bra e^{i\phi_0}\ket_{0, \dZ^d,\beta}^\mathrm{XY}=0$ \cite{aizenman1980comparison}.
    \item At low $T\ll 1$, the models exhibits \textit{spontaneous symmetry breaking} (SSB), i.e.,  $\bra e^{i\phi_0}\ket_{0, \dZ^d,\beta}^\mathrm{XY}>0$ \cite{frohlich1976infrared}.
\end{itemize} 
Despite the lack of a rigorous proof, it is expected that the XY model with fixed boundary conditions exhibits a sharp phase transition characterized by a transition temperature $T_\mathrm{SSB}^\mathrm{XY}$ which coincides exactly with that defined via free boundary conditions, i.e., $T_\mathrm{SSB}^\mathrm{XY} = T_\mathrm{LRO}^\mathrm{XY}$.
Similar symmetry considerations also occurs in discrete symmetry systems such as the Ising and Potts models, where the mathematical understanding is more developed.
In particular, sharpness of phase transitions and equivalence between LRO and SSB can be proven rigorously for such models.
See, for example, Ref. \cite{duminil2017lectures,friedli2017statistical} for a review.

In $d=2$ dimensions, the situation is more complicated. 
Due to the continuous $U(1)$ symmetry, the Mermin-Wagner theorem prohibits LRO/SSB at any finite temperature \cite{mermin1966absence,pfister1981symmetry,bricmont1977uniqueness}.
However, the XY model still exhibits a phase transition known as the Berezinskii-Kosterlitz-Thouless (BKT) transition \cite{berezinskii1971destruction,kosterlitz2018ordering,frohlich1981kosterlitz}. At $T\ll 1$, the correlations exhibit \textit{quasi}-LRO, i.e., algebraically decay with respect to $\norm{x-y}\to \infty$.
That there is a sharp transition from exponential to power law decay of correlations  was recently proved, \cite{van2023elementary,aizenman2021depinning}, and thus defines a critical transition temperature $T_\mathrm{QLRO}^\mathrm{XY}\sim 1$.
\subsection{Quenched Disorder}
We next consider the effects of disorder on the ferromagnetic XY model.
Viewed through of the lens of LRO, randomness in a system is usually regarded as an undesirable effect which disrupts coherence and thus lowers the transition temperature.
The Imry-Ma phenomenon \cite{imry1975random} provides an extreme example of this effect; in the presence of local random field interactions, the phase transition of spin systems can be completely destroyed in low dimensions independent of the field strength.

In the case of XY models, Imry-Ma applies to the following setup.
Let $\alpha:\dZ^d \to (-\pi,\pi]\cong \dS^1$ denote a collection of i.i.d. uniformly distributed random variables, and for each disorder realization $\alpha$, consider the finite-volume Gibbs measures $\mu^\mathrm{RFXY}_{G,\beta,\ve,\alpha}$ defined similarly as in Eq. \eqref{eq:Gibbs-XY} by the finite subgraph Hamiltonian
\begin{equation}
    \label{eq:H-RFXY}
    \sH^\mathrm{RFXY}_{G,\alpha,\ve} (\phi) = \sH^\mathrm{XY}_G(\phi) - \ve \sum_{x\in V} \cos(\phi_x-\alpha_x)
\end{equation}
If the disorder strength $\ve \ne 0$, then the RFXY model $\sH^\mathrm{RFXY}$ does not exhibit LRO/SSB at any finite temperature on $\dZ^d$ for $d\le 4$ dimensions.
This statement has been rigorously demonstrated for the random field XY (RFXY) model \cite{aizenman1989rounding,aizenman1990rounding}.  

The Imry-Ma phenomenon also applies to the random field Ising model \cite{aizenman1989rounding,aizenman1990rounding} and their quantum analogues \cite{greenblatt2009rounding,aizenman2012proof}.
More recent work even provides quantitative bounds on the decay of correlations in ground states \cite{ding2021exponential,aizenman2020exponential}.
Even in higher dimensions where that Imry-Ma is circumvented, there generally exists a critical threshold $\ve_c$ in disorder strength, above which any finite temperature ordering is destroyed \cite{aharony1978tricritical,aharony1978spin,bray1985scaling} (as illustrated in Fig. \ref{fig:schematic-phase}).
It may be worth mentioning that, for the RFXY model in dimension $d\leq 4$, the possibility of quasi-LRO has not been rigorously excluded {(partial results can be found in Ref. \cite{dario2024quantitative}).
Even within the physics community, there is no consensus on the behavior, where non-rigorous arguments and numerics both for \cite{feldman2001quasi,gingras1996topological} and against \cite{aharony1980infinite,tissier2006unified} the possibility of quasi-LRO have been proposed.}

The Imry-Ma argument applies when the random field is uniformly distributed over a compact spin space.  One may wonder, then, what is the fate of the system if the support of the disorder is restricted to a submanifold? 
In these models, even in low dimensions -- when Imry-Ma prohibits ordering along the submanifold -- the disorder actually generates ordering in directions perpendicular to the submanifold.

For example, consider the variant of the RFXY model in which  the random field disorder only  acts in the $X$ direction.  
This corresponds in Eq. \eqref{eq:H-RFXY} to assuming the $\alpha_x$ are  i.i.d. Bernoulli(1/2) with values in $=0,\pi$.
In this case, spins order in the $Y$ direction at finite temperatures, \textit{provided that the disorder strength is sufficiently weak}, $\ve \ll 1$, see \cite{dotsenko19812d,dotsenko1982spin,crawford2024random,crawford2013random,crawford2014random}. 
What is relevant to the present discussion is that in these models, it is still the case that at larger disorder strengths LRO is destroyed in all dimensions, \cite{aharony1978spin}, as the local random field determines the typical spin configurations.

In a somewhat different direction, another class of examples of interest are the XY models with random \textit{bond} disorder.  The most famous of these examples allows symmetric bond disorder on all edges and leads to spin glass behavior, but one might also consider breaking the symmetry of the disorder by allowing a bias in its distribution.  A particularly convenient special case of this is the selection of disorder on the Nishimori line \cite{Nishimori81,GHLDB85} (see e.g., Ref. \cite{ozeki1993phase} for a phase diagram of the 2D case).  The bias in these models is strongly correlated with $\beta,$ so that the larger $\beta$ is, the more ferromagnetic the edge couplings typically are.  In this sense, when $\beta$ is large in these models, the disorder effects are necessarily \textit{weak}.  In \cite{garban2022continuous}, it is demonstrated that ferromagnetic ordering persists on the Nishimori line for $\beta$ sufficiently large.

In light of the previous remarks, and leaving aside the subtle and complicated example of spin glasses, a natural question is whether there are spin systems which accommodate 
disorder that is {extensive} and {uncorrelated} over vertices of the underlying graph, but which nevertheless exhibit symmetry breaking at fixed values of $\beta$ \textit{over the entire range of disorder strengths $h$}.  We refer to this phenomenon as \textit{infinite stability} of LRO with respect to disorder strength.
In this article, we shall introduce a family of physically motivated examples  exhibiting this behavior.   We argue these systems exhibit LRO below a finite transition temperature \textit{independent} of the disorder strength both the weak \textit{and} strong disorder regimes (as illustrated in Fig. \ref{fig:schematic-phase}).

The spin configurations for our examples will be elements of $(\dS^1\times \dS^1)^{V_{\infty}}$, which one may also view as a system of $XY$ spins on the double layer graph $G_{\infty}\times \{+1, -1\}$.  Let $\alpha:V_\infty \to \{0,\pi\}$ be i.i.d. Bernoulli distribution with probability $p,1-p$ equal to $=0,\pi$ respectively.  Let $\dE_p$ denote the corresponding expectation.
Parametrizing spins by the angular variables $(\theta_x^+, \theta_x^-)\in (-\pi, \pi]^2$ for convenience,   denote the phase difference $\phi_x= \theta_x^+-\theta_x^-.$ For a fixed realization of disorder $(\alpha_x)$ consider the finite-volume  Hamiltonian
\begin{equation}
    \label{eq:H-alpha}
    \sH_{G,\alpha,\ve}(\theta^\pm) = \sum_{\ell=\pm} \sH^\mathrm{XY}_G (\theta^\ell) -\ve \sum_{x\in V} \cos (\phi_x -\alpha_x)
\end{equation}
The first summand indicates that each layer of spins possessing a conventional ferromagnetic $XY$ interaction. In addition, spins interact interlayer 
via a random phase shift.  Another way of saying this is that the interlayer interaction is a disordered ferromagnetic/antiferromagnetic one with signs chosen i.i.d. Bernoulli(p).  Let $\langle \cdot \rangle_{\Lambda_N}=\langle \cdot \rangle_{\Lambda_N, \alpha, h, \beta}$ denote the corresponding a.s. Gibbs state.

The question of interest is then whether the finite-volume disorder-averaged Gibbs measure $\dE_p \bra\cdot \ket_{\Lambda_N}$ induces a thermodynamic limit which exhibits LRO  in the strong disorder regime $\ve \gg 1$ (especially) below a transition temperature independent of the disorder strength.  The weak disorder regime is also of interest, but as we will argue, it should order according to a similar mechanism as in  the RFO(2) model.
By comparing with a more tractable approximation introduced below, we claim this model is infinitely stable in dimension $d\ge 3$ provided that the site percolation critical threshold satisfies $p_c^\mathrm{site}(\dZ^d) < \min(p,1-p)$.  The precise statement is collected in our main result in Theorem \eqref{thm:infinite-stability}.
This article complements a shorter version \cite{yuan2024concise}, in which many technical details and discussions were omitted.

\begin{figure}[ht]
\centering
\includegraphics[width=.5\columnwidth]{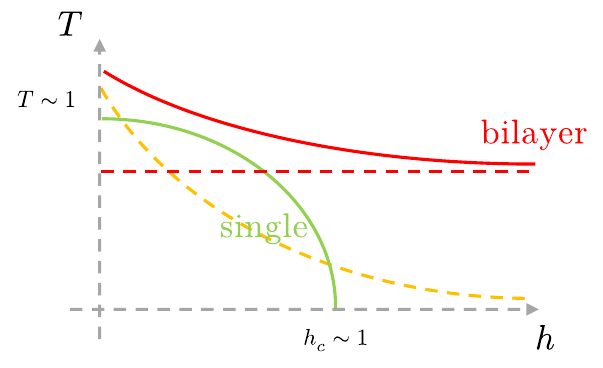}
\caption{Schematic Phase Diagram. The green line sketches the transition temperature $T_c(\ve)$ of \textit{single layer} models with disorder strength $\ve$, e.g., the random-field Ising  and XY models \cite{aharony1978tricritical,aharony1978spin,bray1985scaling}, the RFO(2) model \cite{dotsenko19812d,dotsenko1982spin,crawford2024random,crawford2013random,crawford2014random}.
If $\ve \to 0$, then $T_c(\ve)\sim 1$ is on the scale of the XY coupling strength $\kappa=1$ (in 2D, there is debate on whether LRO or QLRO occurs below $T_c$ \cite{crawford2024random}).
For these models, there exists a critical threshold $\ve_c \sim 1$, beyond which there is no finite temperature ordering.
In comparison, this manuscript introduces a \textit{bilayer} model with $T_c(\ve)$ sketched by the red line.
If $\ve \to \infty$, then $T_c(\ve)$ is asymptotically nonzero (dashed red line).
For completeness, the dashed orange line denotes an intermediate scenario where there is finite temperature ordering at large disorder, but $T_c(\ve)$ is asymptotically zero.
}
\label{fig:schematic-phase}
\end{figure}
\subsection{Main Results}
As given, the model defined by Eq. \eqref{eq:H-alpha} is difficult to fully control in either the weak or strong disorder regimes and we do not provide such an analysis here.

Nevertheless within the strong disorder regime, we will argue  heuristically in the next section that the model 
is well approximated by the \textit{effective} Hamiltonian $\sH^\mathrm{s}$
\begin{equation}
    \label{eq:H-strong}
    \sH_{G,\alpha,\ve}^\mathrm{s}(\theta) = 
    -2 \sum_{e\in E:\nabla_{\be} \alpha =0} \cos \nabla_{\be} \theta  -\frac{1}{2\ve}\sum_{x\in V} (\nabla_x^\alpha \cdot \cos \nabla\theta)^2,
\end{equation}
in the strong disorder regime $h\gg 1\gg T$ (see Prop. \eqref{prop:motivation}) below.
Here the ``divergence" term involving $\nabla_x^\alpha \cdot$ is defined by
\begin{equation}
    x\mapsto \nabla_x^\alpha \cdot \cos \nabla \theta  \equiv \sum_{e\sim x: \nabla_{\be} \alpha \ne 0} \cos \left(\nabla_{\be} \theta\right), 
\end{equation}
where the summation is over all edges $e \in E$ incident to $x$.  The main advantage of the Hamiltonian $\sH^{\mathrm{s}}$ over the original model is that it satisfies the conditions to apply the Ginibre inequalities, for any disorder realization.
Under the assumption that the phase difference $\phi_x$ should be close to the random phase shift $\alpha_x$,  the two models are connected by the relation $e^{i\phi} \approx e^{i\alpha} + i\tau,$ where 
\begin{equation}
\label{eq:order-parameter}
\tau = \tau^\alpha (\theta) = \frac{1}{\ve}\nabla^\alpha \cdot \cos\nabla\theta
\end{equation}
We then conjecture that 
\begin{align}
    \label{eq:heuristic}
    \bra e^{i (\phi_x-\phi_y)} \ket_{G,\alpha,\ve,\beta} &\approx  e^{i (\alpha_x-\alpha_y)} +\left\bra \tau_x \tau_y \right\ket_{G,\alpha,\ve,\beta}^\mathrm{s}\\
    \dE_p \bra e^{i (\phi_x-\phi_y)} \ket_{G,\alpha,\ve,\beta} &\approx   \dE_p \left\bra \tau_x \tau_y \right\ket_{G,\alpha,\ve,\beta}^\mathrm{s}.
\end{align}
On the right-hand-side (RHS) is the Gibbs state corresponding to $\sH^\mathrm{s}$ on the finite subgraph $G=(V,E)$.

Let $p_c = p_c^\mathrm{site}(\dZ^d)$ denote the critical point for site percolation on $\Z^d$.
It is known, see \cite{CampaninoRusso85}, that $p_c<1/2$ for $d\geq 3$ and hence, for $p\in (p_c, 1-p_c)$, connected components of  $\{x: \alpha_x=0\}$ and $\{x: \alpha_x=\pi\}$ both percolate.

\begin{theorem}[Infinite Stability]
    \label{thm:infinite-stability}
    Consider the effective Hamiltonian $\sH^\mathrm{s}$ in Eq. \eqref{eq:H-strong} on $\dZ^d,d\ge 3$.  Suppose $\min(p,1-p) > p_c^\mathrm{site}(\dZ^d).$  Then, there are $\ve_0, \beta_0=\beta_0(d, p)>0$ so that for any $\ve \ge \ve_0 >0$ and  $\beta \ge \beta_0$, there is a $c>0$ depending on $h, \beta, p$ so that
    \begin{equation}
        \dE_p \left[  \bra \tau_x \tau_y\ket^\mathrm{s}_{\dZ^d,\alpha,\ve,\beta}\right]\geq c>0.
    \end{equation}
\end{theorem}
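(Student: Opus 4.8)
The plan is to bound $\dE_p\bigl[\bra\tau_x\tau_y\ket^\mathrm{s}_{\dZ^d,\alpha,\ve,\beta}\bigr]$ from below by combining three ingredients: (i) the Ginibre inequalities, which by construction apply to $\sH^\mathrm{s}$ for every disorder realization, so that every correlation $\bra\cos(\sum_i n_i\theta_i)\ket^\mathrm{s}$ with integer $n_i$ is nonnegative and monotone in the ferromagnetic couplings; (ii) the fact that, when $\min(p,1-p)>p_c$, both $\{x:\alpha_x=0\}$ and $\{x:\alpha_x=\pi\}$ percolate with a.s.\ unique infinite clusters (Burton--Keane), denoted $C^0_\infty,C^\pi_\infty$; and (iii) the analytic input that, for $d\ge3$ and $\beta$ large, the ferromagnetic $XY$ model on a supercritical site-percolation cluster of $\dZ^d$ with inverse temperature $2\beta$ has long-range order, $\inf_{u,v\in C_\infty}\bra\cos(\theta_u-\theta_v)\ket_{C_\infty}\ge c_0>0$. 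The heuristic behind this is that the two interpenetrating percolating clusters each order rigidly at low temperature, and $\tau_x$ is slaved to those two orientations, so $\tau_x\tau_y$ becomes (essentially) a square in their relative angle; the threshold $\beta_0=\beta_0(d,p)$ enters only through (iii), and the resulting constant degrades only polynomially in $\ve$.

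\textbf{Reduction via Ginibre.} Fix $x,y$ with $\norm{x-y}$ large and expand, with the summation ranges $z\sim x,\ \alpha_z\ne\alpha_x$ and $w\sim y,\ \alpha_w\ne\alpha_y$,
\begin{align*}
\tau_x\tau_y
&=\frac1{\ve^2}\sum_{\substack{z\sim x,\ \alpha_z\ne\alpha_x\\ w\sim y,\ \alpha_w\ne\alpha_y}}\cos(\theta_x-\theta_z)\cos(\theta_y-\theta_w)\\
&=\frac1{2\ve^2}\sum_{z,w}\Bigl[\cos\bigl((\theta_x-\theta_y)-(\theta_z-\theta_w)\bigr)+\cos\bigl((\theta_x+\theta_y)-(\theta_z+\theta_w)\bigr)\Bigr].
\end{align*}
Since $\sH^\mathrm{s}$ is of Ginibre form, each summand has nonnegative $\bra\cdot\ket^\mathrm{s}_{\dZ^d}$-expectation, so it suffices to retain one favourable term. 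Moreover, Ginibre monotonicity in the couplings lets me drop the term $-\tfrac1{2\ve}\sum_x(\nabla_x^\alpha\cdot\cos\nabla\theta)^2$ of $\sH^\mathrm{s}$ -- expanded, it only contributes additional ferromagnetic couplings -- and pass to the Gibbs state of $H_0:=-2\sum_{e:\nabla_{\be}\alpha=0}\cos\nabla_{\be}\theta$. With free boundary conditions the infinite-volume state of $H_0$ (which exists and is unique by Ginibre) factorizes over the connected components of the monochromatic edge set, in particular into independent ferromagnetic $XY$ models on $C^0_\infty$, on $C^\pi_\infty$, and on the finite clusters; each factor is invariant under $\theta\mapsto-\theta$, which together with independence kills the sine cross-terms produced by the product-to-sum identity. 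Hence, on the event that $x,y\in C^0_\infty$ and that some neighbours $z_0\sim x$, $w_0\sim y$ satisfy $z_0,w_0\in C^\pi_\infty$ (then $\nabla_{\{x,z_0\}}\alpha\ne0$, $\nabla_{\{y,w_0\}}\alpha\ne0$, so the corresponding term genuinely appears above),
\[
\bra\tau_x\tau_y\ket^\mathrm{s}_{\dZ^d}\ \ge\ \frac1{2\ve^2}\Bigl\bra\cos\bigl((\theta_x-\theta_y)-(\theta_{z_0}-\theta_{w_0})\bigr)\Bigr\ket^{H_0}_{\dZ^d}=\frac1{2\ve^2}\bra\cos(\theta_x-\theta_y)\ket_{C^0_\infty}\bra\cos(\theta_{z_0}-\theta_{w_0})\ket_{C^\pi_\infty}\ \ge\ \frac{c_0^2}{2\ve^2},
\]
using ingredient (iii) in the last step.

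\textbf{Percolation averaging.} Let $\mathcal E_{xy}$ be the event just used. Using uniqueness of the infinite clusters together with FKG and a finite-energy (insertion-tolerance) argument, one gets $\dP_p(\mathcal E_{xy})\ge c_2>0$ uniformly in $x,y$ -- the uniformity as $\norm{x-y}\to\infty$ being exactly what uniqueness of the infinite cluster provides (the finitely many small-separation cases are dealt with directly). Since $\bra\tau_x\tau_y\ket^\mathrm{s}_{\dZ^d}\ge0$ unconditionally by Ginibre,
\[
\dE_p\bigl[\bra\tau_x\tau_y\ket^\mathrm{s}_{\dZ^d}\bigr]\ \ge\ \dE_p\Bigl[\mathbf 1_{\mathcal E_{xy}}\,\bra\tau_x\tau_y\ket^\mathrm{s}_{\dZ^d}\Bigr]\ \ge\ \frac{c_0^2}{2\ve^2}\,\dP_p(\mathcal E_{xy})\ \ge\ \frac{c_0^2\,c_2}{2\ve^2}\ =:\ c>0,
\]
uniformly in $x,y$, which is the claim. (This bound is in fact valid for all $\ve>0$; the hypothesis $\ve\ge\ve_0$ is only needed for $\sH^\mathrm{s}$ to faithfully approximate \eqref{eq:H-alpha}, cf.\ Prop.~\eqref{prop:motivation}.)

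\textbf{Main obstacle.} The crux is ingredient (iii): LRO for the ferromagnetic $XY$ model on the supercritical site-percolation cluster of $\dZ^d$, $d\ge3$, at large $\beta$. Reflection positivity is unavailable on a random graph, so I would proceed by static block renormalization: for $p>p_c$ there is a scale $L=L(d,p)$ at which ``good'' blocks -- those carrying a dense crossing cluster well connected to the crossing clusters of neighbouring good blocks -- dominate a highly supercritical coarse site-percolation process, and the union of crossing clusters supports an effective $XY$ interaction on a renormalized lattice with large coupling and vertex density close to $1$. One then obtains LRO of the renormalized phases by a low-temperature argument valid in $d\ge3$ (small fluctuations controlled by a spin-wave/infrared-type bound, topological defects by the block geometry), with $\beta_0\to\infty$ as $p\downarrow p_c$; alternatively one may try to quote existing results on ordering of rotator models in diluted media. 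I expect the combinatorial bookkeeping of this renormalization, rather than its conceptual content, to be the principal difficulty.
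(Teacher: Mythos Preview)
Your approach is correct and is essentially the alternative route the paper itself acknowledges in the Remark following Theorem~\ref{thm:lower-bound}, though you execute it more cleanly than that Remark suggests: rather than invoking the $1/\ve$ term in a ``second pass'' to generate an effective Ising coupling between the two ordered clusters, you observe that after dropping the $1/\ve$ term via Ginibre the retained summand of $\bra\tau_x\tau_y\ket$ already factorizes into a product of two-point functions on $C^+_\infty$ and $C^-_\infty$, so no second pass is needed. The paper instead reruns the Dario--Garban renormalization in full: it identifies ``optimal'' boxes (good for both colours and containing an interfacial edge), extracts and regularizes disjoint occupied and vacant paths between distinguished interfacial edges of neighbouring optimal boxes, and thereby embeds via Ginibre a clean bilayer model $\sH^{\mathrm{c}}$ on $\dZ^d|_{n}^{r}$ (with $r$ marking non-optimal boxes) inside $\sH^{\mathrm{s}}$; a stochastic-domination plus ``well'' argument then removes $r$ and reduces everything to deterministic LRO on $\dZ^d|_n$. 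Two points in your sketch deserve tightening. First, you invoke a \emph{quenched} bound $\bra\cos(\theta_u-\theta_v)\ket_{C_\infty}\ge c_0$ for all $u,v\in C_\infty$, whereas the Dario--Garban statement (the paper's Theorem~\ref{thm:dario-main}) is annealed; the quenched form does follow from their construction but cannot simply be cited, and bounding the product under $\dE_p$ from the annealed versions alone is genuinely awkward since $\bra\cos(\theta_x-\theta_y)\ket$ on the $0$-cluster is increasing in the occupied set while $\bra\cos(\theta_{z_0}-\theta_{w_0})\ket$ on the $\pi$-cluster is decreasing, so FKG goes the wrong way. Second, $\dP_p(\mathcal E_{xy})\ge c_2$ is not a direct FKG/finite-energy consequence because $\mathcal E_{xy}$ is neither increasing nor decreasing in $\alpha$; one should localize to disjoint boxes around $x,y$ as the paper does in its proof of adjacency of $C^\pm_\infty$. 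The paper's longer construction sidesteps both issues by working with box-local events throughout and landing on a deterministic comparison model; it also buys the explicit link to $\sH^{\mathrm{c}}$ that is then exploited in the two-dimensional analysis of Section~\ref{sec:H-clean}.
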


In $d=2$ dimensions, the situation is bit more complicated. Since $\min(p,1-p)$ obtains its maximum value when $p=1/2$ and $p_c^\mathrm{site}(\dZ^2) \ge 1/2,$ the condition of the previous theorem can never be satisfied.
In fact, any infinite planar graph satisfies $p_c^\mathrm{site}(G_\infty) \ge 1/2$ \cite{grimmett2022hyperbolic}, and thus it's likely that the effective model, with Hamiltonian $\sH^\mathrm{s}$ can, at best, exhibit LRO with a transition temperature that decays asymptotically to zero (as illustrated by the orange dashed line in Fig. \ref{fig:schematic-phase}).  

The situation looks better if the underlying graph is non-planar.
For example, if we consider the non-planar graph $\tilde{\dZ}^2$ with vertices $\dZ^2$ and edges consisting of nearest and next-nearest 
neighbors, i.e., $e=xy$ such that $\norm{x-y}_\infty=1$, then $p_c^\mathrm{site}(\tilde{\dZ}^2)< 1/2$ \cite{malarz2005square} (This also follows by comparing with site percolation on the triangular lattice obtained by projecting in the $(1,1)$ direction \cite{CampaninoRusso85}.). 
In particular, we can show that the system is at least \textit{infinitely quasi-stable}, i.e.,

\begin{theorem}[Infinite Quasi-Stability]
    \label{thm:infinite-quasi-stability}
   Consider the effective Hamiltonian $\sH^\mathrm{s}$ in Eq. \eqref{eq:H-strong} on $\tilde{\dZ}^2$.  Suppose $\min(p,1-p) > p_c^\mathrm{site}(\tilde{\dZ}^2).$  Then, there are $\ve_0, \beta_0=\beta_0(d, p)>0$ so that for any $\ve \ge \ve_0 >0$ and  $\beta \ge \beta_0$, there is are constants $C, c>0$ depending on $h, \beta, p$ so that
    \begin{equation}
        \dE_p \left[  \bra \tau_x \tau_y\ket^\mathrm{s}_{\dZ^d,\alpha,\ve,\beta}\right]\geq C\|x-y\|^{-c}_{\infty}>0.
    \end{equation}
\end{theorem}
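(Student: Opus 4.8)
The plan is to reduce Theorem~\ref{thm:infinite-quasi-stability} to the analysis already carried out (or to be carried out) for Theorem~\ref{thm:infinite-stability}, exploiting the fact that $\tilde{\dZ}^2$ contains a subgraph on which percolation and spin-wave arguments behave exactly as in the $d\ge 3$ nearest-neighbor case, but where the Mermin--Wagner constraint forces power-law rather than true long-range order. First I would record the percolation input: since $\min(p,1-p)>p_c^{\mathrm{site}}(\tilde{\dZ}^2)$ and $p_c^{\mathrm{site}}(\tilde{\dZ}^2)<1/2$, both $\{x:\alpha_x=0\}$ and $\{x:\alpha_x=\pi\}$ percolate a.s., so that $\dE_p$-a.s. there is a unique infinite cluster $\mathcal C$ of edges $e$ with $\nabla_{\be}\alpha=0$, with the standard quantitative control on chemical distances and on the probability that two given vertices lie in $\mathcal C$ and are joined by a not-too-long open path. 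On this cluster the first term of $\sH^{\mathrm s}$ is a genuine ferromagnetic $XY$ interaction with coupling $2$, and the divergence term $-\tfrac{1}{2\ve}\sum_x(\nabla_x^\alpha\cdot\cos\nabla\theta)^2$ is a manifestly Ginibre-compatible perturbation (sum of squares of sums of cosines of phase gradients), so the Ginibre inequalities apply for every disorder realization; this is exactly the structural feature emphasized after Eq.~\eqref{eq:H-strong}.

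Next I would set up the low-temperature expansion of $\langle\tau_x\tau_y\rangle^{\mathrm s}$. Writing $\tau_x=\tfrac{1}{\ve}\sum_{e\sim x:\nabla_{\be}\alpha\ne 0}\cos\nabla_{\be}\theta$, at large $\beta$ the dominant contribution comes from configurations where each layer's spins on $\mathcal C$ are nearly aligned, so that $\cos\nabla_{\be}\theta\approx 1$ on cluster edges and the fluctuations are Gaussian spin waves governed by the quadratic form $\beta\sum_{e\in E}(\nabla_{\be}\theta)^2$ on $\tilde{\dZ}^2$. The key point distinguishing $d=2$: the lattice Green's function of this quadratic form on $\tilde{\dZ}^2$ grows like $\tfrac{1}{2\pi\,\mathrm{const}(\beta)}\log\|x-y\|_\infty$, so two-point functions of the form $\langle e^{i(\theta_x-\theta_y)}\rangle$ decay like $\|x-y\|_\infty^{-c(\beta)}$ with $c(\beta)\to 0$ as $\beta\to\infty$, rather than staying bounded below. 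I would make this rigorous by the same route used for true LRO in the $d\ge3$ theorem but with the infrared bound / reflection positivity replaced by (i) a one-sided comparison giving a lower bound of the correct power-law order, e.g.\ via a Mermin--Wagner-type or Gaussian-domination lower bound on $\mathbb Z^2$, together with (ii) a cluster-expansion or chessboard argument restricted to the percolating set, handled by conditioning on $\alpha$ and then averaging with $\dE_p$. The bilinear structure of $\tau_x\tau_y$ means the leading term is a sum over pairs of incident non-cluster edges of $\langle \cos\nabla_{\be}\theta\,\cos\nabla_{\be'}\theta\rangle^{\mathrm s}$, and by Ginibre monotonicity and positivity this is bounded below by a single well-chosen pair of terms, which at low temperature is $\approx \langle\cos\nabla_{\be}\theta\rangle\langle\cos\nabla_{\be'}\theta\rangle$ up to connected corrections decaying in $\|x-y\|_\infty$ no faster than the same power law.

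The remaining work is to combine these two ingredients into the stated bound $\dE_p[\langle\tau_x\tau_y\rangle^{\mathrm s}]\ge C\|x-y\|_\infty^{-c}$: condition on the disorder, use Ginibre to restrict attention to the infinite $0$-cluster (and symmetrically the $\pi$-cluster, which contributes the boundary non-cluster edges that define $\tau$), deduce $\langle\tau_x\tau_y\rangle^{\mathrm s}\ge c'(\beta,\ve)\|x-y\|_\infty^{-c(\beta)}$ on the event that $x,y$ are both in the cluster and connected at scale comparable to $\|x-y\|_\infty$, and then take $\dE_p$, using that this event has probability bounded below uniformly in $\|x-y\|_\infty$ (supercritical percolation), while the complementary event contributes nonnegatively (Ginibre again, since $\tau_x\tau_y$-correlations are nonnegative). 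The main obstacle I anticipate is (ii): controlling the spin-wave approximation uniformly enough to get a \emph{matching lower bound} of power-law type on the genuinely two-dimensional non-planar lattice, since unlike the $d\ge3$ case there is no LRO to lean on and one must instead quantify that the correlations decay \emph{no faster} than a power — this requires either a Gaussian lower bound robust to the non-quadratic cosine interaction (e.g.\ via a judicious application of the Ginibre inequalities comparing to a Gaussian free field, as in the proofs of BKT lower bounds) or a direct low-temperature contour/vortex argument on $\tilde{\dZ}^2$, and making this interact cleanly with the percolation conditioning is the delicate step.
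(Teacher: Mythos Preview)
Your proposal correctly identifies the essential ingredients---the two percolating clusters for $\alpha=0,\pi$, the Ginibre structure of $\sH^{\mathrm s}$, and the fact that two-dimensionality forces power-law rather than uniform lower bounds---but your execution diverges from the paper's in a way that leaves a real gap at exactly the point you flag as the ``main obstacle.''

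The paper does \emph{not} attempt a direct spin-wave or BKT analysis on the random percolation cluster. Instead, the proof of Theorem~\ref{thm:infinite-quasi-stability} is literally the same as that of Theorem~\ref{thm:infinite-stability}: one coarse-grains $\tilde{\dZ}^2$ into boxes of side $L$, declares a box \emph{optimal} if it is good for both the occupied and vacant percolation and the two crossing clusters are adjacent (Lemma~\ref{thm:optimal}), picks a distinguished interfacial edge in each optimal box, and builds disjoint paths of length $\le n_L$ through the two clusters connecting neighboring distinguished vertices. A sequence of Ginibre comparisons (Theorem~\ref{thm:lower-bound}) then bounds $\langle\cos\nabla_{\be_a}\theta\,\cos\nabla_{\be_b}\theta\rangle^{\mathrm s}$ from below by the two-point function of the \emph{clean} bilayer model $\sH^{\mathrm c}$ on the deterministic stretched lattice $\dZ^2|_{n_L}$ with a random environment $r_\alpha$, which is removed via stochastic domination (Lemmas~\ref{lem:stoch-dom} and~\ref{lem:well}). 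The only place where $d=2$ versus $d\ge 3$ enters is the very last step, Proposition~\ref{prop:H-clean}: for the clean XY model on $\dZ^2|_n$ one has an algebraic lower bound (quoting Dario--Garban) rather than a uniform one.

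Your proposed route---Gaussian-domination or Fr\"ohlich--Spencer-type vortex arguments \emph{directly} on the random infinite cluster---is precisely the hard problem that the coarse-graining is designed to sidestep. The Fr\"ohlich--Spencer machinery leans heavily on the regularity and translation invariance of $\dZ^2$; porting it to a supercritical percolation cluster is not a minor adaptation, and in fact the way Dario--Garban handle it is by the same renormalization-to-$\dZ^d|_n$ scheme the paper borrows. So your sketch, if pushed through, would have to rediscover the coarse-graining comparison anyway. The cleaner fix is to invoke the proof of Theorem~\ref{thm:infinite-stability} wholesale and simply note that Proposition~\ref{prop:H-clean} already contains the $d=2$ power-law conclusion.
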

\subsection{Background}
Physically, the XY model is known to be the classical description of superconductors (SC) so that its phase transition is interpreted as the normal metal to SC transition as one lowers the temperature $T$ \cite{benfatto2004low,roddick1995effect,emery1995superconductivity,carlson1999classical}.
Recently, there has been an increasing interest in the physics community on twisted bilayer systems, both experimentally \cite{cao2018correlated,cao2018unconventional,yankowitz2019tuning,zhao2023time} and theoretically \cite{can2021high,how2023absence,liu2023charge,yuan2023exactly,yuan2023inhomogeneity,yuan2024absence,song2022phase} (see Ref. \cite{pixley2025twisted} for a recent review).
In these systems, each layer is a 2D superconductor and thus described by an independent 2D XY models $\sH^\mathrm{XY}(\theta^\ell)$ for layers $\ell =\pm$. 
The dominant inter-layer interaction is generally described by the Josephson coupling that depends only on the phase difference $\phi \equiv \theta^+ -\theta^-$ and schematically written as an additional term $-J \cos \phi$ in the Hamiltonian, i.e., the finite subgraph Hamiltonians are written as
\begin{equation}
    \label{eq:H-J}
    \sH_{G,J} (\theta^\pm) = \sum_{\ell=\pm}\sH_{G}^\mathrm{XY}(\theta^\ell) -\sum_{x\in V} J_x \cos \phi_x, \quad J:V_\infty \to \dR.
\end{equation}
The inter-layer coupling constants $J:V_{\infty} \to \dR$ are treated as quenched disorder which are i.i.d. among vertices with mean coupling strength $\bar{J} \in \dR$ and standard deviation $\ve \ge 0$.
By twisting the relative orientation between the two layers, physicists can vary the mean Josephson coupling $\bar{J} \in \dR$. 
The disorder fluctuations $\delta J_x \equiv J_x -\bar{J}$ can then arise from twist angle disorder, i.e., the inability to control the relative twist at a microscopic level, or other inhomogeneities \cite{yuan2023inhomogeneity,zhao2023time}.
Since the Hamiltonian possesses a global $\dZ_2$ symmetry via $\phi \mapsto -\phi$, physical intuition suggests that the phase difference $\phi$ can exhibit a finite temperature phase transition provided that the GS is degenerate.

At most twist angles, the average coupling $\bar{J}$ is much larger than the disorder strength $\ve$, i.e., $|\bar{J}| \gg \ve$.
In this scenario, it's natural to ignore disorder fluctuations and treat the system as if it exhibits a uniform inter-layer interaction with coupling strength $\bar{J}$.
In this case, it's straightforward to check that the GS is uniquely described by the solution where the phase difference $\phi_x \equiv 0$  (or $\pi$) uniformly in vertices $x\in V_\infty$ provided that $\bar{J} >0$ (or $\bar{J} <0$).
Since the GS with respect to $\phi$ is unique, it's expected that there is no nontrivial phase transition at finite temperatures.
However, at certain critical twist angles, the Josephson coupling vanishes on average $\bar{J} =0$ due to underlying symmetries of the SC \cite{yuan2023inhomogeneity}.
Hence, near the critical twist angle so that $|\bar{J}| \ll \ve$, the disorder fluctuations $\delta J$ cannot be ignored and thus must be treated accordingly.

In comparison with the general Hamiltonian in Eq. \eqref{eq:H-J}, we shall be content with studying the restricted scenario where $J_x = \pm \ve$ is bimodal.
In this case, the restricted model in Eq. \eqref{eq:H-alpha} corresponds to the scenario with average coupling $\bar{J} =(2p-1)\ve$ and variance $4\ve^2 p(1-p)$.
In particular, if $\alpha$ is equally likely to be $0$ or $\pi$ so that $p=1/2$, then quenched disorder $\bar{J}=0$ vanishes on average but has disorder strength $\ve$.
Despite our physical motivation of twisted bilayers, the model is straightforwardly generalized to higher dimensions (in fact, on any infinite graph $G_\infty$), which may model random inter-component interaction within multi-component SCs \cite{carlstrom2011length,bojesen2014phase,grinenko2021state}.


\section{Heuristics for weak and strong disorder}
\subsection{Strong disorder.}
\subsubsection{Derivation of \eqref{eq:H-strong} from \eqref{eq:H-alpha}}
\label{sec:strong-disorder}

\begin{figure}[ht]
    \centering
    \subfloat[\label{fig:average-a}]{%
        \centering
        \includegraphics[width=0.3\columnwidth]{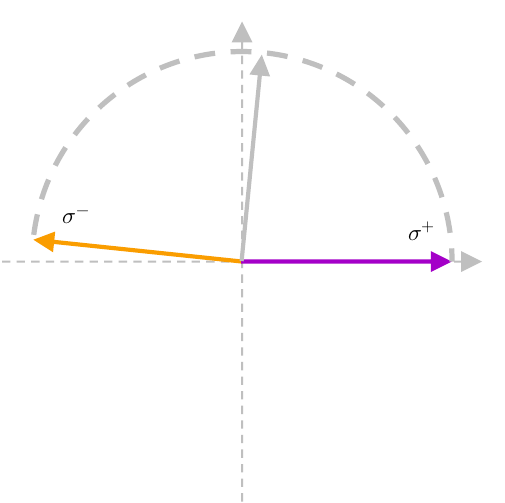}
    }
    \subfloat[\label{fig:average-b}]{%
        \centering
        \includegraphics[width=0.3\columnwidth]{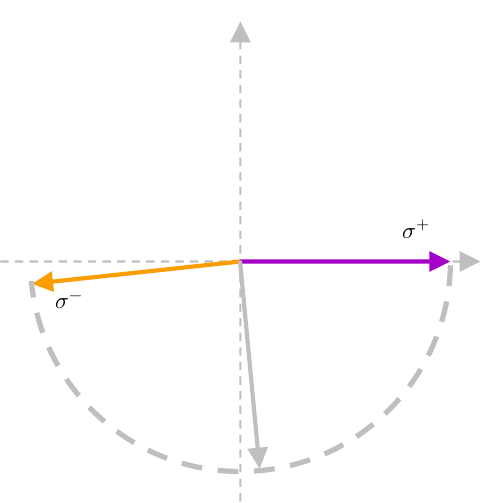}
    }
    \subfloat[\label{fig:measure-a}]{%
        \centering
        \includegraphics[width=0.3\columnwidth]{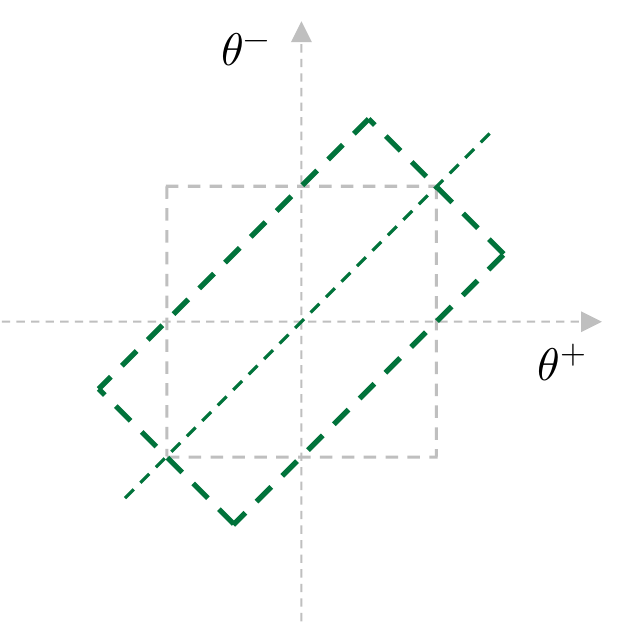}
    }
    \\
    \subfloat[\label{fig:average-c}]{%
        \centering
        \includegraphics[width=0.3\columnwidth]{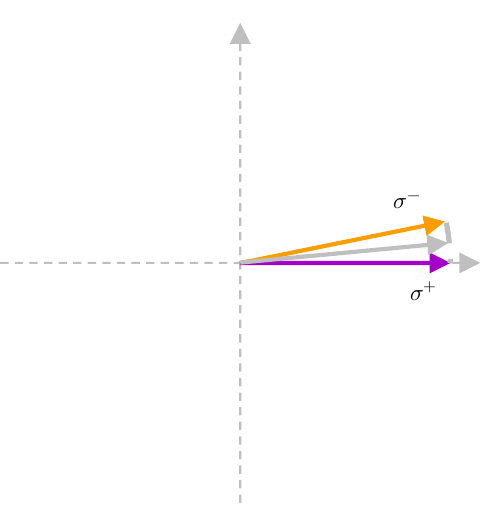}
    }
    \subfloat[\label{fig:average-d}]{%
        \centering
        \includegraphics[width=0.3\columnwidth]{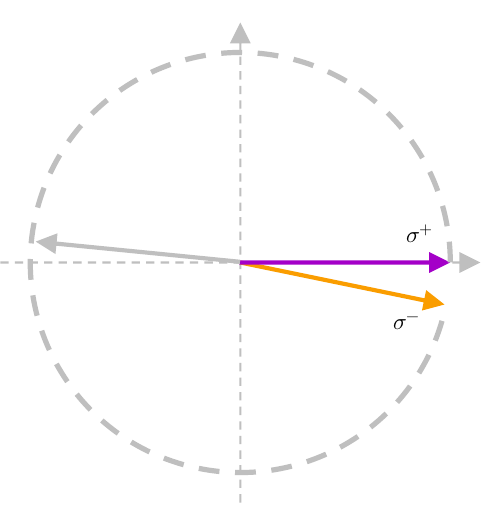}
    }
    \subfloat[\label{fig:measure-b}]{%
        \centering
        \includegraphics[width=0.3\columnwidth]{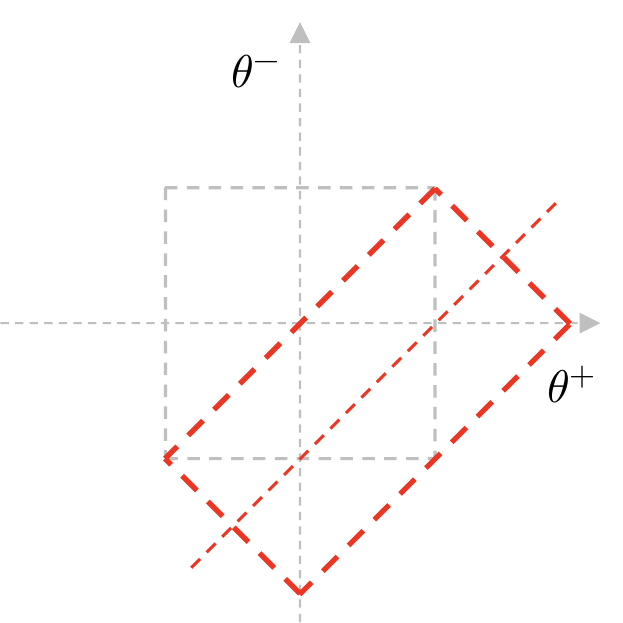}
    }
    \caption{The strong disorder change of variable. In each subfigure, the purple, orange arrows denote the $\sigma^\pm =e^{i\theta^\pm}$ spins, respectively. 
    Subplots (a), (b) denote the change of variables described by Eq. \eqref{eq:COV}, $\alpha_x=0$, where the grey arrow indicates $\zeta$ pointing in the middle of the smaller region. Note the discontinuity near $\sigma^+ =-\sigma^-$.
    Subplots (d), (e) denote that described by Eq. \eqref{eq:COV}, $\alpha_x=\pi$, where the grey arrow indicates $\zeta$ pointing in the middle of the region characterized by the counter-clockwise orientation from $\sigma^+$ to $\sigma^-$. Note the discontinuity near $\sigma^+ = \sigma^-$.
    The grey box in subplots (c), (f) denote the conventional $(-\pi,\pi)^2$ box. 
    If $\alpha_x=0$ and $\theta^\pm$ are restricted within the green box in subplot (c), then $\zeta = e^{i\theta},w=e^{i\phi}$ where $\theta = (\theta^++\theta^-)/2,\phi = \theta^+ -\theta^-$ and $(\theta,\phi)\in (-\pi,\pi)^2$.
    The green dashed line denotes $\phi =0$.
    Similarly, if $\alpha_x=\pi$ and $\theta^\pm$ are restricted within the red box in subplot (f), then $\zeta = e^{i\theta},w=e^{i(\phi-\pi)}$ where $\theta = (\theta^++\theta^-)/2,\phi = \theta^+ -\theta^-$ and $(\theta,\phi-\pi)\in (-\pi,\pi)^2$.
    The red dashed line denotes $\phi = \pi$.
    }
    \label{fig:change-of-variables}
\end{figure}
Starting from the Hamiltonian \eqref{eq:H-alpha},  in the strong disorder $\ve\gg 1$ regime, it's instructive to make a change of variables so that the phase difference $\phi$ is an independent variable.

Let us view $\sigma_x^{\pm}=e^{i\theta_x^\pm}$ as complex numbers with modulus $1$. We begin with the following elementary observation.
\begin{prop}
    \label{prop:motivation1}
  Given $\alpha_x\in \{0, \pi\}$  consider the map
    \begin{equation}
    \label{eq:COV}
        \zeta_x = \frac{e^{-i\alpha_x/2}\sigma^+_x +e^{i\alpha_x/2}\sigma^-_x}{|e^{-i\alpha_x/2}\sigma^+_x +e^{i\alpha_x/2}\sigma^-_x|},\quad
        w_x     = \sigma^+_x \bar{\sigma}^-_x e^{-i\alpha_x}.
    \end{equation}
    This map is smooth on 
    \begin{equation}
        (\dS^1 \times \dS^1 ) \backslash \{\sigma^+_x =-e^{i\alpha_x}\sigma^-_x\} \to \dS^1 \times \dS^1 \backslash\{-1\}
    \end{equation}
    Moreover, the mapping preserves the \textit{a priori} measure on $\dS^1 \times \dS^1$ up to a set of measure $0$. 
\end{prop}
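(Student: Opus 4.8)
\emph{Proof plan.} The plan is to pass to the angular variables $\sigma^\pm = e^{i\theta^\pm}$ (the site label $x$ is inessential and I suppress it) and make everything explicit. Applying the elementary identity $e^{ia}+e^{ib} = 2\cos\tfrac{a-b}{2}\,e^{i(a+b)/2}$ with $a = \theta^+ - \alpha/2$ and $b = \theta^- + \alpha/2$ gives
\[
  e^{-i\alpha/2}\sigma^+ + e^{i\alpha/2}\sigma^- \;=\; 2\cos\tfrac{\phi-\alpha}{2}\; e^{i(\theta^++\theta^-)/2}, \qquad \phi := \theta^+-\theta^-,
\]
so that $w = \sigma^+\bar\sigma^- e^{-i\alpha} = e^{i(\phi-\alpha)}$ and $\zeta = \mathrm{sign}\!\bigl(\cos\tfrac{\phi-\alpha}{2}\bigr)\, e^{i(\theta^++\theta^-)/2}$; equivalently $\zeta^2 = \sigma^+\sigma^-$. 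One should check that the last formula for $\zeta$ defines a genuine point of $\dS^1$, independent of the chosen representatives of $\theta^\pm$: replacing $\theta^+$ by $\theta^+ + 2\pi$ flips both the sign factor and $e^{i(\theta^++\theta^-)/2}$. Since $|e^{-i\alpha/2}\sigma^+ + e^{i\alpha/2}\sigma^-| = 2|\cos\tfrac{\phi-\alpha}{2}|$, the denominator defining $\zeta$ vanishes exactly when $\phi - \alpha \equiv \pi \pmod{2\pi}$, i.e.\ on $\{\sigma^+ = -e^{i\alpha}\sigma^-\}$, which is precisely the locus where $w=-1$. This matches the domain and codomain in the statement.

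On the complement of this set the denominator is locally bounded below, so $\zeta$ is a ratio of smooth functions with nonvanishing denominator and hence smooth, while $w = \sigma^+\bar\sigma^- e^{-i\alpha}$ is manifestly smooth; this gives the smoothness assertion. For surjectivity and (a.e.) injectivity I would invert the relations $\zeta^2 = \sigma^+\sigma^-$ and $\sigma^+\bar\sigma^- = we^{i\alpha}$: multiplying and dividing them yields $(\sigma^+)^2 = \zeta^2 w e^{i\alpha}$ and $(\sigma^-)^2 = \zeta^2 \bar w e^{-i\alpha}$, which for each $(\zeta,w)$ with $w\neq -1$ has exactly two solutions, interchanged by $(\sigma^+,\sigma^-)\mapsto(-\sigma^+,-\sigma^-)$. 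Because this last map sends $(\zeta,w)$ to $(-\zeta,w)$ — the sign of $\cos\tfrac{\phi-\alpha}{2}$ is unchanged while $e^{i(\theta^++\theta^-)/2}$ changes sign — exactly one of the two preimages is consistent with the sign prescription. Hence the map is an a.e.\ defined bijection onto $\dS^1 \times (\dS^1 \setminus \{-1\})$.

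For the measure statement, on each region where $\mathrm{sign}(\cos\tfrac{\phi-\alpha}{2})$ is constant the map is, in angular coordinates $\zeta = e^{i\eta}$, $w = e^{i\psi}$, the affine map $\eta = \tfrac12(\theta^++\theta^-) + \mathrm{const}$, $\psi = \theta^+-\theta^--\alpha$, with Jacobian matrix
\[
  \frac{\partial(\eta,\psi)}{\partial(\theta^+,\theta^-)} = \begin{pmatrix} 1/2 & 1/2 \\ 1 & -1 \end{pmatrix}, \qquad \bigl|\det\bigr| = 1 .
\]
Thus the map is piecewise affine with unit Jacobian, hence locally measure preserving; combined with the a.e.\ bijectivity just established, this upgrades to global preservation of the a priori measure, off the null set $\{w=-1\}$. (Alternatively, measure preservation can be traced to the fact that $(\sigma^+,\sigma^-)\mapsto(\sigma^+\sigma^-,\,\sigma^+\bar\sigma^- e^{-i\alpha})$ is a translate of a surjective endomorphism of the torus, and passing to the square root $\zeta$ costs no measure.)

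I expect the only real subtlety to be the bookkeeping around the branch of the square root $\zeta$: verifying that the sign prescription makes $\zeta$ well defined on all of $\dS^1\times\dS^1$ rather than merely on a fundamental domain of angles, that the singular set is exactly $\{\sigma^+=-e^{i\alpha}\sigma^-\}$ and corresponds to $\{w=-1\}$, and that the two algebraic preimages collapse to one under the sign convention (so the map is genuinely a.e.\ bijective, not a double cover). Granting these points, everything else reduces to the one-line change of variables above.
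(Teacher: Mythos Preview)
Your argument is correct. You work in angular coordinates, identify the map as piecewise affine with unit Jacobian, and check bijectivity by explicitly inverting $\zeta^2=\sigma^+\sigma^-$, $\sigma^+\bar\sigma^-=we^{i\alpha}$ and using the sign prescription to single out the correct preimage. All of this goes through cleanly.

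The paper takes a different, somewhat slicker route that avoids the branch bookkeeping entirely. It first notes that $(\sigma^-,w)$ is measure preserving (since $\sigma^+\mapsto w$ is a rotation for fixed $\sigma^-$), and then rewrites
\[
\zeta = e^{i\alpha/2}\sigma^-\,\frac{1+w}{|1+w|},
\]
so that, conditional on $w$, $\zeta$ is just $\sigma^-$ rotated by a $w$-dependent angle and hence uniform. Independence of $\zeta$ and $w$ then gives measure preservation without ever touching the sign of $\cos\tfrac{\phi-\alpha}{2}$ or computing a Jacobian. Your approach buys explicit formulas (useful for the subsequent expansion in Proposition~\ref{prop:motivation}) and a direct verification of bijectivity; the paper's buys brevity and sidesteps the square-root branch issues you correctly flagged as the main subtlety.
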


\begin{proof}
Observe that, with respect to the uniform (probability) distribution on $\mathbb S^1\times \mathbb S^1,$ the change of variable $(\sigma^-_x, w_x)$ is measure preserving.  Note that since the distribution of $w$ is continuous, the variable
\begin{equation}
    \zeta_x=e^{i\alpha_x/2}\sigma^-_x\frac{1+w_x}{|1+w_x|},
\end{equation}
is defined a.s. It follows by independence of $w, \sigma^-$ that, conditional on $w_x,$  $\zeta_x$ is uniformly distributed over $\mathbb S^1.$  Therefore $(\zeta_x, w_x)$ is measure preserving and defined up to a set of measure $0.$
\end{proof}

Using this change of variable, we re-express $\sH_{G,\alpha,\ve}.$  Let us introduce the notation  $\zeta_x = e^{i\theta_x}$, $\tau_x=\Im[e^{i\alpha_x} w_x]$.  For a high density of vertices, under the Gibbs measure $|\tau_x|\ll 1$ and so, it makes sense to study the leading order behavior of $\sH_{G,\alpha,\ve}$ with respect to $\tau.$

Recall the divergence $\nabla_x^\alpha \cdot$ is naturally defined as
    \begin{equation}
      \nabla_x^\alpha\cdot \cos \nabla\theta = \sum_{e\sim x:\nabla_{\be} \alpha \ne 0} \cos \nabla_{\be} \theta
    \end{equation}

\begin{prop}
\label{prop:motivation}
Let $\sH_{G,\alpha,\ve}$ denote the bilayer Hamiltonian on finite graph $G=(V,E)$ given in Eq. \eqref{eq:H-alpha}. Up to additive constants,
\begin{align}
    \sH_{G,\alpha,\ve} &= -2\sum_{e\in E:\nabla_{\be} \alpha =0} \cos \nabla_{\be} \theta 
    -\frac{1}{2{\ve}}\sum_{x\in V} (\nabla_x^\alpha \cdot \cos \nabla\theta)^2\\
    &+\frac{{\ve}}{2} \sum_{x\in V} \left(\tau_x - \frac{1}{{\ve}} \nabla_x^\alpha\cdot \cos \nabla\theta\right)^2+\sum_{e=xy\in E}\Phi_{e}(\theta, \tau),
\end{align}
where  
\begin{equation}
|\Phi_e(\theta, \tau)|=O\left([|\tau_x|^2+ |\tau_y|^2]\;[| \cos\nabla_{\be}\theta|+ h(|\tau_x|^2+|\tau_y|^2)]\right).
\end{equation}

\end{prop}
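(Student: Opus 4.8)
The plan is to start from the Hamiltonian \eqref{eq:H-alpha} and substitute the change of variables from Proposition \ref{prop:motivation1}, writing everything in terms of $\theta_x$ (the ``center of mass'' phase) and $\tau_x = \Im[e^{i\alpha_x}w_x]$. The two summands of $\sH_{G,\alpha,\ve}$ must be handled separately. First I would rewrite the interlayer term $-\ve\sum_x\cos(\phi_x-\alpha_x)$. Since $w_x = \sigma_x^+\bar\sigma_x^-e^{-i\alpha_x} = e^{i(\phi_x-\alpha_x)}$, we have $\cos(\phi_x-\alpha_x)=\Re[e^{i\alpha_x}w_x\cdot e^{-i\alpha_x}]=\Re[w_x]=\sqrt{1-\tau_x^2\,e^{\mp 2i\alpha_x}}$ — more precisely, writing $e^{i\alpha_x}w_x = s_x + i\tau_x$ with $s_x^2+\tau_x^2=1$, one gets $\cos(\phi_x-\alpha_x)=e^{-i\alpha_x}(s_x+i\tau_x)$ evaluated appropriately; since $\alpha_x\in\{0,\pi\}$, $e^{\pm i\alpha_x}=\pm1$, and $\cos(\phi_x-\alpha_x) = \pm\sqrt{1-\tau_x^2} = \pm(1-\tfrac12\tau_x^2 + O(\tau_x^4))$. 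Absorbing the sign into $\alpha_x$ and discarding the additive constant, the interlayer term becomes $\tfrac{\ve}{2}\sum_x\tau_x^2 + O(\ve\sum_x\tau_x^4)$.

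Second, and this is the crux, I would expand the intralayer term $\sum_{\ell=\pm}\sH_G^{\mathrm{XY}}(\theta^\ell) = -\sum_{e=xy}[\cos\nabla_{\be}\theta^+ + \cos\nabla_{\be}\theta^-]$ in terms of $\theta$ and $\phi$. Using $\theta_x^\pm = \theta_x \pm \tfrac12\phi_x$ (valid in the relevant coordinate patch, up to the $\alpha$-dependent shift handled in Fig.~\ref{fig:change-of-variables}), one has $\cos\nabla_{\be}\theta^+ + \cos\nabla_{\be}\theta^- = 2\cos\nabla_{\be}\theta\,\cos(\tfrac12\nabla_{\be}\phi)$. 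Now I split on whether $\nabla_{\be}\alpha=0$: on edges with $\nabla_{\be}\alpha=0$, $\phi_x-\alpha_x$ and $\phi_y-\alpha_y$ are both small, so $\nabla_{\be}\phi$ is small and $\cos(\tfrac12\nabla_{\be}\phi) = 1 + O((\nabla_{\be}\phi)^2) = 1 + O(\tau_x^2+\tau_y^2)$, giving the leading term $-2\cos\nabla_{\be}\theta$ plus an error controlled by $\Phi_e$. On edges with $\nabla_{\be}\alpha\neq0$, $\nabla_{\be}\phi$ is close to $\pm\pi$, so $\cos(\tfrac12\nabla_{\be}\phi)$ is close to $0$; writing $\nabla_{\be}\phi = \nabla_{\be}\alpha + (\text{small})$ and $\tau$ essentially measuring the deviation of $\phi-\alpha$ from $0$, one Taylor-expands to get $\cos(\tfrac12\nabla_{\be}\phi) = \pm\tfrac12(\tau_x - \tau_y)/(\text{something}) + \ldots$ — the key point being that this contributes a term linear in $(\tau_x-\tau_y)$ after multiplying by $\cos\nabla_{\be}\theta$, plus higher order. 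Summing these linear-in-$\tau$ contributions over edges incident to each vertex reorganizes (using the definition of $\nabla_x^\alpha\cdot\cos\nabla\theta$) into $\sum_x \tau_x\,(\nabla_x^\alpha\cdot\cos\nabla\theta)$ up to $\Phi_e$ errors.

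Third, I would complete the square. Combining the $\tfrac{\ve}{2}\sum\tau_x^2$ from the interlayer term with the cross term $-\sum_x\tau_x(\nabla_x^\alpha\cdot\cos\nabla\theta)$ (sign to be tracked carefully) from the intralayer expansion, one gets
\[
\frac{\ve}{2}\sum_x\tau_x^2 - \sum_x\tau_x(\nabla_x^\alpha\cdot\cos\nabla\theta) = \frac{\ve}{2}\sum_x\left(\tau_x - \frac{1}{\ve}\nabla_x^\alpha\cdot\cos\nabla\theta\right)^2 - \frac{1}{2\ve}\sum_x(\nabla_x^\alpha\cdot\cos\nabla\theta)^2,
\]
which produces exactly the first three displayed terms in the Proposition. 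All remaining pieces — the $O(\ve\tau^4)$ from the interlayer term, the $O((\tau_x^2+\tau_y^2)|\cos\nabla_{\be}\theta|)$ from expanding $\cos(\tfrac12\nabla_{\be}\phi)$ on both edge types, and the quadratic-correction terms from the change of variables — get collected into $\Phi_e(\theta,\tau)$, and I would verify term by term that each obeys the stated bound $|\Phi_e|=O([|\tau_x|^2+|\tau_y|^2][|\cos\nabla_{\be}\theta| + h(|\tau_x|^2+|\tau_y|^2)])$.

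\textbf{Main obstacle.} The delicate point is the bookkeeping on edges with $\nabla_{\be}\alpha\neq0$: there, $\phi$ is near a value where the coordinate patch of Proposition \ref{prop:motivation1} has its removed singularity, so one must be careful that $\tau_x$ really does linearize the relevant trigonometric quantity and that the ``small'' remainder is genuinely $O(\tau^2)$ rather than merely $O(\tau)$ — i.e., that the linear-in-$\tau$ part is captured \emph{exactly} by $\nabla_x^\alpha\cdot\cos\nabla\theta$ with no leftover linear error leaking into $\Phi_e$ (otherwise the error bound, which is quadratic in $\tau$, would fail). Getting the signs consistent across the $\alpha_x=0$ versus $\alpha_x=\pi$ patches (and the associated $\pm$ in $e^{i\alpha_x}$) is the other place where care is needed; I expect this is most cleanly done by absorbing signs into a redefinition of orientations/$\alpha$ as suggested by Fig.~\ref{fig:change-of-variables}, so that the two cases become uniform.
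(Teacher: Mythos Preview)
Your proposal is correct and follows essentially the same route as the paper: expand the interlayer term to quadratic order in $\tau$, expand the intralayer term edge-by-edge splitting on $\nabla_{\be}\alpha$, and complete the square. The only cosmetic difference is that the paper carries out the intralayer expansion in the complex variables $(\zeta_x,w_x)$ rather than the angular pair $(\theta_x,\phi_x)$; as the Remark following the paper's proof notes, these are equivalent parametrizations, and the complex form just makes the sign bookkeeping on $\nabla_{\be}\alpha\neq 0$ edges more uniform. One concrete point to fix in your sketch: on those edges the linear term works out to $\tfrac12(\tau_x+\tau_y)$ (plus, not minus), which is exactly what allows the reorganization-by-vertex into $-\sum_x \tau_x\,\nabla_x^\alpha\cdot\cos\nabla\theta$ since both endpoints of a boundary edge contribute with the same sign.
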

\begin{proof}
    Let us write $w_x =a_x +ib_x.$ Assuming $|b_x|\ll 1$ in any reasonable sense (either pointwise, or perhaps in some $L^p$ space),
    the inter-layer interaction in Eq. \eqref{eq:H-alpha} becomes
    \begin{equation}
        -h \cos(\phi_x -\alpha_x) = -ha_x = -h\left(1-\frac{b_x^2}{2} +O(b_x^4)\right)
    \end{equation}
    In the new variables, we have
    \begin{equation}
        \sigma_x^+ = e^{+i\alpha/2} \frac{|1+w_x|}{1+\bar{w}_x}\zeta_x, \quad \sigma_x^- = e^{-i\alpha/2} \frac{|1+w_x|}{1+w_x}\zeta_x.
    \end{equation}
    Hence, 
    \begin{align}
        \frac{1}{2}\sum_{\ell=\pm} (\sigma_x^\ell \bar{\sigma}_y^\ell +\bar{\sigma}_x^\ell {\sigma}_y^\ell) = 2 \Re[\zeta_x \bar{\zeta}_y] \Re \left[e^{-i\nabla_{\be} \alpha/2} \frac{(1+\bar{w}_x)(1+w_y)}{|(1+\bar{w}_x)(1+w_y)|} \right].
    \end{align}
    Further note that
    \begin{align}
        \Re \left[e^{-i\nabla_{\be} \alpha/2}\frac{(1+\bar{w}_x)(1+w_y)}{|(1+\bar{w}_x)(1+w_y)|}\right]
        =
        \begin{dcases}
            \Re\left[\frac{(1+\bar{w}_x)(1+w_y)}{|(1+\bar{w}_x)(1+w_y)|} \right] & \nabla_{\be} \alpha = 0\\
            \pm \Im\left[\frac{(1+\bar{w}_x)(1+w_y)}{|(1+\bar{w}_x)(1+w_y)|} \right] & \nabla_{\be} \alpha =\pm \pi
        \end{dcases}
    \end{align}
    Using
    \begin{align}
        (1+\bar{w}_x)(1+w_y) &= (1+a_x)(1+a_y) +b_xb_y +i\left[(1+a_x)b_y+(1+a_y)b_x\right] \\
        &=4 + O(|b_x|^2+|b_y|^2)+i\left[2(b_y-b_x) +O(|b_x|^3+|b_y|^3)\right]
    \end{align}
    we get
    \begin{align}
        \Re \left[e^{-i\nabla_{\be} \alpha/2} \frac{(1+w_x)(1+\bar{w}_y)}{|1+w_x||1+w_y|} \right]
        =
        \begin{dcases}
            1+O(b_x^2+b_y^2) & \nabla_{\be} \alpha = 0\\
            \frac{1}{2} (e^{i\alpha_x} b_x+ e^{i\alpha_y}b_y) +O(|b_x|^3+|b_y|^3) & \nabla_{\be} \alpha =\pm \pi
        \end{dcases}
    \end{align}
    We can therefore write
    \begin{align}
        \sH_{G,\alpha,\ve} &= -2\sum_{e\in E:\nabla_{\be} \alpha =0} (1+O(|b_x|^2+|b_y|^2)) \cos \nabla_{\be} \theta \\ 
        &\quad-\sum_{x\in V} e^{i\alpha_x}b_x \nabla_x^\alpha \cdot \cos \nabla_{\be} \theta +\frac{h}{2} \sum_{x\in V} b_x^2  +O(h) \sum_{x\in V} b_x^4 
    \end{align}
  
The lemma follows by completing the square with respect to $\tau_x$.    
Note that a slightly more precise accounting would include the effect of the term $O(|b_x|^2+|b_y|^2)) \cos \nabla_{\be} \theta$ in the quadratic minimization for $\tau$.  However, when $h$ is large this effect is clearly negligible.  Since we cannot rigorously treat the full model for other reasons, we ignore this term in the present approximation.
\end{proof}

The point of Prop \eqref{prop:motivation} is that in the large disorder regime and conditional on $\zeta$ we should view the field $\tau_x,$ approximately,  as a family of independent normal random variables with
\begin{equation}
    \label{eq:phi-conditional-theta}
    \tau_x|_\zeta \sim \sN\left(\frac{1}{h} \nabla_x^\alpha\cdot \cos \nabla\theta, \frac{1}{\beta h}\right)
\end{equation}
This assumption is, at least, self-consistent since,  conditional on $\zeta$, this field has mean $O(1/h)\ll 1$ and variance $O(1/(\beta h))\ll 1$ provided  $h\gg 1 \gg T.$  In particular $\tau_x^4 \ll \tau_x^2$ with high probability. 
There are ways of arguing rigorously that $\tau_x$ is typically small, so the main difficulty in controlling the model rigorously concerns the symmetry breaking of $\zeta$.  We next discuss heuristics for the behavior of $(\zeta, \tau)$ with respect to $\sH^\mathrm{s}$ in Eq. \eqref{eq:H-strong}.



\begin{remark}
    In our shorter paper \cite{yuan2024concise}, we wrote our effective Hamiltonian in terms of the average phase $\theta \equiv (\theta^+ +\theta^-)/2$ and phase difference $\phi \equiv \theta ^+ -\theta^-$ to appeal to physical intuition, where both are restricted to the box $|\theta|,|\phi-\alpha| < \pi$. 
    However, since $\theta^\pm \mapsto \theta$ is not $2\pi$-invariant, it raises some doubts on the parametrization.
    Here, we have derived the effective Hamiltonian in more invariant manner and from our previous discussion, when we restrict our parametrization of $\sigma^\pm =e^{i\theta^\pm}$ to the green/red boxes in Fig. \ref{fig:measure-a}, \ref{fig:measure-b}, the new variables $\theta,\phi$ are indeed restricted to the box $|\theta|,|\phi-\alpha|<\pi$ and that $\zeta = e^{i\theta}, w=e^{i(\phi-\alpha)}$.
    More importantly, we note that since the imaginary part of $w$ is presumably small by Eq. \eqref{eq:phi-conditional-theta} in the regime $h\gg 1 \gg T$, the imaginary part $\Im w \approx \phi -\alpha$ (the latter was denoted $\delta \phi$ in Ref. \cite{yuan2024concise}).
\end{remark}

\begin{figure}[ht]
\centering
\subfloat[\label{fig:schematic-lattice}]{%
    \centering
    \includegraphics[width=0.33\columnwidth]{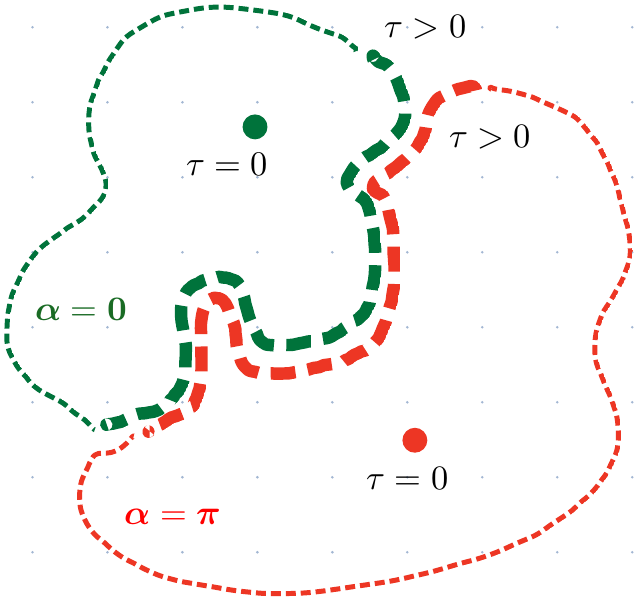}
}
\subfloat[\label{fig:schematic-spin}]{%
    \centering
    \includegraphics[width=0.33\columnwidth]{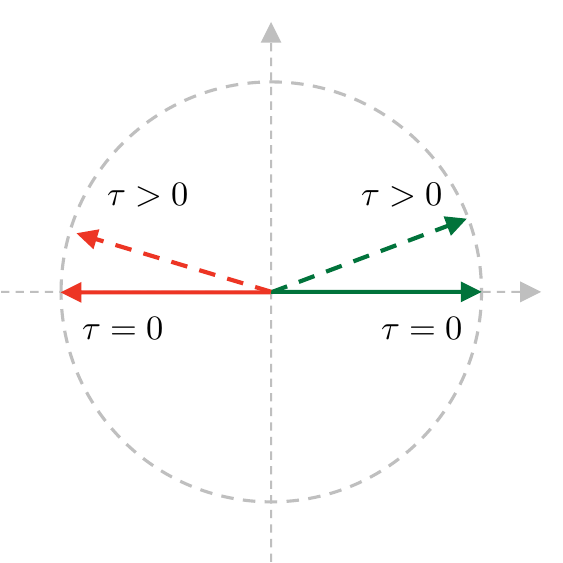}
}
\subfloat[\label{fig:GS-a}]{%
    \centering
    \includegraphics[width=0.33\columnwidth]{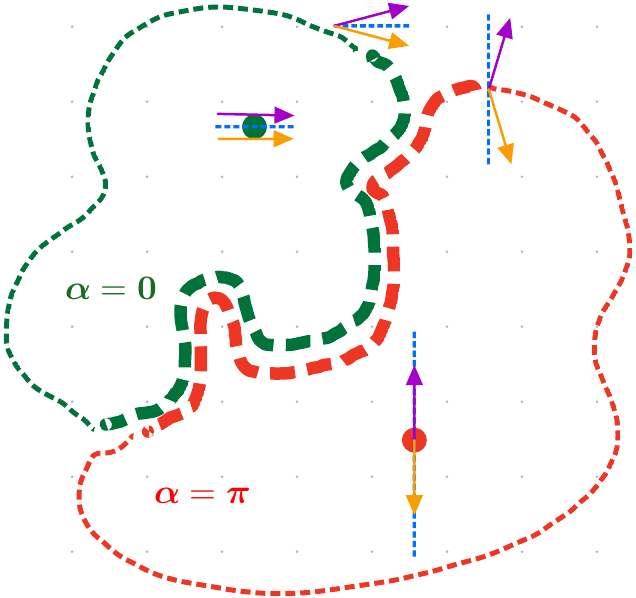}
}
\caption{GS Schematic $e^{i\phi} \sim^\text{s} +i$.  In the strong disorder regime, (a) shows a typical disorder realization of clusters with random phase $\alpha=0,\pi$ where the green and red dashed lines denote the cluster boundary. By Eq. \eqref{eq:GS} (with the assumption that $\nabla \theta =0$), the slanting $\tau_x=0$ vanishes within the interiors of the clusters (e.g., the colored dots deep within each cluster). 
However, at the boundary of each cluster, $\tau_x > 0$, since there is a nontrivial jump in neighboring phase gradients $\nabla \alpha = \pm \pi$. (b) shows the corresponding GS $\phi$ in spin space $\dS^1$, given by the heuristics $e^{i\phi} = e^{i\alpha} +i\tau$ in Eq. \eqref{eq:heuristic}. At the boundary (dashed lines), there is uniform slanting in the $+i$ direction.
(c) shows the GS in the original spins $\sigma^\pm$ denoted by the purple, orange arrows, respectively.
}
\label{fig:GS}
\end{figure}

\subsubsection{Ground States for $\sH^\mathrm{s}$ in Eq. \eqref{eq:H-strong}}
Let us partition the lattice sites into site-connected percolation clusters\footnote{Neighboring sites with the same random phase $\alpha$ are connected.} of $\alpha=0,\pi$ (see Fig. \ref{fig:GS}).  
As we have already argued, though the order parameter of interest is $\tau_x$, we only need to consider the influence of $\zeta$ on it.  
To determine the low energy behavior of $\zeta$  under $\sH^\mathrm{s}$ in Eq. \eqref{eq:H-strong}, we can ignore $\tau$.  
Then, to leading order it's reasonable to assume that the average phase $\theta$ is long-range-ordered.  
This then implies that for a typical low energy configuration, (with $\nabla \theta =0$),
\begin{equation}
    \label{eq:GS}
    \mathbb E[\tau_x| \zeta]  =   \frac{1}{\ve} |\{e\sim x: \nabla_{\be} \alpha \ne 0\}|
\end{equation}
From Eq. \eqref{eq:GS}, if $x$ is within the interior of a cluster, then the mean of $\tau_x = 0$ vanishes whereas if $x$ is at a cluster boundary, the mean is positive with magnitude $\sim 1/\ve \ll 1$.
In this sense, the ordering of the GS $\tau$ only occurs on $\alpha$-cluster boundaries, where $e^{i\phi} \approx e^{i\alpha}+i\tau$ (see Eq. \eqref{eq:heuristic}) uniformly slants in the $+i$ direction. (the alternative $\dZ_2$ degenerate GS $e^{i\phi} \sim^\text{s} -i$ is such that $e^{i\phi}$ slants in the $-i$ direction on the boundaries).
The crucial point is that, even though slanting is only present at the interface between clusters, it is still a bulk effect between a pair of \textit{infinite} components.  
If $p_c^{\text{site}} <1/2$, both the $\alpha=0,\pi$ sites percolate, each with a unique infinite cluster \cite{burton1989density} and thus ergodicity dictates that the interface between the two infinite clusters is volume extensive (see Fig. \ref{fig:cluster}).

\begin{figure}[ht]
\centering
\subfloat[\label{fig:tildeZ2}]{%
    \centering
    \includegraphics[width=0.3\columnwidth]{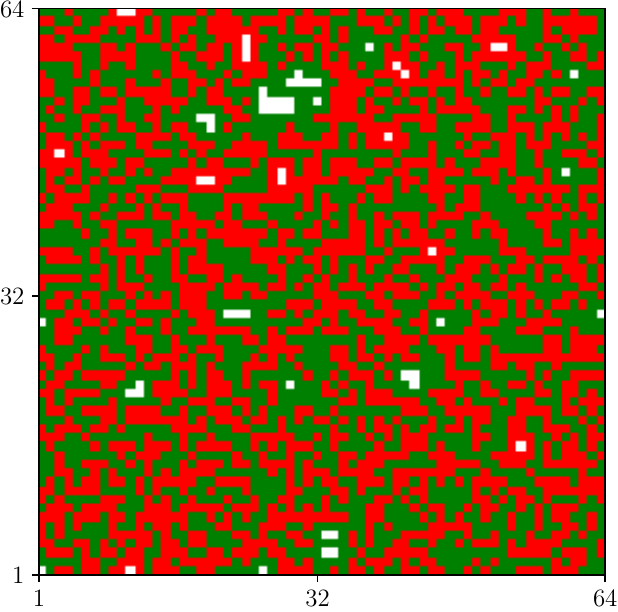}
}
\subfloat[\label{fig:Z2}]{%
    \centering
    \includegraphics[width=0.3\columnwidth]{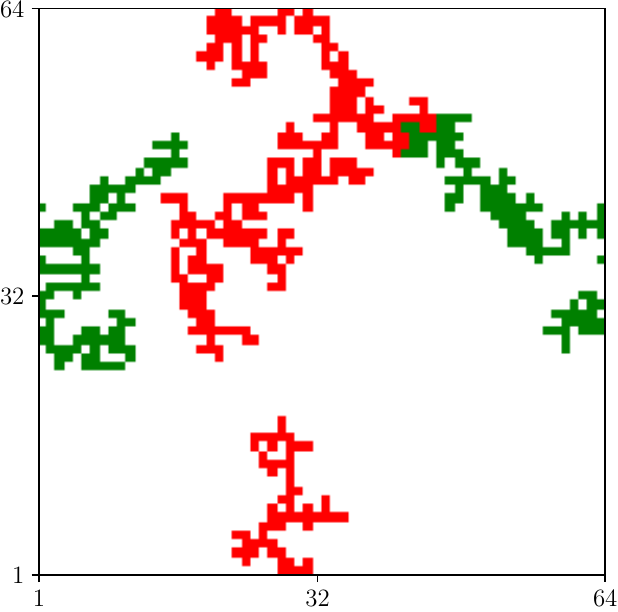}
}
\caption{Largest cluster.  For a given disorder realization $\alpha$ (with periodic boundary conditions), (a) plots the largest $\alpha=0$ cluster (green) and $\alpha =\pi$ cluster (red) on $\tilde{\dZ}^2$, while (b) plots those on $\dZ^2$. All other lattice sites are colored white. The difference between subplots (a), (b) is due to $p_c^\text{site}(\tilde{\dZ}^2) < 1/2 <p_c^\text{site} (\dZ^2)$
}
\label{fig:cluster}
\end{figure}
\subsubsection{Finite temperature}
Since $\ve \gg 1$, to zeroth order, the effective Hamiltonian $\sH^\mathrm{s}$ consists of interactions of the XY type $\cos\nabla_{\be} \theta$, but only internal to each cluster type $\alpha=0,\pi$.
Hence, the zeroth order interaction describes the Hamiltonian of independent XY models defined on the distinct random phase clusters $\alpha=0$ and $\alpha=\pi$, i.e.,
\begin{equation}
    -\sum_{e\in E:\nabla_{\be} \alpha =0} \cos \nabla_{\be} \theta  =  \sH^\mathrm{XY}_{\alpha=0} (\vartheta^+) + \sH^\mathrm{XY}_{\alpha=\pi}(\vartheta^-)
\end{equation}
Where $\vartheta^\pm: \{\alpha=0,\pi\} \to (-\pi,\pi)$ are the restricted spin configurations of $\theta$ on the distinct random phase clusters $\alpha=0,\pi$.
If $p_c^\text{site} < \min (p,1-p)$, then both $\alpha=0, \pi$ subgraphs have unique infinite clusters \cite{burton1989density}. 
If $d\geq 3,$ the restrictions of $\vartheta^\pm$ to their respective infinite clusters  each undergo a phase transition at finite temperature (independent of $\ve$). 
Notably, this statement was recently demonstrated rigorously in Ref. \cite{dario2023phase}.

 Precisely, for finite boxes $\Lambda_N \subseteq \dZ^d$ centered at $0$ and having sidelength $N$, let
    \begin{equation}
        \sH^\mathrm{XY}_{\Lambda_N,\alpha} (\theta)= \sum_{e=xy\in E}  \cos (\nabla_{\be} \theta) \mathbf 1\{\alpha_x,\alpha_y =0\}
    \end{equation}
Note that, similar to our earlier  discussion, the thermodynamic limit $\bra \cdots \ket_{\dZ^d,\alpha,\beta}$ exists $\alpha$-a.s. due to compactness arguments and the Ginibre inequality.
    
    \begin{theorem}[XY ordering on site percolation clusters \cite{dario2023phase}]
    \label{thm:dario-main}
    If $d\ge 3$ and $p > p_c^\mathrm{site}(\dZ^d)$, then there exists temperature $\beta_0^\mathrm{XY}(d,p)$ such that if $\beta \ge \beta_0^\mathrm{XY}(d,p)$, then there is a constant $c>0$ so that
    \begin{equation}
      \dE_p \bra \cos(\theta_x -\theta_y)\ket_{\dZ^d,\alpha,\beta}^\mathrm{XY}\geq c>0
    \end{equation}
\end{theorem}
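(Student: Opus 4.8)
\medskip
\noindent\textbf{Proof proposal for Theorem~\ref{thm:dario-main}.} (This is the result of \cite{dario2023phase}; we only sketch the route we would take.) The plan is to run the usual low-temperature ``spin wave'' analysis of the $XY$ model, but on the infinite percolation cluster instead of $\dZ^d$: one expects $\bra\cos(\theta_x-\theta_y)\ket\approx \exp\!\big(-\tfrac{1}{2\beta}[g(x,x)+g(y,y)-2g(x,y)]\big)$ with $g$ the Green's function of simple random walk on the cluster, and the role of $d\ge3$, $p>p_c^{\mathrm{site}}(\dZ^d)$ is exactly that this walk is transient with a bounded Green's function, keeping the right-hand side bounded away from $0$ uniformly in $\norm{x-y}$. \emph{Step 1 --- renormalize the environment.} I would pass to a coarse lattice of blocks of side $L$, calling a block \emph{good} when the percolation configuration on it and its neighbours has the usual crossing/uniqueness structure; static renormalization of supercritical percolation (Pisztora; Antal--Pisztora; Grimmett--Marstrand) then gives, for $L$ large, that the good blocks stochastically dominate a Bernoulli field of density close to $1$ with bad regions of exponentially small diameter. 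From this one obtains, $\alpha$-a.s., that the infinite cluster $\sC_\infty$ of $\{x:\alpha_x=0\}$ has the large-scale geometry of $\dZ^d$ --- dimension-$d$ volume growth, a dimension-$d$ Faber--Krahn/isoperimetric inequality, hence (Barlow) Gaussian heat-kernel bounds; in particular simple random walk on $\sC_\infty$ is transient with $g_{\sC_\infty}(x,x)=O(1)$ and $g_{\sC_\infty}(x,y)\asymp\norm{x-y}^{-(d-2)}$, with constants uniform in the environment off a negligible event.

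\noindent\emph{Step 2 --- extract the spin waves.} By the Ginibre inequalities the finite-volume correlations increase to the unique ($\alpha$-a.s.) infinite-volume state, so it suffices to lower-bound $\bra\cos(\theta_x-\theta_y)\ket$ in finite volume with free boundary conditions, and only on the positive-probability event $\{x,y\in\sC_\infty\}$ (off it the correlation is $\ge0$ by Ginibre and is discarded). On that event I would replace $1-\cos$ by the Villain potential --- at low temperature the two give two-point functions comparable up to multiplicative constants, by classical correlation inequalities --- and use the spin-wave/vortex decomposition of the Villain model: conditionally on the integer ``vortex'' field the angles are \emph{exactly} Gaussian with Dirichlet form $\tfrac{\beta}{2}\sum_{e\in\sC_\infty}(\nabla_{\be}\theta)^2$, so in the vortex-free sector the conditional two-point function is exactly $\exp\!\big(-\tfrac{1}{2\beta}[g_{\sC_\infty}(x,x)+g_{\sC_\infty}(y,y)-2g_{\sC_\infty}(x,y)]\big)\ge c(\beta)>0$ by Step 1, uniformly in $\norm{x-y}$.

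\noindent\emph{Step 3 --- control the vortices, then average.} It remains to show the vortex field is only a bounded perturbation. Each elementary vortex is suppressed by $e^{-c\beta}$, and since $\sC_\infty$ has bounded degree and dimension-$d$ isoperimetry a Peierls argument over vortex defects shows that, with probability tending to $1$ as $\beta\to\infty$, there are no vortices in a macroscopic neighbourhood of $x$ and $y$, the rare remaining ones being absorbed into the constant. Combining Steps 2--3 yields $\bra\cos(\theta_x-\theta_y)\ket\ge c(\beta)>0$ for $\beta\ge\beta_0(d,p)$, $\alpha$-a.s.\ on $\{x,y\in\sC_\infty\}$ with an environment-independent constant; applying $\dE_p$ preserves a strictly positive lower bound.

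\noindent\emph{Main obstacle.} The real difficulty is that reflection positivity --- the engine of the Fr\"ohlich--Simon--Spencer infrared bound that normally proves $XY$ long-range order on $\dZ^d$ --- is destroyed by the random dilution, so one cannot simply Gaussian-dominate $\bra(\theta_x-\theta_y)^2\ket$. The replacement is the pairing of Barlow's \emph{quenched} heat-kernel estimates on the supercritical cluster (Step 1) with a vortex Peierls bound valid on a graph of bounded geometry (Step 3); the delicate point is making the Step-1 renormalization strong enough that the sparse bad regions neither disconnect the spin waves nor manufacture uncontrolled vortices. A variant avoiding the Villain comparison replaces Steps 2--3 by a direct continuous-spin Peierls argument on the coarse lattice with a finite-arc-valued order parameter, at the price of a more intricate entropy estimate for ``phase interfaces.''
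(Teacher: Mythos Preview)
The paper does not prove this theorem; it is quoted directly from \cite{dario2023phase}. However, the paper reproduces the core Dario--Garban machinery in Section~4 (Theorem~\ref{thm:lower-bound}, Lemmas~\ref{lem:stoch-dom}--\ref{lem:well}, Proposition~\ref{prop:H-clean}), so one can read off exactly how the cited proof goes, and it is \emph{not} your route.

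The Dario--Garban argument does not attempt spin-wave or vortex analysis on the random cluster at all. Instead, after the block renormalization of your Step~1, it uses the Ginibre inequality to \emph{monotonically compare} the XY model on the percolation cluster to an XY model on the deterministic ``extended'' subgraph $\dZ^d|_n$ (Definition before Proposition~\ref{prop:H-clean}), obtained by routing disjoint paths of bounded length between the good-box anchors, splitting couplings to separate overlapping paths, and padding to uniform length (Figures~\ref{fig:separation}, \ref{fig:duplication}). The point is that $\dZ^d|_n$ is translation-invariant and reflection-symmetric, so the Fr\"ohlich--Simon--Spencer infrared bound applies there directly (Proposition~\ref{prop:H-clean}). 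The residual randomness --- which boxes are good --- is then removed by the ``well'' trick of Lemma~\ref{lem:well}: averaging over an i.i.d.\ environment of density $p_0=(1+e^{-cJ})^{-1}$ costs only a halving of the coupling near the anchors. Crucially, the two-coupling structure $(J,J')$ of Definition~\ref{def:H-clean-multiT} decouples the block size $n$ from $\beta$, which is what makes $\beta_0$ depend only on $(d,p)$.

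Your proposal is a genuinely different strategy, and the obstacle you flag at the end is exactly the one Dario--Garban sidestep rather than confront. Your Steps~2--3 are where the gap lies: the Villain spin-wave/vortex decomposition is a statement about the first cohomology of the graph, and on a supercritical percolation cluster the cycle space is wild --- there is no clean notion of ``elementary vortex on a plaquette,'' and a Peierls contour argument for vortices would need a combinatorial organisation of cycles that is not available. Barlow's heat-kernel bounds give you the Green's function control for the \emph{Gaussian} sector, but they do not by themselves tame the integer field. The Dario--Garban approach buys you a complete bypass of this problem at the price of a somewhat intricate graph surgery; your approach, if it could be made to work, would be more conceptual but would require substantially new topology-of-random-graphs input that is not in the literature.
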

Thus with respect to $\sH^\mathrm{s}$ in Eq. \eqref{eq:H-strong},  we expect that for $T\ll 1$, depending on $d,p$ but independent of $\ve$, the $\alpha=0,\pi$ infinite clusters possess internal LRO.  
This suggests that we parametrize each of them (and, incidentally, each of the finite clusters as well) by one $O(2)$ spin per cluster corresponding to the direction of symmetry breaking in each one. 
It is then a question of how these effective degree of freedom orient with respect to one another.  
The latter is determined by the $O(1/\ve)$ term in $\sH^\mathrm{s}$ in Eq. \eqref{eq:H-strong}. 
The interaction $(\nabla_x^\alpha \cdot \cos \nabla \theta )^2$ only includes edges $e=xy$ with nonzero random phase gradient $\nabla_{\be} \alpha \ne 0,$ once again indicating the importance of the interfacial interaction between the $\alpha=0,\pi$ clusters.
Due to the square, this relative orientation can be either aligned or antiligned between clusters, producing the $\Z_2$ degeneracy claimed in Theorem \eqref{thm:infinite-stability}.  Finally, the stability in $(h, \beta)$ claimed in our theorem follows from the fact that, no matter how small the first order interaction strength $1/\ve$ is, it is extensive vis-\'a-vis the pair of $\alpha=0,\pi$ infinite clusters.

This discussion raises the interesting question as to the fate of our stability result for two dimensional systems.  Certainly, with respect to the standard lattice $\dZ^2,$ since $p_{c}^\mathrm{site}(\Z^2)> 1/2$ we do not expect the phenomenon to  persist.  
On the other hand, by adding, say, next nearest neighbors to the underlying graph, so that it is two dimensional but non-planar, we can arrange that $p_{c}^\mathrm{site}(\tilde{\dZ}^2)< 1/2$.  
The question then is whether the BKT phase, which still exists within each of the two infinite clusters, can disrupt the independence of $\beta_0$ from $h$.  Theorem \eqref{thm:infinite-quasi-stability} shows that, provided we weaken the definition of the ordering, stability persists.

\subsubsection{Numerics}
Since we will only be studying the effective Hamiltonian $\sH^\mathrm{s}$ rigorously in this article, it's worth performing some numerics to justify our approximation.
One check on the consistency of our heuristic is the prediction that the critical transition temperature of the effective Hamiltonian $\sH^\mathrm{s}$ in Eq. \eqref{eq:H-strong} should approximate that of the bilayer Hamiltonian $\sH$ in Eq. \eqref{eq:H-alpha} as $\ve\to \infty$.
Indeed, as shown in Fig. \ref{fig:numerics}, the transition temperature of $\sH$ on $\dZ^3$ (red dots) and that of $\sH^\mathrm{s}$ (blue triangles) seem to track each other as $\ve\to \infty$.

For completeness, we outline that the relatively standard procedure to determine the transition temperature of a discrete $\dZ_2$ symmetry using the Binder cumulant (Eq. (3.17) of Ref. \cite{binder2022monte}).
For the bilayer Hamiltonian, one performs the Metroplis-Hasting algorithm\footnote{We note that since the Hamiltonians $\sH,\sH^\mathrm{s}$ are local in space, a checkerboard update method can be applied when utilizing Metropolis-Hasting, instead of single site spin flips. Since the systems are disordered, it's also practical to utilize parallel tempering \cite{marinari2007optimized} to speed up convergence.} (see, e.g., Chapter 18 and 20 of Ref. \cite{klenke2013probability}) on finite size torus $\dZ_N^3$ (with $N=8,12,16,20$) for a given random phase $\alpha$ at disorder strength $\ve$ and inverse temperature $\beta$, from which we approximate
\begin{equation}
     \bra [\sin\phi]_{\dZ_N^3}^2\ket_{\dZ_N^3,\alpha,\ve,\beta}, \bra [\sin\phi]_{\dZ_N^3}^4\ket_{\dZ_N^3,\alpha,\ve,\beta},  \quad [\sin \phi]_{\dZ_L^3} \equiv \frac{1}{|\dZ_N^3|} \sum_{x\in \dZ_N^3} \sin \phi_x
\end{equation}
via Birkhoff's theorem \cite{birkhoff1931proof}.
Note that $[\sin \phi]_{\dZ_N^3}$ is the $\dZ_2$ order parameter averaged over the finite torus.
We then average over (in our case, 40) different samples of random phases $\alpha$ to approximate
\begin{equation}
    m_{N,\ve,\beta}^{(2)} \equiv \dE_p  \bra [\sin\phi]_{\dZ_N^3}^2\ket_{\dZ_N^3,\alpha,\ve,\beta},  \quad m_{N,\ve,\beta}^{(4)} \equiv \dE_p \bra [\sin\phi]_{\dZ_N^3}^4\ket_{\dZ_N^3,\alpha,\ve,\beta}
\end{equation}
And then compute the Binder cumulant
\begin{equation}
    U_{N,\ve,\beta} = 1-\frac{m_{N,\ve,\beta}^{(4)} }{3 m_{N,\ve,\beta}^{(2)} }
\end{equation}
It's expected that at high $T\gg 1$, the Hamiltonian approximates that of a Gaussian and thus the order parameters satisfies Wick's theorem so that $U_{N,\ve,\beta}\sim 0$, while at low $T\ll 1$, the order parameter exhibits LRO and thus $U_{N,\ve,\beta} \sim 2/3$.
For distinct pairs of $N,N'$, we compute the ratio $U_{N,\ve,\beta}/U_{N',\ve,\beta}$ and find the temperature $\beta_{(N,N')}(\ve)$ at which the ratio is closest to $1$. 
The critical transition $\beta_c(\ve)$ is then determined by averaging over distinct pairs $(L,L')$ and the error bars in Fig. \ref{fig:numerics} is due to the sample variance.
A similar procedure is done for the effective Hamiltonian $\sH^\mathrm{s}$ but instead of $\sin \phi$, we use $\tau$ as defined in Theorem \eqref{thm:infinite-stability}, where the two are connected by the heuristics in Eq. \eqref{eq:heuristic}.

\begin{figure}[ht]
\centering
\includegraphics[width=0.5\columnwidth]{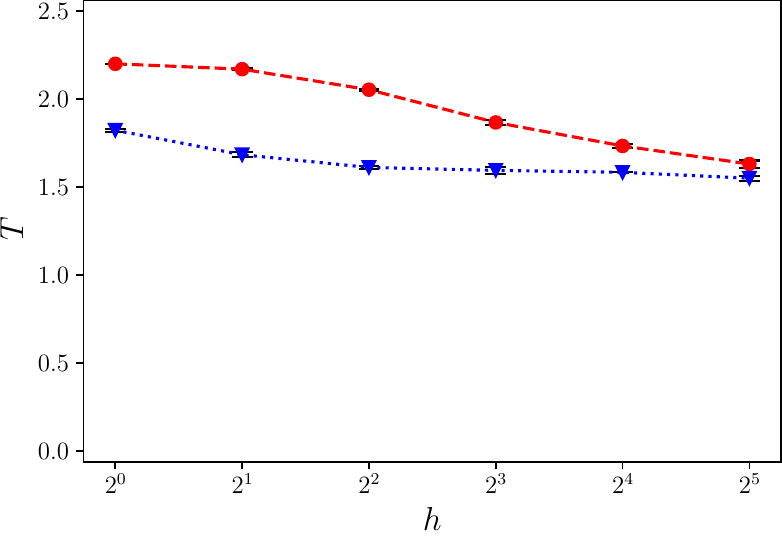}
\caption{Numerics in $\dZ^3$ with $p=1/2$. The red/blue depict the critical transition temperature of the bilayer/effective Hamiltonian $\sH,\sH^\mathrm{s}$ defined in Eq. \eqref{eq:H-alpha} and \eqref{eq:H-strong}, respectively. Error bars are also provided based on the Binder cumulant.
}
\label{fig:numerics}
\end{figure}

\subsection{Heuristics for weak disorder.}
\label{sec:weak-disorder}
Although we are primarily concerned with the strong disorder regime $\ve \gg 1$, it is instructive to understand and compare current results of the weak disorder regime $\ve \ll 1$.
When $\ve \ll 1$, it has been heuristically argued \cite{yuan2023inhomogeneity} that the average phase $\theta \equiv (\theta^++ \theta^-)/2$ and the phase difference $\phi \equiv \theta^+ -\theta^-$ decouple so that the phase transition with respect to $\phi$ can be qualitatively described by the following \textit{effective} Hamiltonian
\begin{equation}
    \label{eq:H-weak}
    \sH^\mathrm{w}_{G,\alpha,\ve} =2\sH^\mathrm{XY}_G (\phi) -\ve \sum_{x\in V} \cos (\phi_x -\alpha_x).
\end{equation}
The heuristic argument is relatively crude: in the weak disorder regime $h\ll 1$ and for low temperatures $T\ll 1$, 
we assume $|\nabla_{\be} \theta^\pm| \ll 1$ for all edges $e$ (or $|\nabla_{\be} \sigma^\pm| \ll 1$ in original XY spins) and then write 
\begin{align}
\label{eq:weaktrans}
    \sH &\sim  \frac{1}{2} \sum_{e} (\nabla_{\be}\theta^+)^2+(\nabla_{\be} \theta^-)^2  -\ve \sum_{x\in V} \cos (\phi_x-\alpha_x) \\
    &\sim \frac 12 \underbrace{\sum_{e}[1- \cos (\nabla_{\be} \theta)]}_{\sim \sH^\mathrm{XY}(\theta)} +\frac 12 \underbrace{\sum_{e} [1- \cos(\nabla_{\be} \phi)] -\ve \sum_{x\in V} \cos (\phi_x-\alpha_x)}_{\sim \sH^\mathrm{w}(\phi)}
\end{align}
The two terms on the RHS decouple. One can then ignore the first term and focus on the second. One might worry here that the change of variables $\theta, \phi$ is not globally well defined, but this misses the point.  The effective Hamiltonian makes sense and provides a good proxy for the behavior, in a fixed region $R$, for the restricted collection of spin configurations for which the Dirichlet energy behaves as
\beq
E_R(\sigma):=\sum_{e} \frac{1}{2} \left((\nabla_{\be}\theta^+)^2+(\nabla_{\be} \theta^-)^2 \right)=O(h|R|).
\eeq
At low temperatures, this is typical for most regions.  Furthermore, provided $h R$ is not too big, the bound on $E_R$ will imply $(\theta, \phi)$ are well defined within $R$. 

In the case where $p=1/2$, the second term $\sH^\mathrm{w}$ is exactly the RFO(2) model discussed in Ref. \cite{crawford2013random,crawford2014random,crawford2024random}.  In those papers, it was proved that provided that $\ve$ is sufficiently small, there is SSB with a $\dZ_2$ degenerate GS space characterized by $e^{i\phi} \sim^\mathrm{w}\pm i$.
To formulate the result precisely, let 
$\bra \cdots \ket^{\mathrm{w}}_{\partial,\Lambda_N,\alpha,\ve,\beta}$ denote the finite volume Gibbs measure induced by $\sH^\mathrm{w}$ on the finite box $\Lambda_N$ with fixed boundary condition $e^{i\phi_x} = +i$ for all $x\in \partial \Lambda_N \equiv \Lambda_{N}  \backslash \Lambda_{N-1}$.
\begin{theorem}[Weak disorder \cite{crawford2013random,crawford2014random,crawford2024random}]
\label{thm:weak}
    Let $d=2,3$ and $p=1/2$ and $\xi\in (0,1)$ be fixed and sufficiently small. There exists $\ve_0(\xi) >0$ such that given $\ve\in (0,\ve_0)$, there exists $\beta_0 (d,p,\ve)$ depending on $\ve$, such that if $\beta  \ge \beta_0(d,p,\ve)$, then
    \begin{equation}
        \dE_{p=1/2}\left[\limsup_{N\to \infty} \left| \bra \sigma_{\Lambda_N} \ket^{\mathrm{w}}_{\partial,\Lambda_N,\alpha,\ve,\beta} -i \right|\right] \le \xi, \quad \sigma_{\Lambda_N} \equiv \frac{1}{|\Lambda_N|} \sum_{x\in \Lambda_N} e^{i\phi_x},
    \end{equation}
\end{theorem}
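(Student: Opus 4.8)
The plan is to carry out the program of \cite{crawford2013random,crawford2014random,crawford2024random}: reduce the model governed by $\sH^\mathrm{w}_{G,\alpha,\ve}$ to an effective short-range \emph{Ising} model for a binary field $s$ recording, on each cube of a suitably coarse-grained lattice, whether the configuration sits in the well $\phi\approx+\pi/2$ or the well $\phi\approx-\pi/2$, and then run a Peierls/Pirogov--Sinai argument for the $+$ phase selected by the boundary condition $e^{i\phi}\equiv i$ on $\partial\Lambda_N$. The crucial structural input is the \emph{exact} $\dZ_2$ symmetry $\phi_x\mapsto-\phi_x$: it leaves $\sH^\mathrm{w}$ invariant for \emph{every} disorder realization $\alpha$ (it fixes each $\cos\nabla_{\be}\phi$ and each $\cos\phi_x$) and it interchanges the two wells. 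Consequently the effective Hamiltonian for $s$ is even in $s$ realization-by-realization: there is \emph{no effective random field} on the Ising variable, only a deterministic ferromagnetic interaction inherited from the $XY$ stiffness and \emph{mean-zero}, $O(\beta\ve^2)$-strength random corrections (effective random bonds, and higher even terms). For $\ve$ small this is a small perturbation of a ferromagnetic Ising model, which is precisely why the statement can survive down to $d=2$, where an effective random field of any nonzero strength would destroy the order by Aizenman--Wehr.

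The reduction is a spin-wave computation. Writing $\phi_x=\pm\tfrac\pi2+\delta_x$ near each well gives $\cos\phi_x=\mp\sin\delta_x$, so the random field couples \emph{linearly} to $\delta$; treating $\delta$ as approximately Gaussian conditional on $s$ and $\alpha$ and integrating it out yields, besides the ferromagnetic coupling and the mean-zero random bonds above, an easy-axis anisotropy of strength $g\asymp\ve^2$ (up to $\beta$-dependent and, in $d=2$, logarithmic factors) favoring $\phi=\pm\pi/2$ --- this is what promotes the residual $O(2)$ symmetry to a genuine $\dZ_2$. One then controls the errors of the Gaussian step: the self-consistent point is that the generated anisotropy $g$ gives $\delta$ a mass $\sim\sqrt g$, cutting off the Imry--Ma-type spread of $\delta$ (so that $\var(\delta_x)\lesssim\ve^2 G^{(\sqrt g)}_{xx}\ll1$, which is where the $\ve$-dependence of $\beta_0$ enters), and that cubes on which $\delta$ fails to be small or $\phi$ winds have probability $\le e^{-c\beta}$ uniformly in $\alpha$ --- an $XY$-stiffness bound obtained by a chessboard/multiscale estimate that is stable under the $O(\beta\ve^2)$ perturbation.

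With the effective Ising model in hand, the ordering is standard. For $d=3$, a Peierls contour argument for the perturbed ferromagnetic Ising model shows that under the $+$ boundary condition the expected densities of $-$ cubes and of bad cubes are small and bounded uniformly in $N$; for $d=2$ one instead runs a Pirogov--Sinai-type argument, using that the easy-axis anisotropy is a relevant perturbation below the $XY$/BKT temperature, so that for $\beta\ge\beta_0(\ve)$ interfaces are suppressed exponentially in their length. To deduce the stated estimate one bounds $\bigl|\bra\sigma_{\Lambda_N}\ket^\mathrm{w}-i\bigr|\le|\Lambda_N|^{-1}\sum_x\bra|e^{i\phi_x}-i|\ket$ and splits the sum over cubes: a good $+$ cube contributes $O(\sqrt g+(\text{thermal}))$ per site, made $\le\xi/2$ by choosing $\ve_0,\beta_0$ suitably, while bad and good-$-$ cubes contribute $O(1)$ per site but have expected density $\le\xi/2$ by the contour bounds; the uniformity in $N$ lets one take $\limsup_{N\to\infty}$ inside before applying $\dE_p$.

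The main obstacle is the two-dimensional spin-wave control: making rigorous the loop ``disorder generates the easy-axis anisotropy $\Rightarrow$ the anisotropy massifies the fluctuations $\Rightarrow$ the fluctuations are small enough that the anisotropy computation is valid'' is exactly the Imry--Ma competition at the margin, and is the source of the requirement $\beta_0\to\infty$ as $\ve\to0$; in $d=3$ this difficulty is essentially absent, since the spin-wave fluctuations are already summable at $g=0$, and the argument is considerably softer.
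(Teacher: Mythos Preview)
The paper does not give its own proof of this theorem; it is cited as a result from \cite{crawford2013random,crawford2014random,crawford2024random} and only its consequences for the bilayer model are discussed in the surrounding text. There is therefore no proof in the paper to compare against.

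Your outline does capture the broad architecture of those references: the exact, realization-by-realization $\dZ_2$ symmetry $\phi\mapsto-\phi$ is indeed the crucial structural input (it forbids an effective random field on the coarse-grained Ising variable, leaving only random-bond-type corrections, and this is precisely what lets the argument survive in $d=2$); the spin-wave generation of an $O(\ve^2)$ easy-axis anisotropy is the right mechanism; and the endgame is a contour argument. One technical caution: you invoke a ``chessboard/multiscale estimate'' for the bad-cube probability, but chessboard estimates rely on reflection positivity, which is not available for a fixed quenched realization since translation invariance is broken. The cited works use a direct multiscale analysis of the spin-wave fluctuations instead, and this is where the bulk of the effort lies --- you correctly flag the self-consistency loop in $d=2$ (anisotropy $\Rightarrow$ mass $\Rightarrow$ controlled fluctuations $\Rightarrow$ anisotropy computation valid) as the main obstacle, but the sketch understates the machinery needed to close it rigorously.
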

\noindent
The constraint on dimensionality here is technical.  
Applied to $\sH,$ this result suggests that the pair $(\sigma^+_x, \sigma^-_x)$ should be locked at right angles to one another in space, with fast, low amplitude oscillations in the angle due to the phase shift $\alpha_x:$ if $\alpha_x=0,$ $(\sigma^+_x, \sigma^-_x)$ tilt toward each other by angle of order $O(h)$ whereas if $\alpha_x=\pi,$ they tilt slightly away from each other.

Adapting the proof of Theorem \eqref{thm:weak} to handle $\sH$ when $d=3$ seems plausible, if challenging. The issue is whether or not proving LRO for $\sigma^++\sigma^-$ is a technical necessity to control the phase locking of $\phi.$  This hurdle is even higher in $d=2$ due to the expected BKT phase satisfied by $\sigma^++\sigma^-$ at low temperature.  Nevertheless, we believe that this phase locking does occur for all $d\geq 2$ in the weak disorder regime.

In light of our previous emphasis on the independence of the transition temperature from the field strength for strong disorder, it is worth reflecting on its behavior at weak disorder.  A careful review of the proof of Theorem \eqref{thm:weak} provides a lower bound on the critical temperature $T_\mathrm{LRO}^\mathrm{w}(\ve) \ge T_0(\ve) \sim \ve^2$  (up to logarithm corrections with respect to $\ve$), though this was not optimized there.
On the other hand, mean-field theory computations, see \cite{yuan2023exactly}, 
predict a disorder strength $\ve$ independent transition temperature on the same scale as the XY coupling constant, i.e., $T_\mathrm{LRO}^\mathrm{w}(\ve)  \sim 1$ for $\ve \ll 1$.
Extrapolating to $d = 3,$ at zero disorder $\ve =0$ the model reduces to the clean XY model and thus exhibits LRO/SSB below a critical transition temperature $T_\mathrm{LRO}^\mathrm{w}(\ve=0) \sim 1$ \cite{frohlich1976infrared,garban2022continuous,friedli2017statistical}.
Hence, for weak disorder $\ve \ll 1$, it is then natural to conjecture that $T_\mathrm{LRO}^\mathrm{w}(\ve) \sim 1$.

This question is more intriguing in $d=2$ dimensions, due to the BKT phase for the clean XY model. 
If a phase transition in $\phi$ occurs at an $O(1)$ temperature for all $h$ small, there might be multiple critical points, correpsonding to a first transition from disordered to quasi-LRO and then to a true LRO phase at a much lower temperature $T_\mathrm{LRO}^\mathrm{w}(\ve) \sim \ve^2$.
While we will not discuss this question further in the weak disorder regime, a similar question arises for strong disorder, 
see Sections \eqref{sec:open-questions} and \eqref{sec:H-clean}.

\subsection{Open Questions}
\label{sec:open-questions}

Here we list some important open questions. Some of them will be partially answered later in this article, while others we will leave for future investigation.
\begin{OQ}
Can analagous 2D systems be infinitely stable under disorder, instead of merely infinitely quasi-stable?
\end{OQ}

In $d=2$ with nearest-neighbor couplings, the previous argument fails since $p_c^\text{site}(\dZ^2) >1/2$ implies that the individual clusters are finite and exponentially decaying in their size distribution \cite{Grimmettbook}.
Adding next nearest-neighbor connections so that $p_c^\text{site}(\tilde{\dZ}^2) < 1/2$, however, only guarantees quasi-LRO $T_\text{QLRO} \sim 1$.
Indeed, each infinite cluster on $\tilde{\dZ}^2$ exhibits a BKT phase and thus whether LRO exists at $\ve$ independent temperatures remains an open question.

Even for a system \textit{without} disorder (we shall refer to such a setup as clean), this question is not trivial.
Consider the following further simplified model.
Let $G=\dZ^2,$ and set 
\begin{equation}
    \label{eq:H-clean}
    \sH^\mathrm{c}(\vartheta^\pm)= \sum_{\ell =\pm} \sH^\mathrm{XY}(\vartheta^\ell) -\frac{1}{2\ve} \sum_{x} \cos^2 \varphi_x,
\end{equation}
where each vertex gets a pair of XY spins, indexed by $\vartheta^\pm_x$. 
The first term on the RHS denotes that each layer interacts via a ferromagnetic XY interaction. 
In the second term, $\cos^2 \varphi_x$ denotes the interlayer coupling between the spins at $x$ where $\varphi_x = \vartheta^+_x -\vartheta_x^-$.
The square is inserted to give the model a $\dZ_2$ degeneracy;
it is a simple proxy for the first order interaction $(\nabla_x^\alpha \cdot \cos \nabla \theta )^2$ in the effective Hamiltonian $\sH^\mathrm{s}$ in Eq. \eqref{eq:H-strong}.  
Clearly the GS space for $\sH^\mathrm{c}$ has $\dS^1\times \dZ_2$ degeneracy, i.e., the layers are either fully aligned or fully anti-aligned for all coupling strengths $\ve$. 
Since the terms in $\sH^\mathrm{c}$ satisfy the necessary conditions for Ginibre's inequalities \cite{ginibre1970general,tokushige2024graphical}, the critical transition temperature at finite $1/\ve$ is bounded below by that at $1/\ve =0$, i.e., the clean XY models defined on $\dZ^2$.
Since the latter exhibits quasi-LRO below a temperature independent of $\ve$, the same holds true for the clean Hamiltonian $\sH^\mathrm{c}$ on $\dZ^2$ at arbitrarily weak interlayer coupling $1/\ve$, i.e., $T_\mathrm{QLRO}^\mathrm{c} \sim 1$, analogous to the effective Hamiltonian $\sH^\mathrm{s}$ on $\tilde{\dZ}^2$ (with $p_c^\mathrm{site} (\tilde{\dZ}^2) < \min(p,1-p)$).

Due to the similarities between the models, it then raises the natural question whether the clean system $\sH^\mathrm{c}$ on $\dZ^2$ can exhibit LRO at arbitrarily weak coupling $1/\ve$ below a coupling independent temperature $T_\mathrm{LRO}^\mathrm{c} \sim 1$.
From this perspective, field theory \cite{Seibergetal2021} suggests that the $\cos^2 \varphi$ interaction in Eq. \eqref{eq:H-clean} is always relevant, in the renormalization group sense,  and thus gives rise to LRO $T_\mathrm{LRO}^\mathrm{c} \sim 1$ in the limit $1/\ve \to 0$.
Numerics \cite{song2022phase,bojesen2014phase,maccari2022effects} seem to corroborate this conjecture and show only a single phase transition for $\varphi$ - a disordered phase at high $T$ and LRO below $T_\mathrm{LRO}^\mathrm{c} \sim 1$. 
Nevertheless, the mathematical evidence regarding the phase diagram is not clear cut.
For related $\dZ_2$ invariant systems in 2D, Ref. \cite{shlosman1980phase} shows that an analog to $\varphi$ exhibits LRO for temperatures below $\sim 1/\ve$ so that $T_\mathrm{LRO}^\mathrm{c}\gtrsim 1/\ve$ is bounded below (The proof utilizes reflection positivity and thus can be straightforwardly adapted to $\sH^\text{c}$ in Eq. \eqref{eq:H-clean}).
In this paper, we further prove $\varphi$ can only order along $0,\pi$ (no ordering of $\sin \varphi$) with the following upper bound, i.e., if $T\ll 1$ (independent of $\ve$), then
\begin{equation}
    \label{eq:upper-bound}
    |\bra \cos \varphi_x \ket| \lesssim \exp\left(-\frac{1}{8\pi} \frac{T}{|\log \ve|^{-1}}\right).
\end{equation}
These results are collected in Sec. \eqref{sec:H-clean} (or more specifically, Corollary. \eqref{cor:absence-Y} and Theorem. \eqref{thm:mcbryan}.
Notably, if  $\ve$ is sufficiently large so that $1/|\log \ve| \ll 1 \sim T_\text{QLRO}$, then Eq. \eqref{eq:upper-bound} implies that if LRO occurs at $1\sim T_\text{LRO}\le T_\text{QLRO}$, the spontaneous magnetization must be exponentially small with respect to $|\log \ve|$.
In comparison (see Fig. \ref{fig:upper-bound}), universality ``conventionally" suggests that the magnetization $m$ 
becomes substantial at temperatures of the same scale as the transition temperature, i.e., $m \sim 1$ for $T\sim T_\text{LRO}$.

Based on the previous discrepancy, it appears that LRO occurs at best at $T_\text{LRO} \sim 1/|\log \ve|$, suggesting the possibility of an intermediate massless phase (for both the clean model in Eq. \eqref{eq:H-clean} and the disordered model in Eq. \eqref{eq:H-alpha}) in the limit $1/\ve \to 0$.
One may then speculate that such an intermediate phase has not been observed in numerics \cite{song2022phase,bojesen2014phase,maccari2022effects} due to the extremely slow logarithmic decay.

\begin{figure}[ht]
\centering
\includegraphics[width=0.5\columnwidth]{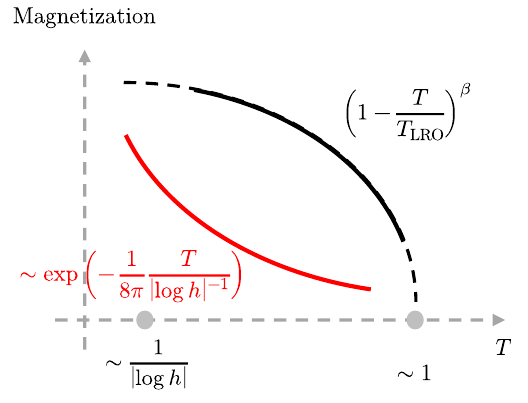}
\caption{Upper bound. The $x$ axis denotes the temperature, while the $y$ axis denotes an order parameter such as magnetization. The red line denotes the upper bound proved in Eq. \eqref{eq:upper-bound}. If $T_\text{LRO}\sim 1$, then universality implies that the magnetization follows the form of the black line with critical exponent $\beta$.
}
\label{fig:upper-bound}
\end{figure}

\begin{OQ}
Can the analogous system be infinitely stable in the presence of $U(1)$ disorder, i.e., $\alpha_x \in (-\pi,\pi]$? Or more importantly, can the system with $U(1)$ disorder exhibit a finite temperature phase transition in low dimensions $d\ge 4$ (and thus provide a counter-example of sorts to the Imry-Ma phenomenon)?
\end{OQ}

As discussed previously, the Imry-Ma phenomenon \cite{imry1975random} claims that in the presence of random fields, the phase transition of spin systems is destroyed in low dimensions ($d\le 2,4$ for discrete and continuous symmetries).
Although the statement has been proven rigorously for selected examples such as the random field Ising and XY models \cite{aizenman1989rounding,aizenman1990rounding} and their quantum analogues \cite{greenblatt2009rounding,aizenman2012proof}, it is generally believed that the phenomenon persists beyond these selected examples.
One major reason is due to the belief that finite temperature ordering can only occur in the weak disorder regime.
In the weak disorder regime, any reasonable lattice model can be heuristically simplified using the nonlinear sigma representation (as discussed in Sec. \eqref{sec:weak-disorder}), upon which the Imry-Ma argument can be  applied.
However, since we have now shown that it is possible for finite temperature ordering to persist in the infinite disorder limit, it raises the question whether one can find a counter-example to the Imry-Ma phenomenon.

Unfortunately, our bilayer Hamiltonian in Eq. \eqref{eq:H-alpha} does not serve as a counter-example, since, similar to the RFO(2) model, ordering of the $\dZ_2$ degenerate GS occurs perpendicular to the submanifold on which the random field disorder acts.
More concretely, the random phase $\alpha_x = 0,\pi$ corresponds to a coupling along the $X$ direction, while the $\dZ_2$ degenerate GS $e^{i\phi} \sim \pm i$ are ordered in the $Y$ direction, and thus one may regard the ordering along the $X$ direction destroyed by the disorder.
However, this begs the interesting question of whether a slight variation (bilayer Hamiltonian with $U(1)$ disorder, i.e., uniformly distributed $\alpha \in (-\pi,\pi]$) can give rise to distinct behavior, especially in the strong disorder regime $\ve\gg 1$.

If one attempts to repeat the previous heuristics and map the bilayer Hamiltonian with $U(1)$ disorder onto an effective Hamiltonian in the strong disorder regime $\ve\gg 1$, it is straightforward to check that depending on the random phase $\alpha$, certain interactions of the effective Hamiltonian will be anti-ferromagnetic and thus Ginibre inequality cannot be applied.
\section{Definition and Preliminaries}
\subsection{Preliminaries}

When considering infinite, we shall always mean countably infinite.
\begin{definition}
    Given an infinite graph $G_\infty$ with vertices $V_\infty$ and edges $E_\infty$, we say $x,y\in V_\infty$ are \textbf{adjacent} and write $x\sim y$ if $e=xy$ is an edge.
    Similarly, edge $e$ and site $x$ are \textbf{adjacent} and write $e\sim x$ if $e=xy$ for some site $y$.
    Similarly, edge $e$ and subset $V\subseteq V_\infty$ are \textbf{adjacent} and write $e\sim V$ if $e\sim x$ for some $x\in V$.
    Moreover, given a subset $V\subseteq V_\infty$, the subgraph induced by $V$ is that consisting of edges $e=xy\in E_\infty$ such that $x,y\in V$, and when there is no confusion, we will also use $V$ to denote the induced subgraph.
\end{definition}
\begin{definition}
Slightly abusing terminology, we say that a finite box $\Lambda\subseteq \dZ^d$ has \textbf{length} $L$ if $\Lambda=i+\Lambda_L$ for some $i\in \dZ^d$ and $\Lambda_L =\{-L,...,L\}^d$. We write $\Lambda =[i]_L$ for simplicity.
    The \textbf{boundary} of $\Lambda_L$ is $\partial \Lambda_L =\Lambda_L \backslash \Lambda_{L-1}$
\end{definition}


\subsection{Disordered Models on $\dZ^d$ and $\tilde{\dZ}^2$}

\begin{definition}
    We denote the graph $\dZ^d,d\ge 2$ consisting of sites on $\dZ^d$ and the conventional nearest-neighbor edges, i.e., $e=xy$ is an edge if $\norm{x-y}_1=1$. 
    We denote the graph $\tilde{\dZ}^2$ consisting of sites on $\dZ^2$ and edges involving nearest-neighbors and next nearest-neighbors, i.e., $e=xy$ is an edge if $\norm{x-y}_\infty = 1$.
    We note that the site percolation threshold satisfies 
    \begin{align}
    p_c^\mathrm{site}(\dZ^d) &< 1/2, \quad d \ge 3,\\
    p_c^\mathrm{site}(\tilde{\dZ}^2) &< 1/2 < p_c^\mathrm{site}(\dZ^2).
    \end{align}
\end{definition}

\begin{definition}
    Let $G_\infty=(V_\infty,E_\infty)$ denote an infinite graph with vertices $V_\infty$ and edges $E_\infty$.
    A \textbf{random phase}  $\alpha:V_\infty \to \{0,\pi\}$ is a field of i.i.d.~Bernoulli random variables on sites $V_\infty$ with probability $p$ such that $\alpha_x=0$. 
    We refer to sites $x\in V_{\infty}$ as \textbf{occupied} and \textbf{vacant} if $\alpha_x=0,\pi$ respectively. The reader should keep in mind that we will be considering the percolative properties of BOTH occupied and vacant sites. 


    {
    For vertices $a,b\in S\subseteq V_\infty$, we write $a\xleftrightarrow[S]{\pm} b$ if $a,b$ are connected in $S$ using, respectively, the occupied/vacant sites of random phase $\alpha$ in subset $S$.
    Similarly, we write $A\xleftrightarrow[S]{\pm} B$ for subsets $A,B\subseteq S$ if there exists $a\in A,b\in B$ such that $a\xleftrightarrow[S]{\pm} b$.
    In particular, we slightly abuse notation and write $e \xleftrightarrow[S]{\pm} B$ for edge $e\in E_\infty$ by treating $e$ as a set of its two endpoints.
    In the case where $S=V_\infty$, we shall omit the subscript $S$ and write $A\xleftrightarrow{\pm} B$ instead.
    }
\end{definition}

\begin{definition}[Disordered Model]
    \label{def:H-strong}
    Given a random phase $\alpha$ on infinite graph $G_\infty$, define the \textbf{effective model (in the strong disorder limit)} with inverse temperature $\beta$ via the finite-volume Gibbs measure on finite subgraphs $G$ induced by the finite subset $V\subseteq V_\infty$ over \textbf{spin configurations} $\theta:V\to (-\pi,\pi]$,
    \begin{equation}
        \mu_{G,\ve,\beta}^{\mathrm{s}, \alpha}[d\theta] = \frac{1}{\sZ_{G,\ve,\beta}^{\mathrm{s}, \alpha}} \exp\left(-\beta \sH_{G,\alpha,\ve}^\mathrm{s}(\theta)\right) \prod_{x\in V} d\theta_x,
    \end{equation}
    where $\sH_{G,\alpha,\ve}^\mathrm{s}$ is given in Eq. \eqref{eq:H-strong}, $d\theta_x$ denote uniform measures on $(-\pi,\pi]$ and $\sZ^{\mathrm{s}, \alpha}_{G,\beta,\ve}$ is the \textbf{partition function} which normalizes the measure.
    We write $\bra \cdots \ket^{\mathrm{s},\alpha}_{G,\ve,\beta}$ to denote the expectation.
\end{definition}

\subsection{Clean Models}
\begin{definition}(Extended graphs $\dZ^d|_n$)
    Following Ref. \cite{dario2023phase}, we shall also consider special subgraphs of $\dZ^d$, referred to as \textbf{extended subgraphs} $\dZ^d|_n$, whose sites consists of the vertices $x=(x_1,...,x_d)\in \dZ^d$ such that at least $d-1$ of the components $x_i$ satisfy $x_i \in n\dZ$.  The remaining component is unconstrained. The graph structure  on $\dZ^d|_n$ is induced by its natural embedding in $\Z^d$, i.e. $e=xy$ is an edge in $\dZ^d|_n$ if $x,y$ are sites in $\dZ^d|_n$ and $xy$ is an edge in $\dZ^d$.
    We  write $n|x$  if all components $x\in \dZ^d$ are divisible by $n$ and let $n|S$ denote the collection of all such sites $x\in S$ for any subset $S\subseteq \dZ^d$.
\end{definition}

 \begin{definition}[Clean Model]
     \label{def:H-clean}
     Let $G_{\infty}=(V_{\infty}, E_{\infty})$ be an infinite graph with uniformly bounded degree.  Let $E_1, E_2\subseteq E_{\infty}$.   Let $(G, E)$ be a finite subgraph.  We define the associated \textbf{clean model} with inverse temperature $\beta$ and \textbf{interlayer interaction} $1/\ve$ as  the finite-volume Gibbs measure  with Hamiltonian 
     \begin{equation}
         \sH_{G, h}^\mathrm{c} (\vartheta) = -\sum_{e\in E_1\cap E} \cos(\theta_x-\theta_y) -\frac{1}{\ve} \sum_{e\in E_2\cap E}  \cos(\theta_x-\theta_y) ^2
     \end{equation}
     \end{definition}
     It is worth noting here that as long as $h, \beta>0,$  by Ginibre's inequality, thermodynamic limits always exist (free boundary conditions).  We write $\bra \cdots \ket_{G_{\infty},\ve,\beta}^\mathrm{clean}$ for the infinite volume expectation.  

\begin{definition}
\label{def:H-clean}
Let $G_\infty^\pm=(V_\infty^\pm,E_\infty^\pm)$ denote a pair of infinite graphs and let $\sI_\infty$ denote an infinite collection of unordered pairs $x^+x^-$ where $x^\pm \in V_\infty^\pm$, which we refer to as the \textbf{interlayer edges}.
Let $V^\pm \subseteq V_\infty^\pm$ denote finite subsets with induced subgraphs $G^\pm$ and $\sI$ denote the subset of interlayer edges $x^+x^-$ with $x^\pm \in V^\pm$. 
Let $\sG$ be the (bipartite) graph obtained from $G^+\cup G^-$ by including the edges  $\sI$ between vertices of $G^+, G^-$ and similarly define $\sG_\infty$. 
The \textbf{(bilayer) clean model} with inverse temperature $\beta$ and \textbf{interlayer interaction} $1/\ve$ is characterized by the finite-volume Gibbs measure associated to the Hamiltonian
\begin{equation}
    \sH_{\sG, h}^\mathrm{c} (\vartheta^\pm) = 2\sum_{\ell =\pm} \sH^\mathrm{XY}_{G^\ell} (\vartheta^\ell) -\frac{1}{\ve} \sum_{ e\in \sI} \cos^2 (\varphi_{\be})
\end{equation}
where $\sH^\mathrm{XY}_{G^\pm}$ is the XY model on $G^\pm$ given in Eq. \eqref{eq:H-XY} and $\varphi_{\be} =\vartheta^+_{x^+} -\vartheta^-_{x^-}$ is the \textbf{interlayer} phase difference on $e=x^+x^-\in \sI$ with orientation $\be$ from $x^-$ to $x^+$.
We then let $\bra \cdots \ket^\mathrm{c}_{\sG_\infty,h,\beta}$ denote the expectation (whose existence is guaranteed by Ginibre).
In particular, we say that the clean model is defined on
\begin{itemize}
    \item $\dZ^d|_n$ if $G^\pm_\infty =\dZ^d|_n$ and $\sI_\infty$ is the collection of unordered pairs $x^+x^-$ where $x^+, x^-$ are the two copies of $x\in n|\dZ^d.$ 
    With slight abuse of notation, we write $\sG_\infty = \dZ^d|_n$
    \item $\dZ^d|_n$ with \textbf{random environment} $r:n|\dZ^d \to \{0,1\}$ if $G^\pm_\infty = \dZ^d|_n\backslash \{r=0\}$ and $\sI_\infty$ is the collection of unordered pairs $x^+x^-$ where $x^+, x^-$ are the two copies of $x\in (n|\dZ^d) \backslash \{r =0\}$ and $\{r=0\}$ is the set of all $x\in n|\dZ^d$ with $r_x=0$. With slight abuse of notation, write $\sG_\infty = \dZ^d|_n^r$
\end{itemize}
{In either case, we simplify notation and write $x\in \sI_\infty$ and $\varphi_x$ since the interlayer edges $e=x^+x^-\in \sI_\infty$ involve two copies $x^+,x^-$ of $x$.}
\end{definition}

We also need the following variation on the previous definition.

\begin{definition}[Clean model with multiple coupling strengths]
    \label{def:H-clean-multiT}
    Consider the clean model on $\sG_\infty = \dZ^d|_n$ or $\dZ^d|_n^r$ with random environment $r$ from the previous definition. 
    Let $\Lambda_L \subseteq \dZ^d$ be a finite box with $L=Nn$, $\Lambda_L^\pm$ the induced subgraph of $G_\infty^\pm$ and  $\sI_L$ denote the induced interlayer edges. 
    Write $\sG_L$ as the bipartite graph obtained from $\Lambda_L^+\cup \Lambda_L^-$ by including the edges $\sI_L$.
    The \textbf{clean model} on $\sG_L$ with couplings $J,J'$ is defined by the finite-volume Gibbs measure on finite boxes $\Lambda_L \subseteq \dZ^d$
    \begin{align}
        \mu_{\sG_L,\ve, J,J'}^\mathrm{c}[d\vartheta^\pm] &= \frac{1}{\sZ_{\sG_L,\ve, J,J'}^\mathrm{c}} e^{-\sH_{\sG_L,h,J,J'}(\vartheta^\pm)}\prod_{\ell=\pm, x\in \Lambda_{L}} d\vartheta_x^\ell, \text{ where}\\
        -\sH_{\sG_L,h, J , J'}(\vartheta^\pm) &=2\sum_{\ell=\pm}\left[J\sum_{e\in \Lambda_L^\ell:e\sim \sI_L} \cos \nabla_{\be} \vartheta^\ell+J'\sum_{e\in \Lambda_L^\ell:e\not\sim \sI_L}\cos \nabla_{\be} \vartheta^\ell\right] + \frac{J}{\ve} \sum_{x\in \sI_L} \cos^2 \varphi_{x}
    \end{align}
    $\sZ_{\sG_L,\ve,J,J'}^\mathrm{c}$ is the partition function which normalizes the measure, $d\vartheta_x^\ell$ denotes the uniform measure and $\varphi_x \equiv \vartheta_x^+ -\vartheta_x^-$ denotes the phase difference along the interlayer edges $\sI_L$.
    We emphasize that the random environment $r$ enters implicitly via $\sG_\infty$. 
\end{definition}

\begin{prop}[Proposition (2.17) of Ref. \cite{dario2023phase}]
    \label{prop:H-clean}
    Consider the clean model from  Definition \eqref{def:H-clean-multiT} with $r\equiv 1$.
    Then there exists finite $J_0^\mathrm{c}(d)<\infty$ and $J_0^{\mathrm{c}\prime}(d,n)<\infty$ such that if $J \ge J_0^\mathrm{c}(d)$ and $J' \ge J_0^{\mathrm{c}\prime}(d,n)$, then the correlation function
    \begin{equation}
        \bra \cos(\varphi_x)\cos(\varphi_y)\ket_{\dZ^d|_n,\ve,J,J'}^\mathrm{c}
    \end{equation}
    \begin{itemize}
        \item is bounded away from 0 as $\norm{x-y}\to \infty$ for $d\ge 3$
        \item is bounded below by an algebraically decaying function in $\norm{x-y}$ as $\norm{x-y}\to \infty$ for $d=2$.
    \end{itemize}
\end{prop}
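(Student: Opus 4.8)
\emph{Proof sketch (reduction to the single-layer ordering).} The plan is to \textbf{decouple the two layers using correlation inequalities}, which turns the claim into the ordering of a single-layer XY model on the extended graph $\dZ^d|_n$. Up to an additive constant, $-\sH^{\mathrm c}_{\sG_L,h,J,J'}$ is a \emph{ferromagnetic} generalized plane-rotator Hamiltonian: writing $\cos^2\varphi_z=\tfrac12+\tfrac12\cos(2\vartheta^+_z-2\vartheta^-_z)$, every interaction is a cosine of an integer linear combination of the angles $\{\vartheta^\pm_w\}$ with a strictly positive coefficient ($2J$, $2J'$, or $J/(2\ve)$), so the Ginibre inequalities \cite{ginibre1970general,tokushige2024graphical} apply. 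In particular $\bra AB\ket\ge\bra A\ket\bra B\ket$ for finite products $A,B$ of such cosines, and differentiating the finite-volume expectation of $A=\cos\varphi_x\cos\varphi_y$ in the interlayer coupling $1/\ve$ — which multiplies $J\sum_{z\in\sI_L}\cos^2\varphi_z=\tfrac J2\sum_{z\in\sI_L}(1+\cos 2\varphi_z)$ in $-\sH^{\mathrm c}$ — gives
\begin{equation}\label{eq:plan-ginibre-monotone}
\frac{\partial}{\partial(1/\ve)}\bra\cos\varphi_x\cos\varphi_y\ket_{\sG_L,\ve,J,J'}
=\frac{J}{2}\sum_{z\in\sI_L}\bigl(\bra\cos\varphi_x\cos\varphi_y\cos 2\varphi_z\ket-\bra\cos\varphi_x\cos\varphi_y\ket\bra\cos 2\varphi_z\ket\bigr)\ge 0 .
\end{equation}
Hence the correlation is nondecreasing in $1/\ve$, so it is bounded below by its value at $1/\ve=0$, where the two layers are independent XY models on $\Lambda_L^\pm\subseteq\dZ^d|_n$ with couplings $\ge 2\min(J,J')$ on every edge.

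Next I would evaluate this decoupled lower bound. Put $a=\vartheta^+_x-\vartheta^-_x$ and $b=\vartheta^+_y-\vartheta^-_y$, so $\cos\varphi_x\cos\varphi_y=\tfrac12\cos(a+b)+\tfrac12\cos(a-b)$. At $1/\ve=0$, with free boundary conditions, the measure is invariant under the rotation $\vartheta^+\mapsto\vartheta^++c$ of the first layer alone, which sends $a+b\mapsto a+b+2c$; therefore $\bra\cos(a+b)\ket=0$. It is also invariant under $\vartheta^+\mapsto-\vartheta^+$, which forces $\bra\sin(\vartheta^+_x-\vartheta^+_y)\ket=0$; expanding $\cos(a-b)$ and using independence of the two layers then yields
\begin{equation}\label{eq:plan-decoupled}
\bra\cos\varphi_x\cos\varphi_y\ket_{\sG_L,\,\ve=\infty,\,J,J'}
=\tfrac12\,\bra\cos(\vartheta_x-\vartheta_y)\ket^{\mathrm{XY}}_{\Lambda_L}\;\bra\cos(\vartheta_x-\vartheta_y)\ket^{\mathrm{XY}}_{\Lambda_L},
\end{equation}
the right-hand side being the square of the single-layer XY two-point function on $\Lambda_L\subseteq\dZ^d|_n$ (nonnegative by Ginibre's first inequality). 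Performing \eqref{eq:plan-ginibre-monotone}--\eqref{eq:plan-decoupled} in finite volume and letting $L\to\infty$ — all correlations here are monotone in $L$ by Ginibre, so the limits exist and the inequalities pass to the limit — reduces the proposition to the single assertion that the XY model on $\dZ^d|_n$ (with coupling $J$ on edges incident to $n\dZ^d$ and $J'$ elsewhere) has, for $J\ge J_0^{\mathrm c}(d)$ and $J'\ge J_0^{\mathrm c\prime}(d,n)$, (i) long-range order $\bra\cos(\vartheta_x-\vartheta_y)\ket^{\mathrm{XY}}_{\dZ^d|_n}\ge c(d,n)>0$ when $d\ge 3$, and (ii) quasi-long-range order $\bra\cos(\vartheta_x-\vartheta_y)\ket^{\mathrm{XY}}_{\dZ^d|_n}\ge C(d,n)\,\norm{x-y}^{-\eta}$ when $d=2$. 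Squaring gives the two alternatives in the statement.

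Establishing (i)--(ii) is where the genuine work lies. The key structural fact is that $\dZ^d|_n$ is reflection positive: as a graph it is invariant under reflection through every hyperplane $\{x_i\in n\dZ\}$ and $\{x_i\in n\dZ+\tfrac n2\}$. One can therefore run the Fr\"ohlich--Simon--Spencer infrared bound on it to obtain LRO for $d\ge 3$, and the Fr\"ohlich--Spencer vortex estimate to obtain QLRO for $d=2$. Equivalently, and this is how I would organize the quantitative bookkeeping: a rigid length-$n$ XY segment with coupling $J'\gtrsim n$ has end-to-end two-point function bounded below by a constant, so coarse-graining each segment between consecutive sites of $n\dZ^d$ into a single effective bond yields, on the crossing sublattice $n\dZ^d\cong\dZ^d$, a model dominating a low-temperature XY model on $\dZ^d$, for which (i)--(ii) are classical. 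This also explains the split thresholds: $J\ge J_0^{\mathrm c}(d)$ makes the crossing sites rigid with no dependence on $n$, whereas keeping the length-$n$ segments rigid forces $J'\ge J_0^{\mathrm c\prime}(d,n)$ to grow with $n$. The main obstacle is carrying out this single-layer analysis rigorously and uniformly in $n$ — in particular the $d=2$ BKT lower bound on the non-standard graph $\dZ^d|_n$ — which is the technical heart of \cite{dario2023phase}. Granted (i)--(ii), the bilayer statement follows immediately from \eqref{eq:plan-ginibre-monotone}--\eqref{eq:plan-decoupled}.
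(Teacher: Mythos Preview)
Your proposal is correct and follows essentially the same approach as the paper: the paper's proof is the two-line observation that Ginibre's inequality lets one set $1/\ve=0$, whereupon the two layers decouple into independent XY models on $\dZ^d|_n$ and the claim reduces to Proposition~(2.17) of \cite{dario2023phase}. Your version simply spells out the decoupled identity \eqref{eq:plan-decoupled} and the monotonicity \eqref{eq:plan-ginibre-monotone} in more detail than the paper does.
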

\begin{proof}
    By Ginibre's inequality, we see that the correlation function at interaction strength $1/\ve$ is bounded below by that with interaction strength $1/h=0$.
    The latter corresponds to two independent XY models defined on $\dZ^d|_n$ and thus the statement follows from Proposition (2.17) of Ref. \cite{dario2023phase}.
\end{proof}


\subsection{General XY models and the Ginibre inequality}

The main tool we will need to relate the clean and disordered models defined above is the Ginibre inequality, which was already invoked to take thermodynamics limits.  The class of models which fall within the scope of this inequality is defined as follows.

\begin{definition}[General XY]
    \label{def:general-XY}
    Let $V_\infty$ be a collection of infinite sites.
    Let $\sM_\infty$ be the collection of integer-valued functions $m:V_\infty \to \dZ$ which have finite support, denoted $\supp m$.
    A choice of (nonegative) \textbf{coupling constants} for the general $XY$ model is any function $J:\sM_\infty \to [0,\infty)$ such that for all finite subsets $V\subseteq V_\infty$, $J_m$ is nonzero for at most finitely many  $m\in \sM_\infty$ supported in $V$.
    The \textbf{general XY model} on finite $V$ is defined by the Gibbs measure on spin configurations $\theta:V\to (-\pi,\pi]$ via 
    \begin{equation}
        \mu_{V,J}[d\theta] = \frac{1}{\sZ_{V,J}} e^{-\sH_{V,J}(\theta)} \prod_{x\in V} d\theta_x
    \end{equation}
    where $\sZ_{V,J}$ is the partition function normalizing the probability measure, $d\theta_x$ are the uniform measure on $(-\pi,\pi]$ and $\sH_{V,J}$ is the (finite-volume) Hamiltonian given by 
    \begin{equation}
        \sH_{V,J}(\theta) = -\sum_{m\in \sM_\infty\atop{\supp{m} \subseteq V}} J_m \cos (m\theta), \quad m\theta \equiv \sum_{x\in V} m_x \theta_x
    \end{equation}
\end{definition}
Both the disordered model in Definition \eqref{def:H-strong} and the clean model in Definition \eqref{def:H-clean}, \eqref{def:H-clean-multiT} are general XY models. 

For any pair $J^1, J^2$ of nonnegative couplings for the general XY model,  let us say $J^1 \leq J^2$ if $J^1_m\leq J^2_m$ for all $m\in \mathcal M_{\infty}.$

\begin{theorem}[The Ginibre inequality]
\label{T:Gin}
For any $m \in \mathcal M_{\infty}$ and any $V\subseteq V_{\infty}$ finite,
\begin{equation}
        \bra \cos(m\theta) \ket_{V,J^1} \le \bra \cos(m\theta)\ket_{V,J^2}.
\end{equation}
\end{theorem}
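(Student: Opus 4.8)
The plan is to prove the Ginibre inequality for the general $XY$ model by the classical duplicate-variables (or ``replica'') trick, following Ginibre's original argument as adapted to continuous plane-rotator spins. The key structural fact is that for a single $\mathbb S^1$ spin, the moments $\langle \prod_x \cos(k_x\theta_x - \psi_x)\rangle$ with respect to the uniform a priori measure are nonnegative whenever the $\psi_x$ are chosen appropriately; more precisely, one works with the two spin systems $\theta^1, \theta^2 : V \to (-\pi,\pi]$ and passes to sum and difference variables $\xi_x = (\theta^1_x + \theta^2_x)/2$, $\eta_x = (\theta^1_x - \theta^2_x)/2$. The basic lemma to establish first is: for any finite family of integer vectors $m^{(1)},\dots,m^{(r)} \in \mathcal M_\infty$ and any signs, the quantity
\begin{equation}
    \Big\langle \prod_{j=1}^r \big(\cos(m^{(j)}\theta^1) \pm \cos(m^{(j)}\theta^2)\big)\Big\rangle_{0} \ge 0,
\end{equation}
where $\langle\cdot\rangle_0$ denotes expectation against the product of uniform measures (no interaction). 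This reduces, after expanding the products and using the identity $\cos A \cos B = \tfrac12(\cos(A-B)+\cos(A+B))$ together with the change to $\xi,\eta$ variables, to checking that integrals of the form $\int \prod_x \cos(a_x \xi_x)\cos(b_x \eta_x)\,d\xi\,d\eta$ factorize over $x$ into nonnegative terms (each one-dimensional integral $\int_{-\pi}^{\pi}\cos(a\xi)\,d\xi$ is $2\pi$ if $a=0$ and $0$ otherwise). This is the combinatorial heart of the matter and the step I expect to be the main obstacle to write cleanly — the sign bookkeeping when expanding products of sums of cosines is where Ginibre-type proofs get delicate, and one must be careful that the ``$\pm$'' structure is preserved under the sum-to-product identities.

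Granting the lemma, the main inequality follows by a standard interpolation/differentiation argument. Write $J^2 = J^1 + K$ with $K \ge 0$, set $J(t) = J^1 + tK$ for $t \in [0,1]$, and consider on the doubled system the interaction $\sH_{V,J^1}(\theta^1) + \sH_{V,J^1}(\theta^2) + t\sum_m K_m \cos(m\theta^2)$ — that is, put the extra coupling only on the second replica. Then
\begin{equation}
    \langle \cos(m\theta)\rangle_{V,J^2} - \langle \cos(m\theta)\rangle_{V,J^1}
\end{equation}
is recovered by integrating $\tfrac{d}{dt}\langle \tfrac12(\cos(m\theta^1)+\cos(m\theta^2))\rangle_{t}$ over $t \in [0,1]$ after a symmetrization between the two replicas; the $t$-derivative brings down a factor $\sum_{m'} K_{m'}\cos(m'\theta^2) \ge$-weighted and produces, upon expanding the Gibbs weight $e^{-\sH}$ as an absolutely convergent power series in all the couplings (legitimate since only finitely many $J_m$ are nonzero on $V$ and each $|\cos|\le 1$), a sum of terms each of which is a nonnegative coefficient times a correlation of the type controlled by the lemma. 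Hence the derivative is $\ge 0$ and the inequality follows.

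A few technical points I would address along the way: (i) expansion of $\exp(\sum_m J_m \cos(m\theta))$ as $\sum \prod_m \frac{J_m^{n_m}}{n_m!}\cos(m\theta)^{n_m}$ converges uniformly because the sum over $m$ supported in $V$ is finite, so all rearrangements and term-by-term integrations are justified; (ii) each $\cos(m\theta)^{n_m}$ is itself a nonnegative-coefficient combination of $\cos(k\theta)$'s with $k$ an integer multiple-structure of $m$, so the whole doubled partition function and all numerators are nonnegative combinations of the quantities in the lemma — in particular $\sZ_{V,J}>0$, so dividing is harmless; (iii) the thermodynamic-limit statement implicit elsewhere in the paper is not needed here since the theorem is stated for finite $V$. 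The only genuinely subtle part remains the sign lemma, so I would isolate it as a self-contained sublemma and prove it by the $\xi,\eta$ substitution together with induction on $r$.
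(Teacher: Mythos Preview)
The paper does not actually prove Theorem~\ref{T:Gin}; it is stated as a classical result and attributed to Ginibre via the references \cite{ginibre1970general,tokushige2024graphical}. So there is no proof in the paper to compare against, and your proposal amounts to supplying the standard argument the paper takes for granted.

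Your outline is essentially the correct Ginibre strategy: the duplicate-variables lemma for the free measure (products of $\cos(m^{(j)}\theta^1)\pm\cos(m^{(j)}\theta^2)$ have nonnegative free expectation), followed by expansion of the Gibbs weight and a differentiation/interpolation in the couplings. That said, your interpolation step is set up in a way that does not quite close. You place the extra coupling $tK$ on the \emph{second replica only}, so differentiating brings down a bare factor $\cos(m'\theta^2)$, which is not a nonnegative combination of $(\cos(m'\theta^1)\pm\cos(m'\theta^2))$ terms --- writing $2\cos(m'\theta^2)=(\cos(m'\theta^1)+\cos(m'\theta^2))-(\cos(m'\theta^1)-\cos(m'\theta^2))$ introduces a minus sign that spoils the positivity argument. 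The clean route is the symmetric one: differentiate the \emph{single-copy} expectation $\langle\cos(m\theta)\rangle_{V,J(t)}$ in $t$, obtaining the covariance $\sum_{m'}K_{m'}\bigl(\langle\cos(m\theta)\cos(m'\theta)\rangle-\langle\cos(m\theta)\rangle\langle\cos(m'\theta)\rangle\bigr)$, and then rewrite each covariance via the \emph{symmetric} doubled system (both replicas at $J(t)$) as $\tfrac12\langle(\cos(m\theta^1)-\cos(m\theta^2))(\cos(m'\theta^1)-\cos(m'\theta^2))\rangle_{J(t)\otimes J(t)}$. Now the Gibbs weight is $\exp\bigl(\sum_{m''}J(t)_{m''}(\cos(m''\theta^1)+\cos(m''\theta^2))\bigr)$, whose expansion produces only nonnegative multiples of $(\cos+\cos)$ factors, and your lemma applies directly. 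With that correction the argument goes through.
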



\subsection{Site Percolation}

We want to use the Ginibre inequality to compare  the disordered model $\sH^\mathrm{s}$ with the clean model $\sH^\mathrm{c}$ corresponding to $\Z^d|_{n}$ for $n$ large enough and  at a higher temperature.
Our procedure for constructing this embedding closely follows Ref. \cite{dario2023phase}. We collect here the percolation theoretic definitions required.
\begin{definition}
    A finite box $\Lambda \subseteq \dZ^d$ of length $L$ is \textbf{pre-good} with respect to occupied (vacant) sites of $\alpha$ in $\Lambda$ if there exists a unique cluster $C \subseteq \Lambda_L$ of occupied (vacant) sites which touches the $2d$ faces of box $\Lambda$ and the diameter of all other clusters of occupied (vacant) sites within $\Lambda$ are $\le L/100$.
\end{definition}
\begin{definition}
    A finite box $\Lambda \subseteq \dZ^d$ of length $L$ is \textbf{good} with respect to occupied (vacant) sites of $\alpha$ if $\Lambda$ is pre-good with respect to occupied (vacant) sites of $\alpha$ and all finite boxes $\Lambda'$ of linear length $L/10$ that intersect $\Lambda$ are pre-good with respect to occupied (vacant) sites of $\alpha$.
    {For simplicity, we say a finite box is \textbf{good} if it is good with respect to both occupied and vacant sites of $\alpha$.}
\end{definition}
\begin{remark}
    \label{rem:good-boxes}
    From the previous definition, it's straightforward to check that
    \begin{itemize}
        \item If $\Lambda,\Lambda'$ are adjacent good boxes with respect to the occupied (vacant) sites with the same length, then the unique clusters $C,C'$ are connected.
        \item If $\Lambda$ is a box of length $L$, then the number of boxes $\Lambda'$ with length $L/10$ that intersect $\Lambda$ and are pre-good is polynomial in $L$.
    \end{itemize}
\end{remark}
A classical result regarding these definitions is the following.
\begin{theorem}[\cite{pisztora1996surface,penrose1996large}]
    \label{thm:good-box}
    Let $p>p_c^\mathrm{site}(G_\infty)$ for {$G_\infty =\dZ^d$ or $\tilde{\dZ}^2$}. Then there exists constant $C,c>0$ depending on $p,d$ such that for any box $\Lambda\subseteq \dZ^d$ of length $L$,
    \begin{align}
        \dP_p [ \Lambda\text{ is not pre-good with respect to occupied sites}] &\le Ce^{-cL}, \\
        \dP_p [ \Lambda\text{ is not good with respect to occupied sites}]&\le Ce^{-cL}.
    \end{align}
\end{theorem}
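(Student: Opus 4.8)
The plan is to derive this from the static coarse-graining (``block'') technology for supercritical percolation \cite{pisztora1996surface,penrose1996large}. By translation invariance take $\Lambda=\Lambda_L$, and since the event involves only occupied sites it suffices to prove the two estimates for occupied sites (the vacant-site versions, needed when $p<1-p_c^{\mathrm{site}}$, follow verbatim with $p\mapsto 1-p$). I would first reduce \emph{good} to \emph{pre-good}: since $\Lambda$ fails to be good only if $\Lambda$ itself or one of the $O(L^d)$ boxes $\Lambda'$ of length $L/10$ meeting it (Remark \ref{rem:good-boxes}) fails to be pre-good, a union bound converts a bound $C e^{-cL}$ on $\dP_p[\Lambda\text{ not pre-good}]$ into $O(L^d)\,C e^{-cL/10}\le C'e^{-c'L}$. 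So the whole problem reduces to bounding $\dP_p[\Lambda_L\text{ not pre-good w.r.t.\ occupied sites}]$.

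For this I would fix a block scale $K=K(p,d)$, tile $\Lambda_L$ by sub-boxes of side $K$ indexed by a renormalized box $\widehat\Lambda\subseteq\dZ^d$ of side $\asymp L/K$, and call a block $i$ \textbf{nice} if inside $i$ together with its neighbouring blocks the occupied sites contain exactly one cluster of diameter $\ge K/10$, that cluster meets every face of $i$ and of each neighbour it enters, and every other occupied cluster meeting $i$ has diameter $<K/100$. Niceness of $i$ depends only on sites within distance $2K$ of block $i$, so $(\mathbf 1\{i\text{ nice}\})_i$ has finite range of dependence, and Pisztora's coarse-graining lemma gives $\dP_p[i\text{ nice}]\ge q(K)$ with $q(K)\uparrow 1$ as $K\to\infty$. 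In $d\ge 3$ this last fact rests on the Grimmett--Marstrand slab theorem, which ensures that $p>p_c^{\mathrm{site}}(\dZ^d)$ still percolates in a sufficiently thick slab; for $\tilde{\dZ}^2$ one is already in the 2D supercritical regime (e.g.\ by projecting to the triangular lattice), so RSW and planar duality apply directly. By the Liggett--Schonmann--Stacey domination theorem, choosing $K$ large enough makes the not-nice field $(\mathbf 1\{i\text{ not nice}\})_i$ stochastically dominated by a Bernoulli site field on $\dZ^d$ that is deep in the subcritical phase (for $*$-connectivity).

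Now I would transfer geometry. Adjacent nice blocks have connected dominant clusters, so on the event $\mathcal E_m$ that every $*$-connected component of not-nice blocks has fewer than $m$ blocks and the not-nice blocks do not separate any two opposite faces of $\widehat\Lambda$, the box $\Lambda_L$ carries a unique occupied cluster of diameter $\ge 2K$ --- the union of dominant clusters along a $*$-connected backbone of nice blocks --- it touches all $2d$ faces, and every other occupied cluster is confined to a bounded neighbourhood of a single not-nice $*$-component, hence has diameter at most $C_0 K m$. Thus $\Lambda_L$ is pre-good on $\mathcal E_m$ provided $C_0 K m\le L/100$, which we arrange by taking $m=\lfloor L/(100 C_0 K)\rfloor$. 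Finally $\dP_p[\mathcal E_m^c]$ is at most the probability of a not-nice $*$-component of size $\ge m$ plus the probability of a separating $*$-surface of not-nice blocks between opposite faces of $\widehat\Lambda$; both are $\le \mathrm{poly}(L)\,e^{-cm}$ by subcriticality of the not-nice field (the second via a Peierls count, a separating $*$-surface having $\ge (L/K)^{d-1}\ge L/K$ blocks), hence $\le\mathrm{poly}(L)\,e^{-c L/K}$. Absorbing the polynomial yields $\dP_p[\Lambda_L\text{ not pre-good}]\le Ce^{-c'L}$ with $c'=c'(p,d)$, since $K$ depends only on $(p,d)$.

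I expect the coarse-graining step to be the real obstacle: one must pin down the definition of ``nice block'' so that it is at once a local, high-probability event amenable to stochastic domination \emph{and} strong enough to force the deterministic implication ``$*$-backbone of nice blocks $\Rightarrow$ a unique macroscopic occupied cluster touching all faces, with all competitors trapped near small not-nice $*$-components.'' This is precisely the content of \cite{pisztora1996surface,penrose1996large}; the single genuinely nontrivial probabilistic input beyond bookkeeping is the Grimmett--Marstrand theorem in $d\ge 3$. The face-separation/duality lemma, the Peierls bound on $*$-clusters in a subcritical field, and the two final union bounds are all routine.
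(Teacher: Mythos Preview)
The paper does not supply its own proof of this theorem; it is stated as a citation of \cite{pisztora1996surface,penrose1996large} and used as a black box. Your proposal is therefore not to be compared against a proof in the paper but against the cited literature, and it is a faithful outline of the Pisztora coarse-graining argument: the reduction of ``good'' to ``pre-good'' by union bound, the block renormalization with Liggett--Schonmann--Stacey domination, and the Peierls/subcritical bound on not-nice $*$-components are all correct and standard.

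One small inaccuracy: for $\tilde{\dZ}^2$ you invoke ``RSW and planar duality,'' but $\tilde{\dZ}^2$ is explicitly non-planar (this is the whole point of introducing it), so planar duality is not available. The needed inputs in two dimensions --- that supercritical crossings occur with probability tending to $1$ and that subcritical clusters have exponentially decaying size --- follow instead from the sharpness of the phase transition (Menshikov/Aizenman--Barsky) together with a finite-size criterion, as handled in \cite{penrose1996large}. This does not affect the overall structure of your argument, only the justification of ``$q(K)\uparrow 1$'' in the 2D case.
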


To extend the proof of Ref. \cite{dario2023phase} to our scenario, we require the following additional definitions.
\begin{definition}
    Given $\alpha,$ let $C^+, C^- \subseteq \dZ^d$ be a pair of connected clusters of  occupied, respectively vacant, sites of random phase $\alpha$. 
    Then the clusters $C^\pm$ are \textbf{adjacent} if there exists edge $e=ij$ such that $i\in C^+,j\in C^-$ and write $C^+\sim C^-$ if such an edge exists. The set of edges incident to a given pair $C^+, C^-$ of occupied, respectively vacant, clusters is called the \textbf{interface} between $C^+, C^-$ and we denote the set of such edges by {$\{C^+\sim C^-\}$}.
\end{definition}

\begin{definition}
    A finite box $\Lambda \equiv [i]_L \subseteq \dZ^d$ of length $L$ is \textbf{optimal} if the box is good with respect to both occupied and vacant sites of $\alpha$, and the unique, box-spanning clusters $C^+, C^-$ are adjacent.
\end{definition}
Now let $G_{\infty}=\mathbb Z^d$ {or $\tilde{\dZ}^2$} and suppose $p_c^\mathrm{site}(G_\infty) < \min (p,1-p)$. We denote by $C_\infty^+, C_{\infty}^-$ the {a.s. defined} unique infinite clusters of occupied, respectively vacant, vertices in the site percolation.

\begin{lemma}
    \label{thm:optimal}
    If $p_c^\mathrm{site}(G_\infty) < \min (p,1-p)$,
    \begin{equation}
        \dP_p[C_\infty^+ \sim C_\infty^-]=1
    \end{equation}
   and the probability that any given edge $e$ is in the interface of $C^\pm_{\infty}$ is nonzero, i.e.,
    \begin{equation}
        \dP_p[e\in \{C_\infty^+\sim C_\infty^-\}] >0
    \end{equation}
  It follows that 
   \begin{equation}
    \lim_L    \dP_p[\Lambda _L \textit{ is optimal}] = 1.
    \end{equation}
\end{lemma}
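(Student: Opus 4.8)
The plan is to establish the three assertions in the order they are most naturally reached: first the local positivity $\dP_p[e\in\{C_\infty^+\sim C_\infty^-\}]>0$, then the a.s.\ identity $\dP_p[C_\infty^+\sim C_\infty^-]=1$ via a zero--one law, and finally the limit for optimal boxes by feeding these into Theorem \eqref{thm:good-box}. Throughout, $C_\infty^+$ and $C_\infty^-$ denote the a.s.\ unique infinite clusters of occupied, respectively vacant, sites (they exist because $\min(p,1-p)>p_c^{\mathrm{site}}(G_\infty)$, and uniqueness is \cite{burton1989density}), and $I$ denotes the random edge set $\{C_\infty^+\sim C_\infty^-\}$, i.e.\ the edges with one endpoint in $C_\infty^+$ and the other in $C_\infty^-$.

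For the positivity estimate --- the only step needing a genuine idea --- fix an edge $e=xy$. After a lattice symmetry we may assume $y=x+e_i$ for some coordinate direction $e_i$ (an edge of $\tilde{\dZ}^2$, whether axis or diagonal, still differs by $\pm1$ in at least one coordinate, so this costs nothing). Set $R=\{z\in V_\infty: z_i\ge x_i+1\}$, a half-space with $y\in R$ and $x\notin R$ whose complement $V_\infty\setminus R=\{z:z_i\le x_i\}$ is again a half-space. Let $E^+$ be the event that $x$ lies in an infinite cluster of occupied sites of the configuration restricted to the subgraph induced by $V_\infty\setminus R$, and $E^-$ the event that $y$ lies in an infinite cluster of vacant sites restricted to the subgraph induced by $R$. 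A half-space of $G_\infty$ percolates at any density strictly above $p_c^{\mathrm{site}}(G_\infty)$ (half-space/slab percolation, a standard fact from the Grimmett--Marstrand slab theory, which also underlies Theorem \eqref{thm:good-box}), so the restricted percolation on $V_\infty\setminus R$ a.s.\ has an infinite occupied cluster; then $\dP_p[E^+]>0$ by a routine finite-energy argument, and symmetrically $\dP_p[E^-]>0$ using $1-p>p_c^{\mathrm{site}}(G_\infty)$. Since $E^+$ is a function of $(\alpha_z)_{z\notin R}$, $E^-$ a function of $(\alpha_z)_{z\in R}$, and $\{V_\infty\setminus R, R\}$ partitions $V_\infty$, the events $E^+$ and $E^-$ are independent. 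On $E^+\cap E^-$ the vertex $x$ is occupied and in an infinite occupied cluster, hence $x\in C_\infty^+$ by uniqueness, and likewise $y\in C_\infty^-$; as $xy$ is an edge this gives $e\in I$. Hence $\dP_p[e\in I]\ge\dP_p[E^+]\,\dP_p[E^-]>0$.

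The a.s.\ identity is then immediate: $\{I\ne\emptyset\}=\{C_\infty^+\sim C_\infty^-\}$ is invariant under all lattice translations, which act ergodically on the i.i.d.\ field $\alpha$, so this event has probability $0$ or $1$; by the previous paragraph and a union bound over edges its probability is positive, hence $1$. For the optimal-box limit, let $G_L$ be the event that $\Lambda_L$ is good with respect to both occupied and vacant sites, so $\dP_p[G_L]\to1$ by Theorem \eqref{thm:good-box} (applied also to the polynomially many relevant sub-boxes, cf.\ Remark \eqref{rem:good-boxes}); and let $H_L$ be the event that $I$ contains an edge with both endpoints in $\Lambda_{\lfloor L/5\rfloor}$, so $\dP_p[H_L]\to1$ because $I\ne\emptyset$ a.s.\ and the events $H_L$ increase to $\{I\ne\emptyset\}$. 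It then suffices to show $G_L\cap H_L\subseteq\{\Lambda_L\text{ optimal}\}$. On $G_L\cap H_L$ pick an edge $\langle a,b\rangle$ of $I$ with $a\in C_\infty^+$, $b\in C_\infty^-$, both in $\Lambda_{\lfloor L/5\rfloor}$. Running an infinite self-avoiding occupied path out of $a$ and stopping at the last vertex before it exits $\Lambda_L$ exhibits, inside the occupied cluster $K_a$ of $a$ relative to $\Lambda_L$, a vertex of $\partial\Lambda_L$; hence $\mathrm{diam}(K_a)\ge 4L/5>L/100$, so by the definition of (pre-)good $K_a$ must be the unique box-spanning occupied cluster of $\Lambda_L$, i.e.\ $a$ lies on it; symmetrically $b$ lies on the unique box-spanning vacant cluster, so the edge $\langle a,b\rangle$ witnesses adjacency of the two box-spanning clusters and $\Lambda_L$ is optimal. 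Therefore $\dP_p[\Lambda_L\text{ optimal}]\ge\dP_p[G_L]+\dP_p[H_L]-1\to1$.

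The main obstacle is the positivity estimate: everything else is bookkeeping around Theorem \eqref{thm:good-box}, but that estimate needs the right decomposition of the problem, and the half-space splitting together with uniqueness of the infinite cluster is exactly what makes it go through in a couple of lines. The one other point deserving care is the diameter argument in the last paragraph --- the verification that a vertex of an infinite cluster lying deep inside a good box is forced onto the box-spanning cluster --- which amounts to reading the ``good'' definition carefully so that a cluster touching $\partial\Lambda_L$ cannot be one of the small leftover clusters.
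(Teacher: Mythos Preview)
Your proof is correct, and the optimal-box portion parallels the paper's closely (you use $\Lambda_{\lfloor L/5\rfloor}\subset\Lambda_L$ where the paper uses $\Lambda_N\subset\Lambda_{2N}$, but the diameter reasoning---that a vertex of $C_\infty^\pm$ deep inside a good box must lie on the box-spanning cluster---is the same). The positivity estimate $\dP_p[e\in\{C_\infty^+\sim C_\infty^-\}]>0$, however, is handled by a genuinely different mechanism. The paper conditions on the configuration \emph{outside} a large box: it records the sets $S_L^\pm\subset\partial\Lambda_L$ of boundary vertices connected to infinity through $\Lambda_{L-1}^c$ (a function of the exterior only), and then uses finite energy \emph{inside} $\Lambda_{L-1}$ to force disjoint occupied and vacant paths from the endpoints of $e$ to $S_L^+,S_L^-$. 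Your approach instead splits the lattice by a hyperplane separating the two endpoints of $e$, invokes half-space percolation in each piece, and obtains independence of $E^+,E^-$ for free from the disjointness of the vertex sets. Your route is slicker---no conditioning, no path-surgery---but it imports the equality of half-space and full-space critical points, which is standard for $\Z^d$, $d\ge3$ (Barsky--Grimmett--Newman, Grimmett--Marstrand) yet is not literally ``slab theory'' in two dimensions and, for $\tilde{\Z}^2$ specifically, is not quite off-the-shelf (it does follow from the box estimates behind Theorem~\ref{thm:good-box} via a renormalized Peierls argument in the half-plane, so this is a citation imprecision rather than a gap). The paper's version trades that elegance for self-containment: it needs only the a.s.\ existence of $C_\infty^\pm$ and finite energy, with no auxiliary percolation input.
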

\begin{proof}
{Let us restrict ourselves the case where $G_\infty =\dZ^d$; the case where $G_\infty =\tilde{\dZ}^2$ is similar and thus omitted.}
This is a then standard percolation theoretic argument.  By ergodicity, 
 \begin{equation}
        \dP_p[C_\infty^+ \sim C_\infty^-]\in \{0, 1\}.
    \end{equation}
Moreover $\dP_p[C_\infty^+ \sim C_\infty^-]=1$
if and only if, for all $e\in \mathbb Z^d,$ $\dP_p[e\in \{C_\infty^+\sim C_\infty^-\}] >0.$

To decide which, we follow the usual route. By translation invariance, we may assume the edge $e$ in question is the standard unit vector $e_1$, connecting the origin and $(1, 0, \cdots, 0).$  
To show $\dP_p[e_1\in \{C_\infty^+\sim C_\infty^-\}] >0$, we begin by observing that
by continuity of probabilities, if $p_c^\mathrm{site} < \min (p,1-p)$
\begin{equation}
    \lim_{L\to \infty} \dP_p[C^+_\infty, C^-_\infty  \cap \partial \Lambda_L \ne \varnothing]=1.
\end{equation}

Let $S^+_L, S^-_L$ denote the vertices of $\partial \Lambda_L$ which are connected to $\infty$ via occupied, vacant sites in $\Lambda_{L-1}^c$ of random phase $\alpha$, i.e., if $C_x^\pm (\Lambda^c_{L-1})$ denotes the sites connected in $\Lambda^c_{L-1}$ to $x$ via occupied/vacant sites of random phase $\alpha$, then
\begin{equation}
    S_L^\pm = \{x\in \partial \Lambda_L: |C_x^\pm (\Lambda^c_{L-1})|=\infty\}
\end{equation}
Hence,
\begin{equation}
\label{e:Pcond}
    \dP_p[e_1\in \{C_\infty^+\sim C_\infty^-\}]\geq\dE_p [ \dP_p[e_1 \xleftrightarrow[\Lambda_{L-1}]{+}S^+_L, e_1 \xleftrightarrow[\Lambda_{L-1}]{-}S^-_L|S^+_L, S^-_L ]].
\end{equation}
Given nonempty $S_L^\pm$ we can always choose disjoint paths $\gamma^{\pm}$ in $\Lambda_{L-1}$ from the endpoints of $e_1$ to $S_L^\pm$. 
Demanding that the sites of $\gamma^+$ are occupied and the sites of $\gamma^-$ are vacant, we see that there is a constant $c>0$ depending only on $p, L$ so that
\begin{equation}
    \dP_p[e_1 \xleftrightarrow[\Lambda_{L-1}]{+}S^+, e_1 \xleftrightarrow[\Lambda_{L-1}]{-}S^-|S^+_L, S^-_L ]\geq c {1\{S_L^+,S_L^-\ne \varnothing \}}.
\end{equation}
{Furthermore, since the event $\{C^+_\infty, C^-_\infty  \cap \partial \Lambda_L \ne \varnothing\} = \{S_L^+,S_L^-\ne \varnothing\}$ has nonzero probability}, inserting this bound into \eqref{e:Pcond} finishes the proof the the a.s. adjacency of $C^+_{\infty}, C^-_{\infty}$.

To prove the final claim, we now show that if $C^{\pm}_L$ denote the largest internally connected clusters in $\Lambda_L$ then
\begin{equation}
\dP[\Lambda_L \text{ is optimal}]=\dP[\Lambda_L \text{ is good}, C^{+}_L \sim C^{-}_L]=1. 
\end{equation}

Now, by continuity of probabilities  and then Theorem \ref{thm:good-box},
\begin{align}
\label{e:cnty}
1 = \dP_p[C_\infty^+ \sim C_\infty^-] &=\lim_{N} \dP[ C^{+}_\infty, C^{-}_\infty \text{ are adjacent in $\Lambda_N$}]\nonumber \\
 &=\lim_N \dP[ C^{+}_\infty, C^{-}_\infty \text{ are adjacent in $\Lambda_N$, $\Lambda_{2N}$ is good}].
\end{align}
But by definition of a box being good,
\begin{align}
&\{C^{+}_\infty, C^{-}_\infty \text{ are adjacent in $\Lambda_N$, $\Lambda_{2N}$ is good}\} \nonumber\\
&=\{ C^{+}_{2N}, C^{-}_{2N} \text{ are adjacent in $\Lambda_N$ and $\Lambda_{2N}$ is good}\}\cap  \{C^{+}_\infty \cap \Lambda_N,  C^{-}_\infty\cap \Lambda_N \neq \varnothing\}.
\end{align}
Since
\begin{equation}
\dP[C^{+}_\infty\cap \Lambda_N,  C^{-}_\infty\cap \Lambda_N \neq \varnothing] \rightarrow 1 \qquad \text{ as $N \rightarrow \infty,$}
\end{equation}
 we conclude from \eqref{e:cnty} that
\begin{equation}
    \lim _N \dP[C^{+}_{2N}, C^{-}_{2N} \text{ are adjacent in $\Lambda_N$ and $\Lambda_{2N}$ is good}]=1. 
\end{equation}

\end{proof}
\section{$\sH^\mathrm{s}$ is lower bounded by $\sH^\mathrm{c}$}

    Let $L\in \dN_+$ be even and set $n_L=2(2L+1)^d$.  Given a distinct pair $a,b\in (2L+1)\dZ^d,$
    let $e_a, e_b$ denote a fixed pair edges in $\dZ^d$  such that $e_a \in [a]_{L/2}, e_b \in [b]_{L/2}.$
   \newcommand{\emb}{\mathcal S}
    Let
    \begin{equation}
        E_{L,e_a, e_b}  = \{e_v \xleftrightarrow{s} e_v +\partial \Lambda_{L/2}, v\in \{a,b\}, s\in\{+, -\}\}
    \end{equation}
    denote the event that the endpoints of $e_a, e_b$ are connected by both occupied and vacant paths to the boundaries of their respective boxes (arbitrarily choosing to center the boxes at one of the endpoints of $e_v$).
    
    Define the bijective map $\emb=\emb_L:  (2L+1 )\dZ^d \to n_L|\dZ^d$ by $\emb(x)= \frac{n_L}{(2L+1)} x.$ 
    For $\alpha\in E_{L,e_a,e_b }$, define the random environment $r_\alpha \equiv r_{L,\alpha}:n_L|\dZ^d \to \{0,1\}$ via
    \begin{equation}
        (r_\alpha)_{\emb(x)} = 1\{[x]_L \text{ is optimal with respect to } \alpha\}
    \end{equation}

    The heart of the argument for Theorem \eqref{thm:infinite-stability} and \eqref{thm:infinite-quasi-stability} lies in the following theorem.
    \begin{theorem}[$\sH^\mathrm{s}$ is lower bounded by $\sH^\mathrm{c}$]
        \label{thm:lower-bound}
        On the event $E_{L, e_a, e_b},$
        \begin{equation}
            \bra \cos \nabla_{\be_a}\theta \cos \nabla_{\be_b}\theta\ket^\mathrm{s}_{\dZ^d,\alpha,\ve,\beta} 
            \ge \bra \cos \varphi_{\emb (a)} \cos \varphi_{\emb (b)} \ket^\mathrm{c}_{\dZ^d|_{n_L}^{r_{L,\alpha}},\ve,\beta/(2d)} 
        \end{equation}
        where the gradient $\nabla_{\be_v}$ is taken using the orientation $\be_v$ from vacant to occupied vertex. 
    \end{theorem}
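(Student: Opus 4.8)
The plan is to obtain the bound from three successive applications of the Ginibre inequality (Theorem~\ref{T:Gin}), routed through an auxiliary finite graph $\hat G\subseteq\dZ^d$ that is simultaneously a coupling reduction of $\sH^\mathrm{s}$ and the image of the clean model on $\dZ^d|_{n_L}^{r_{L,\alpha}}$ under a vertex identification; the temperature factor $2d$ is exactly the slack needed to absorb the multiplicity of that identification. I would carry everything out in a finite box $\Lambda_N$ together with the matching finite truncation of $\dZ^d|_{n_L}^{r_{L,\alpha}}$, and send $N\to\infty$ only at the end, since every Gibbs state involved is a free-boundary state whose cosine correlations increase monotonically to their infinite-volume values by Ginibre. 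One may assume $[a]_L$ and $[b]_L$ are both optimal (and $e_a,e_b$ are then interface edges between their spanning clusters, a point we suppress, following Ref.~\cite{dario2023phase}): otherwise $\emb(a)$ or $\emb(b)$ is deleted from $\dZ^d|_{n_L}^{r_{L,\alpha}}$, the right-hand side is $0$, and the left-hand side is nonnegative because
\begin{equation}
\cos\nabla_{\be_a}\theta\,\cos\nabla_{\be_b}\theta=\tfrac12\cos(\nabla_{\be_a}\theta+\nabla_{\be_b}\theta)+\tfrac12\cos(\nabla_{\be_a}\theta-\nabla_{\be_b}\theta)
\end{equation}
is a nonnegative combination of cosines of integer linear forms, each with nonnegative Ginibre expectation.

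\emph{Step 1: reduce $\sH^\mathrm{s}$.} Expanding $(\nabla^\alpha_x\cdot\cos\nabla\theta)^2$ and repeatedly using $\cos A\cos B=\tfrac12\cos(A+B)+\tfrac12\cos(A-B)$ presents $\sH^\mathrm{s}$ as a general XY Hamiltonian (Definition~\ref{def:general-XY}) with nonnegative couplings: in $e^{-\beta\sH^\mathrm{s}}$ every internal edge $e$ gets coupling $2\beta$ for $\cos\nabla_{\be}\theta$, every interface edge $e$ gets coupling $\tfrac{\beta}{2\ve}$ for $\cos(2\nabla_{\be}\theta)$, plus nonnegative cross couplings for $\cos(\nabla_{\be}\theta\pm\nabla_{\be'}\theta)$. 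I would then discard all cross couplings, all interface-edge couplings except those of one chosen interface edge $e_x$ per optimal box $[x]_L$ (taking $e_a,e_b$ for $x=a,b$), and all internal-edge couplings except those on the following paths: for each pair of adjacent optimal boxes $[x]_L,[x']_L$, a self-avoiding occupied and a self-avoiding vacant path joining $e_x$ to $e_{x'}$ \emph{inside} $[x]_L\cup[x']_L$. Such paths exist by the good-box machinery of Ref.~\cite{dario2023phase} (cf.\ Remark~\ref{rem:good-boxes}): spanning occupied (resp.\ vacant) clusters of adjacent good boxes are connected within the two-box union, and on $E_{L,e_a,e_b}$ the endpoints of $e_a$ reach $\partial\Lambda_{L/2}$, hence sit in clusters of diameter $>L/100$, hence in the spanning clusters of $[a]_L$ (similarly $e_b$). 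Writing $\hat G$ for the resulting graph and $J^{\mathrm{dis}}$ for the surviving couplings, Ginibre gives
\begin{equation}
\bra\cos\nabla_{\be_a}\theta\,\cos\nabla_{\be_b}\theta\ket^\mathrm{s}_{\dZ^d,\alpha,\ve,\beta}\;\ge\;\bra\cos\nabla_{\be_a}\theta\,\cos\nabla_{\be_b}\theta\ket_{\hat G,J^{\mathrm{dis}}},
\end{equation}
with coupling $2\beta$ for $\cos\nabla_{\be}\theta$ on every internal edge of $\hat G$ and $\tfrac{\beta}{2\ve}$ for $\cos(2\nabla_{\be_x}\theta)$ on every interface edge $e_x$. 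Two combinatorial facts to record from the construction: each path uses at most $n_L-1=2(2L+1)^d-1$ edges (it is self-avoiding inside two boxes of $2(2L+1)^d$ vertices), and an edge of $\hat G$ internal to a single box lies on at most $2d$ of these paths — one per optimal neighbour of that box — while an edge joining two distinct boxes lies on exactly one.

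\emph{Step 2: realize the clean model as an identification of $\hat G$, then compare.} Under $x\leftrightarrow\emb(x)$, the graph $\dZ^d|_{n_L}^{r_{L,\alpha}}$ has one crossing point (with interlayer edge $\{\emb(x)^+,\emb(x)^-\}$) per optimal box and one length-$n_L$ line in each layer between adjacent optimal crossings. I would define $q$ on its vertices by sending $\emb(x)^+,\emb(x)^-$ to the occupied, respectively vacant, endpoint of $e_x$, and the layer-$+$ line $\emb(x)^+=v_0,\dots,v_{n_L}=\emb(x')^+$ onto the occupied path $w_0,\dots,w_\ell$ of Step~1 by $v_j\mapsto w_j$ for $j\le\ell$ and $v_j\mapsto w_\ell$ for $j\ge\ell$ (analogously in layer $-$, and sending a stub abutting a deleted crossing to the surviving endpoint). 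Adjoining to the clean Gibbs measure at inverse temperature $\beta/(2d)$ bonds of coupling $K\to\infty$ between any two vertices sharing a $q$-image realizes, in the limit, a general XY model on $\hat G$; by Ginibre this cannot decrease any cosine correlation, and since $\cos\varphi_{\emb(a)}\cos\varphi_{\emb(b)}$ is constant on $q$-fibres and $q$ carries it to $\cos\nabla_{\be_a}\theta\,\cos\nabla_{\be_b}\theta$ (as $\varphi_{\emb(v)}=\vartheta^+_{\emb(v)}-\vartheta^-_{\emb(v)}\mapsto\pm\nabla_{\be_v}\theta$), one gets
\begin{equation}
\bra\cos\varphi_{\emb(a)}\cos\varphi_{\emb(b)}\ket^\mathrm{c}_{\dZ^d|_{n_L}^{r_{L,\alpha}},\ve,\beta/(2d)}\;\le\;\bra\cos\nabla_{\be_a}\theta\,\cos\nabla_{\be_b}\theta\ket_{\hat G,J^{\mathrm{cl}}}.
\end{equation}
The identified couplings $J^{\mathrm{cl}}$ on $\hat G$ are: on each internal edge, the sum over the line-edges mapped there of $\tfrac{\beta}{2d}\cdot 2=\tfrac{\beta}{d}$, hence at most $2d\cdot\tfrac{\beta}{d}=2\beta$ by the multiplicity bound of Step~1 (collapsed line-edges merely add constants); and on each interface edge $e_x$, the coupling $\tfrac{\beta}{2d}\cdot\tfrac1\ve\cdot\tfrac12=\tfrac{\beta}{4d\ve}$ for $\cos(2\nabla_{\be_x}\theta)$ from the single interlayer edge mapped there, with no collisions. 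Thus $J^{\mathrm{cl}}\le J^{\mathrm{dis}}$ coupling-by-coupling — $2\beta$ against at most $2\beta$ internally (exactly where the $2d$ is spent), $\tfrac{\beta}{2\ve}\ge\tfrac{\beta}{4d\ve}$ on interfaces — and a third use of Ginibre gives $\bra\cos\nabla_{\be_a}\theta\,\cos\nabla_{\be_b}\theta\ket_{\hat G,J^{\mathrm{cl}}}\le\bra\cos\nabla_{\be_a}\theta\,\cos\nabla_{\be_b}\theta\ket_{\hat G,J^{\mathrm{dis}}}$. Concatenating the three displays and letting $N\to\infty$ finishes the proof.

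\emph{Where the difficulty lies.} The algebra — expanding squares, checking membership in the general XY class, comparing couplings — is routine; the real work is the percolation geometry of Step~1: producing, for adjacent optimal boxes, connecting occupied/vacant paths that are at once short enough ($\le n_L-1$ edges) and spread out enough (internal multiplicity $\le 2d$) for the budget $2\beta=2d\cdot\tfrac{\beta}{d}$ to balance exactly. This is precisely what the ``good box'' definition (and the choice $n_L=2(2L+1)^d$) is designed to deliver, following Ref.~\cite{dario2023phase}. One must also check the soft points that an infinite-coupling limit of bonds implements the vertex identification with convergent correlations, and that the free-boundary finite-volume states converge to the infinite-volume ones; both are standard consequences of Ginibre.
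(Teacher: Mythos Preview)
Your proposal is correct and reaches the same conclusion through the same ingredients (Ginibre, optimal-box paths, the $2d$ multiplicity bound, the choice $n_L=2(2L+1)^d$), but the organization genuinely differs from the paper's. The paper works entirely from the disordered side: after restricting $\sH^\mathrm{s}$ to the subgraph $\sG_\infty$ of connecting paths, it \emph{expands} that graph to match $\dZ^d|_{n_L}^{r_{L,\alpha}}$ --- splitting each edge into $2d$ parallel copies of weight $\beta/(2d)$, routing the overlapping paths onto distinct copies, duplicating vertices (with infinite then zero vertical bonds) to make the paths vertex-disjoint, and finally padding each path out to length exactly $n_L$ by further vertex duplication. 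Each of these steps is a Ginibre decrease, and the terminal graph is literally (the marginal of) the clean model. You instead work from both ends toward the common subgraph $\hat G$: reduce $\sH^\mathrm{s}$ down to $(\hat G,J^{\mathrm{dis}})$, \emph{compress} the clean model onto $\hat G$ via the vertex identification $q$ (adding infinite bonds within $q$-fibres is a Ginibre increase), and then compare the resulting pushed-forward couplings $J^{\mathrm{cl}}\le J^{\mathrm{dis}}$ edge-by-edge, the $2d$ multiplicity bound being exactly what makes $2d\cdot(\beta/d)\le 2\beta$ close. Your route avoids the multigraph and vertex-duplication gymnastics and makes the role of the factor $2d$ more transparent; the paper's route is more constructive, follows the template of Ref.~\cite{dario2023phase} verbatim, and yields an explicit graph isomorphism at the end rather than a coupling comparison. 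Both handle dangling stubs toward non-optimal boxes correctly (the paper integrates them out as free trees, you collapse them via $q$). Your treatment of the case where $[a]_L$ or $[b]_L$ fails to be optimal (RHS vanishes, LHS nonnegative by Ginibre) is a point the paper leaves implicit.
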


\begin{proof}
    \begin{figure}[ht]
        \centering
        \includegraphics[width=0.5\columnwidth]{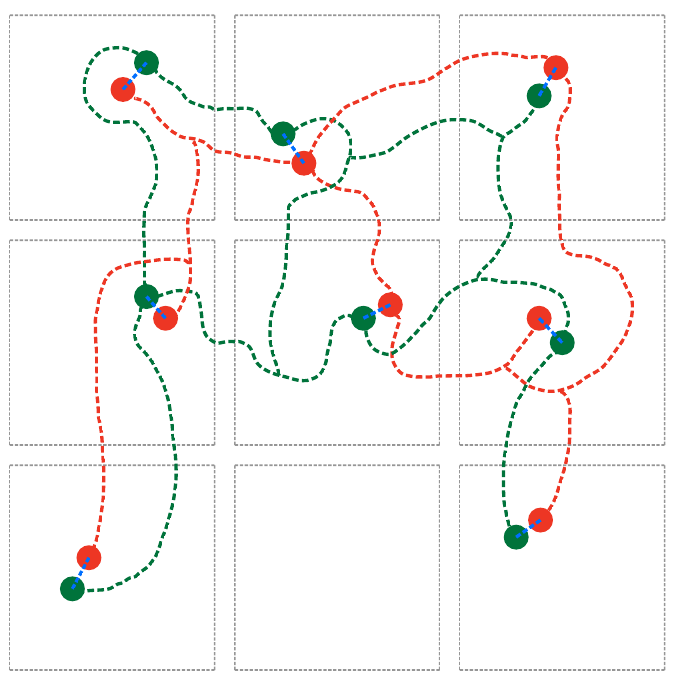}
        \caption{Schematic Transform. The figure denotes the schematic transform on $\tilde{\dZ}^2$. The grey dashed lines denote boxes $[x]_L,x\in (2L+1)\dZ^2$. 
        Boxes with green/red dots are optimal (i.e., the south box is not optimal), where green/red denotes occupied/vacant sites.
        In particular, the green/red dots in each box are adjacent in $\tilde{\dZ}^2$ with an edge denote in blue.
        The dashed green/red lines denote $\gamma_{xy}^\pm$ paths between adjacent boxes.
        In the schematic, the green/red lines seemingly intersect despite denoting distinct types of sites.
        This is due to the non-planarity of $\tilde{\dZ}^2$ and in actuality, they do not intersect at sites.
        }
    \label{fig:schematic-transform}
    \end{figure}
    
    \begin{figure}[ht]
    \centering
    \subfloat[\label{fig:separation-a}]{%
        \centering
        \includegraphics[width=0.3\columnwidth]{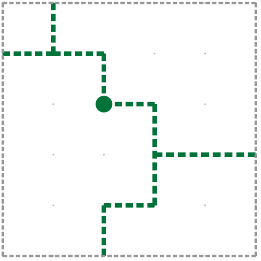}
    }
    \subfloat[\label{fig:separation-b}]{%
        \centering
        \includegraphics[width=0.3\columnwidth]{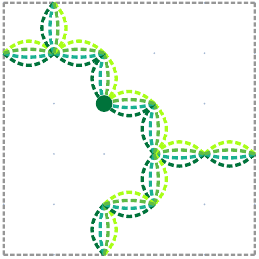}
    }
    \subfloat[\label{fig:separation-c}]{%
        \centering
        \includegraphics[width=0.3\columnwidth]{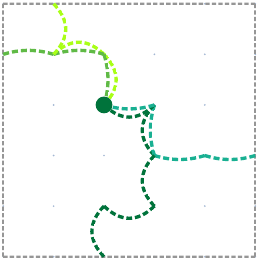}
    }\\
    \subfloat[\label{fig:separation-d}]{%
        \centering
        \includegraphics[width=0.3\columnwidth]{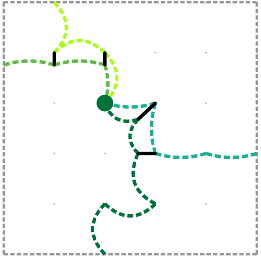}
    }
    \subfloat[\label{fig:separation-e}]{%
        \centering
        \includegraphics[width=0.3\columnwidth]{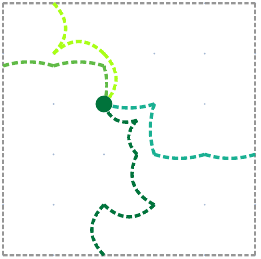}
    }
    \caption{Separation in $\tilde{\dZ}^2$. 
    (a) sketches overlapping paths $\gamma_{xy}^+$ consisting of occupied sites in $[x]_L$, where the coupling along each edge is $\beta$. 
    (b) sketches a multigraph setup equivalent to (a) where the coupling along each edge is $\beta/(2d)$ for $d=2$ in $\tilde{\dZ}^2$. The colors indicate the labeling of edges. 
    (c) sketches a choice of a color/label for each path $\gamma_{xy}^+$ so that the resulting paths are edge-disjoint.
    (d) sketches a multigraph setup equivalent to (c) where vertices are duplicated and the coupling along edges between duplicated vertices (black lines) is $\infty$.
    (e) sets the coupling in (d) along edges between duplicated vertices (black lines) to $0$.
    }
    \label{fig:separation}
    \end{figure}
    The construction is similar to that given in Sec. (5.2.2) of Ref. \cite{dario2023phase}, so we will be brief.

    Fix $\alpha$ and let $x\in (2L+1)\dZ^d$ be given such that  $[x]_L$ is optimal with respect to $\alpha$.  
    Let $C_x^+, C_x^-$ denote the unique crossing clusters within $[x]_L$ consisting of occupied, respectively vacant, sites.
    By definition of optimality, $C_x^+\sim C_x^-$ and we can find a pair of vertices $x^\pm \in C_x^\pm$ so that $x^+ x^-$ is an interfacial edge between $C_x^+, C_x^-$.  Choose a distinguished pair $x^+, x_-$ for each optimal box. If $E_{L, e_a, e_b}$ occurs, then for  $x\in  \{a, b\}$, choose $x^\pm =a^\pm,$ resp. $b^\pm.$ 
    
    Now if  $[x]_L,[y]_L$ are optimal, the fact that the boxes are good implies that the unique crossing clusters $C_x^\pm, C_y^\pm$ within of $[x]_L,[y]_L$ must be connected via occupied, resp. vacant, sites within $[x]_L\cup[y]_L$.
    Hence, there exist (simple) paths $\gamma_{xy}^\pm$ consisting of occupied/vacant sites in $[x]_L\cup [y]_L$ between $x^\pm$ and $y^\pm$.          The lengths of the paths $\gamma_{xy}^\pm$ are bounded above by $n_L$.

    Let $\sG_\infty$ be the infinite subgraph of $\dZ^d$ consisting of vertices and edges belonging to the paths $\gamma_{xy}^\pm,$ where $x, y$ ranges over all possible adjacent optimal boxes $[x]_L,[y]_L$. 
    Let $\sI_\infty$ denote the collection of the distinguished edges $x^+ x^-$ in $[x]_L$ for $x\in (2L+1)\dZ^d$ such that $[x]_L$ is optimal, (see Fig. \ref{fig:schematic-transform}). 
    In relation to Definition \eqref{def:H-clean}, $\sG_\infty$ is of the type considered there with bipartition $G^+_{\infty}, G^-_{\infty}$ defined by the subgraphs of $C^+_{\infty}, C^-_{\infty}$ respectively and ``interlayer'' edges $ \sI_\infty$ is a clean model.  
    Thus the Ginibre inequality, Theorem \eqref{T:Gin}, implies that
    \begin{equation}
    \label{e:Gin1}
        \bra \cos \nabla_{\be_a}\theta \cos \nabla_{\be_b}\theta\ket^\mathrm{s}_{\dZ^d,\alpha,\ve,\beta} 
        \ge \bra \cos \nabla_{\be_a}\theta \cos \nabla_{\be_b}\theta\ket^\mathrm{s}_{\sG_\infty,\alpha,\ve,\beta}
        = \bra \cos \varphi_{\be_a} \cos \varphi_{\be_b}\ket^\mathrm{c}_{\sG_\infty,\ve,\beta} .
    \end{equation}
    
    Note that $\sG_{\infty}$ is not the bilayer model on $\dZ^d|_{n_L}$ or even $\dZ^d|_{n_L}^r$ for some choice of $r$ due to the percolative imperfections in the choice of $\gamma_{xy}^\pm.$  The remainder of the proof uses Ginibre's inequality to further mold the $\sG_\infty$ into the relatively regular $\dZ^d|_{n_L}^{r_{L,\alpha}}$.   
    
    There are a few sources of irregularity to deal with.  First, for $[y]_L,[z]_L$ adjacent to $[x]_L$, the pair of paths $\gamma_{xy}^+$ and $\gamma_{xz}^+$ may overlap away from $x^{+}$ (and similarly for $G^-$).  
    Given $x$, consider the set of $y$ such that $[y]_L, [x]_L$ are adjacent and optimal. 
    There are at most $2d$ such adjacent boxes. Moreover, as $y$ varies, the edges of $\gamma_{xy}^+$ can only overlap within $[x]_L.$ 
    This observation allows us to modify the graphs within each box independently.
    
    To resolve overlaps within a fixed box, one can perform the operations described in Sec. (5.2.2) of Ref. \cite{dario2023phase} to separate the paths.  These operations are summarized in Fig. \ref{fig:separation}.  
    First, for each edge of $\mathcal G_\infty$ within $[x]_L,$ we may split the coupling strength of the edge, $\beta,$ as $2d \times \frac{\beta}{2d}.$ 
    We then replace $\sG_\infty \cap [x]_L$ with a weighted multi-graph in which each edge $e\in \sG_\infty \cap [x]_L$ with weight $\beta$ gets replaced by $2d$ edges $e_1, \cdots, e_{2d},$ each having edge weight $\beta/2d.$  
    Let $\sM_\infty$ denote the resulting multi-graph.  
    We now choose, arbitrary paths $\tilde{\gamma}_{xy}^{\pm}$ in $\sM_\infty$ which project to $\gamma_{xy}^{\pm}$ in $\sG_\infty$ and are edge disjoint in $\sM_\infty$.
    Relabeling as necessary, we may assume each path $\tilde{\gamma}_{xy}^{\pm}$ uses edges having the same index.
    
    To render the paths $\tilde{\gamma}_{xy}^{\pm}$ \textit{vertex} disjoint, 
    replace each vertex $v\in \sM_\infty\backslash \{x^+, x^-: x\in n_L|\mathbb Z^d\}$  with $2d$ copies $v_1, \cdots v_{2d}$ and add vertical edges between $v_i, v_{i+1},$ having weight $\infty.$ We regard $\tilde{\gamma}_{xy}^{\pm}$ as passing through vertices with index $i$ if its edges use the index $i$ copies.  
    Let $\sG_\infty^0$ denote the induced subgraph obtained by retaining only the vertices $\{x^+, x^-: x\in n_L|\mathbb Z^d\},$ the corresponding interfacial edges $x^+x^-,$
    and the indexed edges and vertices corresponding to  $\tilde{\gamma}_{xy}^{\pm}.$  Finally reduce weight of any remaining vertical edges to $0.$  
    
    Note that all edges of $\mathcal G_\infty^0$ have weight $\beta/2d.$  
    Moreover, for every pair of adjacent optimal boxes $[x]_L,[y]_L,$ the vertices $x^{\pm}$ and $y^{\pm}$ are connected by copies of $\tilde{\gamma}^{\pm}_{xy}$ which are mutually edge disjoint, as well as mutually vertex disjoint except at the endpoint $x^{\pm}$ and $y^{\pm}.$
    By the Ginibre inequality, it then follows once again that
    \begin{equation}
    \label{e:Gin2}
        \bra \cos \varphi_{\be_a} \cos \varphi_{\be_b} \ket^\mathrm{c}_{\sG_\infty,\ve,\beta} 
        \ge \bra \cos \varphi_{\be_a} \cos \varphi_{\be_b} \ket^\mathrm{c}_{\sG_\infty^0,\ve,\beta/2d}.
    \end{equation}
    \begin{figure}
    \centering
    \subfloat[\label{fig:duplication-a}]{%
        \centering
        \includegraphics[width=0.3\columnwidth]{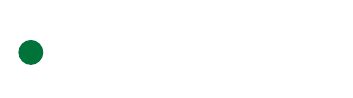}
    }
    \subfloat[\label{fig:duplication-b}]{%
        \centering
        \includegraphics[width=0.3\columnwidth]{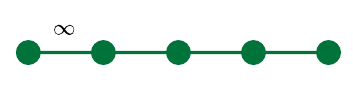}
    }
    \subfloat[\label{fig:duplication-c}]{%
        \centering
        \includegraphics[width=0.3\columnwidth]{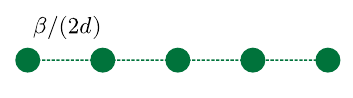}
    }
    \caption{Duplication. (a) sketches a single site possibly interacting with other sites (not drawn). (b) sketches a setup equivalent to (a) where the horiztonal edges have $\infty$ coupling. (c) sketches the lower bound (in the sense of Ginibre) of (c) by setting the edge coupling to $\beta/(2d)$.
    }
    \label{fig:duplication}
    \end{figure}
    Next we deal with irregularity in the lengths of $\tilde{\gamma}^{\pm}_{xy}.$ Note that the path separation procedure does not change the length of each path and thus separated paths have length $\le n_L$. By duplicating spins and lowering the coupling from $\infty$ to $\beta/(2d)$ as above (and as sketched in Fig. \ref{fig:duplication}), we can add vertices in series as necessary so as to assume that all separated paths have length exactly $n_L$ and all edges have weight $\beta/(2d)$.  
    Denote the resulting graph by $\sG_\infty^1$.

    Finally, observe that the Gibbs measure $\mu_{\sG_\infty^1,\ve,\beta/2d}^\mathrm{c}$ is the marginal distribution of the clean model $\mu_{\dZ|_{n_L}^{r_{L,\alpha}},\ve,\beta/2d}^\mathrm{c}$: 
    the vertices on ``dangling end paths" corresponding to optimal boxes $[x]_L$ and neighboring non-optimal boxes $[y]_L$ as well as pairs of neighboring non-optimal boxes having been integrated out.
    After this final identification, we have created the clean Hamiltonian on $\dZ^d|_{n_L}^{r_{L,\alpha}}$ with inverse temperatures $\beta/(2d)$, interlayer interaction $1/h$ and random environment $r_{L,\alpha}$.  
    By the Ginibre inequality,  
    \begin{equation}
    \label{e:Gin3}
        \bra \cos \varphi_{\be_a} \cos \varphi_{\be_b} \ket^\mathrm{c}_{\sG_\infty^0,\ve,\beta/2d}\ge
        \bra \cos \varphi_{\be_a} \cos \varphi_{\be_b} \ket^\mathrm{c}_{\sG_\infty^1,\ve,\beta/2d} 
        = \bra \cos \varphi_{\emb (a)} \cos \varphi_{\emb(b) }\ket^\mathrm{c}_{\dZ|_{n_L}^{r_{L,\alpha}},\ve,\beta/2d}.
    \end{equation}
    The statement of the Theorem follows by combining \eqref{e:Gin1}, \eqref{e:Gin2}, \eqref{e:Gin3}. 
   
\end{proof}

\begin{remark}
    It's worth mentioning that another route to controlling $\sH^\mathrm{s}$ would be to bound the Hamiltonian below by setting $1/h=0.$  This comparison falls within those for which Ginibre's inequality applies.   Then, we could apply Theorem \eqref{thm:dario-main} directly to argue for LRO/QLRO on the infinite clusters $C^{\pm}_{\infty}.$ In a second pass we would then argue that the term
    \begin{equation}
        -\frac{1}{2\ve}\sum_{x\in V} (\nabla_x^\alpha \cdot \cos \nabla\theta)^2,
    \end{equation}
    creates an effective Ising interaction on the effective two-spin space  defined by the ordering directions on $C^+_{\infty}, C^-_{\infty}.$
    We have chosen to rerun the Dario-Garban construction  more fully since 
    \begin{enumerate}
        \item The explicit construction elucidates the relation between strongly disordered $\sH^\mathrm{s}$ and weakly interacting $\sH^\mathrm{c}$ beyond the heuristic symmetry arguments discussed in Sec. \eqref{sec:strong-disorder}.
        \item The construction motivates the study of the weakly interacting clean model $\sH^\mathrm{c}$ (in two dimensions in particular).  
        In Sec. \eqref{sec:H-clean}, we take up the study of $\sH^\mathrm{c}$ in order to provide further insight into the nature  of ordering for the strongly disordered model $\sH^\mathrm{s}$ in $d=2$ dimensions.
    \end{enumerate}
\end{remark}
Note that the inequality in the previous theorem depends on the random phase $\alpha$ through $r_{L,\alpha}$.
To remove this dependence, we follow  Ref. \cite{dario2023phase} and collect the following lemmas.
\begin{lemma}[\cite{liggett1997domination}]
    \label{lem:stoch-dom}
    For any $d,k\ge 1$ and $p\in (0,1)$, there exists $\hat{p}(d,k,p)<1$ such that the following holds: for any subset $\Lambda \subseteq \dZ^d$, if $\{X_i,i\in \Lambda\}$ is a \textbf{$k$-dependent field} of Bernoulli variables (i.e., the $\sigma$-algebras generated by $\{X_i,i\in A\}$ and $\{X_i,i\in B\}$ are independent for any subsets $A,B\subseteq \Lambda$ with graph distance $\mathrm{dist}(A,B) >k$) satisfying $\dP[X_i=1]\ge \hat{p}$ for all $i\in \Lambda$, then $\{X_i,i\in \Lambda\}$ stochastically dominates an i.i.d. field of Bernoulli variables on $\Lambda$ with parameter $p$.
\end{lemma}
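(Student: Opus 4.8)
The plan is to obtain this as a direct consequence of the Liggett--Schonmann--Stacey domination theorem \cite{liggett1997domination}: after a trivial rewording, the statement is a special case of their main result, so the economical route is simply to quote it. (In our application only the case of a \emph{constant} dependence range is needed, since $\alpha$ is i.i.d.\ and, after rescaling, $r_{L,\alpha}$ is a local function of $\alpha$ on disjoint blocks; but a general fixed $k$ costs nothing extra.) For completeness let me indicate the shape of a self-contained argument and where its difficulty lies.

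Enumerate the vertices of $\Lambda$ as $v_1, v_2, \dots$ and attempt to build, one site at a time, a monotone coupling $(X_{v_i}, Z_{v_i})_{i\ge 1}$ in which $(Z_{v_i})$ is i.i.d.\ Bernoulli$(p)$ and $Z_{v_i}\le X_{v_i}$ for every $i$. By $k$-dependence, the conditional law $\dP[X_{v_i}=1\mid X_{v_j}=\eta_j,\ j<i]$ depends only on the at most $M=M(d,k)$ already-revealed vertices within graph distance $k$ of $v_i$, so the construction would succeed (by a standard sequential coupling) as soon as one knows that this conditional one-site probability is always $\ge p$ once $\hat p$ is taken close enough to $1$.

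The main obstacle --- and the reason this lemma is a genuine theorem of \cite{liggett1997domination} rather than a one-line application of Holley's criterion --- is that this elementary conditional bound is simply false: a $k$-dependent field can have all one-site marginals arbitrarily close to $1$ and yet force $X_{v_i}=0$ on a low-probability event on which several neighbours equal $0$ (for instance, manufacture the $0$'s out of long all-ones runs of an auxiliary i.i.d.\ defect field), so the conditional marginal can drop to $0$ on rare histories. Liggett--Schonmann--Stacey circumvent this with a more elaborate revealing scheme that isolates the rare, clustered ``defective'' regions and dominates the resulting low-density (but clustered) set of $0$'s by an i.i.d.\ set whose density is driven to $0$ as $\hat p \to 1$. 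Carrying this out rigorously is the substance of their paper; since nothing about our fields requires a new idea, we simply invoke their theorem.
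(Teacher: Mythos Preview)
Your proposal is correct and matches the paper's approach: the paper states this lemma with a citation to \cite{liggett1997domination} and provides no proof of its own, treating it as a known result from the literature. Your additional commentary on why the naive sequential-coupling argument fails and what Liggett--Schonmann--Stacey actually do is accurate and helpful context, but goes beyond what the paper itself offers.
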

\begin{lemma}[Proposition (3.1), (4.1), (5.1) of Ref. \cite{dario2023phase}]
    \label{lem:well}
    Consider $\dZ^d|_n$ with $d\ge 2,n>1$. Let $p_0\equiv p_0(d,J,\ve) = (1+e^{-4dJ-J/\ve})^{-1}$ and let $(r_x)_{x\in n|\dZ^d}$ be an i.i.d. Bernoulli field with parameter $p_0$.
    Then for $n>1$, 
    \begin{equation}
        \dE_{p_0(d,J,h)}\bra \cos \varphi_a \cos \varphi_b\ket^\mathrm{c}_{\dZ^d|_n^{r},\ve,J,J'} \ge \bra \cos \varphi_a \cos \varphi_b\ket^\mathrm{c}_{\dZ^d|_n,\ve,J/2,J'}
    \end{equation}
\end{lemma}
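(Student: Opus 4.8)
The plan is to follow \cite{dario2023phase} closely (this lemma is Propositions~3.1, 4.1 and 5.1 there); below I describe the structure of the argument and isolate the one genuinely quantitative step. It is convenient — and all that is needed when the lemma is invoked after Theorem~\eqref{thm:lower-bound}, where the boxes centred at $a,b$ are optimal — to keep $a$ and $b$ present, i.e.\ to take $r_a=r_b=1$ and to let $(r_x)_{x\in(n|\dZ^d)\setminus\{a,b\}}$ be i.i.d.\ Bernoulli$(p_0)$. The starting point is that deleting a lattice vertex $x$ from $\dZ^d|_n$ has the same effect on the law of $(\varphi_z)_{z\neq x}$ as retaining $x$ but zeroing the couplings of every XY edge incident to $x^+$ or $x^-$ and of the interlayer bond at $x$; call this set of bonds the \emph{bundle} $B_x$. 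Since $n>1$, no edge of $\dZ^d|_n$ joins two lattice vertices, so the bundles $\{B_x:x\in n|\dZ^d\}$ are pairwise edge-disjoint and the $J'$-edges lie in no bundle. Consequently $\dZ^d|_n^r$ is exactly the clean model on the full graph $\dZ^d|_n$ in which $B_x$ carries strength $J$ when $r_x=1$ and strength $0$ when $r_x=0$, the $J'$-edges being untouched.

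I would then pass from ``every bundle at strength $J/2$'' to ``the bundles independently at strength $J$ or $0$'' one lattice vertex at a time, working first in a finite box $\Lambda_L$ and letting $L\to\infty$ at the end (both sides of the asserted inequality converge by Ginibre). Enumerate the lattice vertices of $\Lambda_L$ other than $a,b$ as $x_1,\dots,x_M$, and let $\mu_k$ be the clean model on $\Lambda_L\cap\dZ^d|_n$ in which $B_{x_1},\dots,B_{x_k}$ are the i.i.d.\ random on/off bundles (with $r_{x_1},\dots,r_{x_k}$ averaged out), $B_{x_{k+1}},\dots,B_{x_M}$ carry strength $J/2$, $B_a$ and $B_b$ carry strength $J$, and the $J'$-edges carry $J'$. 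By Ginibre (Theorem~\eqref{T:Gin}, using $J\ge J/2$ on $B_a,B_b$) the $\mu_0$-expectation of $\cos\varphi_a\cos\varphi_b$ dominates the finite-volume right-hand side of the lemma, whereas the $\mu_M$-expectation equals the finite-volume left-hand side conditioned on $\{r_a=r_b=1\}$. So it suffices to prove $\bra\cos\varphi_a\cos\varphi_b\ket_{\mu_{k-1}}\le\bra\cos\varphi_a\cos\varphi_b\ket_{\mu_k}$ for each $k$. The two models differ only in the treatment of $B_{x_k}$, everything else being a common environment (part deterministic, part already averaged), and averaging over $r_{x_k}$ is — for each realization of the earlier bundles — a genuine two-point convex combination; so by Fubini this reduces to: for an arbitrary deterministic environment, if $s\le m\le b$ are the values of $\bra\cos\varphi_a\cos\varphi_b\ket$ when $B_{x_k}$ carries strength $0$, $J/2$, $J$ respectively (the ordering being Ginibre), then
\begin{equation*}
 p_0\,b+(1-p_0)\,s\ \ge\ m,\qquad\text{equivalently}\qquad p_0\ \ge\ \frac{m-s}{b-s}
\end{equation*}
(vacuously if $b=s$).

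This last bound is the crux, and it is here that the precise value $p_0=(1+e^{-4dJ-J/\ve})^{-1}$ is used. Let $\bra\cdot\ket_0$ denote the environment expectation with $B_{x_k}$ switched off, and let $\xi$ be the contribution of $B_{x_k}$ to $-\sH$ at unit coupling: a fixed combination of the $\le 2d$ gradients $\cos\nabla_{\be}\vartheta^\pm$ in each layer and of $\cos^2\varphi_{x_k}=\tfrac12(1+\cos 2\varphi_{x_k})$. With $g:=e^{(J/2)\xi}$ one has $e^{-4dJ}\le g\le e^{4dJ+J/(2\ve)}\le e^{4dJ+J/\ve}=p_0/(1-p_0)$, and strengths $0,J/2,J$ amount to reweighting $\bra\cdot\ket_0$ by $1,g,g^2$; hence $m-s$ is the $\bra\cdot\ket_0$-covariance of $f:=\cos\varphi_a\cos\varphi_b=\tfrac12[\cos(\varphi_a-\varphi_b)+\cos(\varphi_a+\varphi_b)]$ with $g$, divided by $\bra g\ket_0$, and $b-s$ is the analogue with $g^2$. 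Applying Theorem~\eqref{T:Gin} to the admissible observable $f$ gives $m\le b$, hence $(m-s)/(b-s)\le 1$; the quantitative improvement to $\le p_0$ — extracting the definite gap $1-p_0\sim e^{-4dJ-J/\ve}$ — comes from combining this monotonicity with the two-sided bound on $g$ just recorded, exactly as in \cite[Prop.~3.1, 4.1, 5.1]{dario2023phase}. Chaining the per-bundle inequalities over $k$ and letting $L\to\infty$ then gives the lemma. I expect this quantitative per-bundle estimate to be the only real obstacle: the qualitative ordering $s\le m\le b$ and the reduction to a single bundle are immediate from Ginibre, but sharpening $(m-s)/(b-s)\le 1$ to $\le p_0$ needs the covariance bookkeeping of \cite{dario2023phase} rather than a bare monotonicity argument.
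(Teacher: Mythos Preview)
Your proposal is correct and follows essentially the same approach as the paper. The paper's own proof is even terser: it records the single quantitative modification needed over \cite{dario2023phase}, namely the partition-function ratio bound
\[
\frac{\sZ^\mathrm{c}_{\Lambda|_n^{r^{(0)}},\ve,J,J'}}{\sZ^\mathrm{c}_{\Lambda|_n^{r^{(1)}},\ve,J,J'}}
=\Big\langle\exp\Big(-J\sum_{y\sim x}\sum_{\ell=\pm}\cos\nabla_e\vartheta^\ell-\tfrac{J}{\ve}\cos^2\varphi_x\Big)\Big\rangle
\ge e^{-4dJ-J/\ve},
\]
where $r^{(0)},r^{(1)}$ agree off $x$ and equal $0,1$ at $x$, and defers everything else to Propositions~3.1, 4.1, 5.1 there. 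Your bundle-by-bundle interpolation and the per-bundle reduction to $p_0b+(1-p_0)s\ge m$ are exactly the skeleton of that argument, and the partition-function ratio (equivalently your two-sided bound on $g$) is what closes the quantitative step.
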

\begin{proof}
    The proof follows almost exactly in Propositions (3.1), (4.1), (5.1) of Ref. \cite{dario2023phase}.
    The only difference is the bound on the ratio of partition functions. 
    Given a site $x\in n|\dZ^d$, let $r^{(0)}$ be the random environment on $n|\dZ^d$ obtained from $r$ which is $0$ at $x$, and is $r_y$ for sites $y\ne x$.
    Similarly define $r^{(1)}$ to be the random environment on $n|\dZ^d$ which $=1$ on site $x$ and is $r$ otherwise.
    Then 
    \begin{align}
    \label{e:compare}
        \frac{\sZ_{\Lambda_{Nn}|_n^{r^{(0)}},\ve,J, J'}^\mathrm{c}}{\sZ_{\Lambda_{Nn}|_n^{r^{(1)}},\ve, J, J'}^\mathrm{c}} & = \left\bra \exp\left(-J \sum_{y\in \Lambda_{Nn}|n: y\sim x} \sum_{\ell=\pm} \cos\nabla_e \vartheta^\ell -\frac{J}{\ve} \cos^2 \varphi_x\right)\right\ket^\mathrm{c}_{\Lambda_{Nn}|_n^{r^{(1)}}, \ve, J, J}\\
        &\ge e^{-4J d -J/\ve}
    \end{align}
    The $\ve$ dependence in the lower bound gives rise to the $\ve$ dependence in $p_0$.
\end{proof}

\begin{remark}
    Note that $p_0(d, J,h)$ is increasing with respect to $J$.  It is this effect which necessitates the use of multiple coupling strengths in Definition \ref{def:H-clean-multiT}.  
    Crucially, this trick  decouples the choice of $L$ (as defined in Theorem \eqref{thm:lower-bound}) from the inverse temperature $\beta$.
    Otherwise, if $J=J'\sim \beta,$ then to have a high enough density of optimal boxes, $L$ must be chosen large depending on $\beta.$  But the larger $L$ is the LESS ferromagnetic the RHS of \eqref{e:compare} effectively is and the argument does not close.
   
\end{remark}
\begin{proof}[Proof of Theorem \eqref{thm:infinite-stability} and \eqref{thm:infinite-quasi-stability}]
    Let us first consider the case of $\dZ^d,d\ge 3$.
    Let $x,y\in \dZ^d$  and let $x',y'$ denote arbitrary vertices neighboring $x,y$, respectively.
    For any given $L$, let $a,b\in \dZ^d$ be such that $a-b=x-y$ and choose $e_a\in [a]_{L/2}$ and $e_b\in [b]_{L/2}$ for some $a,b\in (2L+1)\dZ^d$ 
     so that $e_a =x-x'$ and $e_b=y-y'$.
    Consider the event $E_{L,e_a,e_b}$ as defined in Theorem \eqref{thm:lower-bound}.
    Provided $\|x-y\|\geq L$,
    \begin{align}
        \dP_p[ E_{L,e_a,e_b}] &= \dP_p[e_a \xleftrightarrow{s} e_a +\partial \Lambda_{L/2},s\in\{+, -\}]^2 \\
        &\ge \dP_p[e_a \in \{C_\infty^+ \sim C_\infty^-\}]^2
    \end{align}
    By Theorem \eqref{thm:optimal}, the RHS is nonzero and thus the event has probability bounded away from $0$ uniformly in $\|x-y\|\geq L$.
    
    Let $p_0(d,\beta,\ve)$ be as in Lemma \eqref{lem:well} and $J_0^\mathrm{c}(d)$ be as in Proposition \eqref{prop:H-clean}.
    Let $\hat{p} <1$ be that determined by $p_0(d,2J_0^\mathrm{c}(d),\ve_0)$ via Lemma \eqref{lem:stoch-dom}.
    Define the field of random variables $r_\alpha \equiv r_{L,\alpha}: n_L|\dZ^d \mapsto \{0,1\}$ as in Theorem \eqref{thm:lower-bound} and note that $r_\alpha$ is a 1-dependent field of Bernoulli random variables.
    By Theorem \eqref{thm:optimal}, we see that $\dP_p [(r_\alpha)_x = 1]$ can be made arbitrarily close to 1 provided that $L$ is sufficiently large.
    
    Henceforth, let $L=L(d,p,\ve_0)$ be sufficiently large depending on $d,p,\ve_0$ (and $\norm{x-y}$ be sufficiently large depending on $L$ and thus on $d,p,\ve_0$) so that
    \begin{equation}
        \dP_p[ (r_\alpha)_z=1|E_{L,e_a,e_b}] \ge \hat{p}, \quad \forall z\in \dZ^d.
    \end{equation}
    Then the conditional distribution of random field $r_\alpha$ with respect to event $E_{L,e_a,e_b}$ stochastically dominates the Bernoulli random field $r$ of probability $p_0(d,2 \beta_0^\mathrm{c}(d),\ve_0)$ on $(2L+1)\dZ^d \cong n|\dZ^d$ where $n\equiv n_L$ {(note that we have written $r_\alpha$ and $r$ to differentiate the two)}.

    Note that
    \begin{align}
        \bra \tau_x \tau_y\ket^\mathrm{s}_{\dZ^d,\alpha,\ve,\beta} 
        &= \frac{1}{\ve^2} \sum_{e_x \sim x, e_y\sim y} 1\{\nabla_{\be_x} \alpha,\nabla_{\be_y} \alpha\ne 0\}\bra \cos \nabla_{\be_x}\theta \cos \nabla_{\be_y}\theta\ket^\mathrm{s}_{\dZ^d,\alpha,\ve,\beta}
    \end{align}
    where the RHS sum is over edges $e_x, e_y$ containing $x, y$ respectively and both edges $\be_x,\be_y$ are oriented from vacant to occupied vertices.
    By translation invariance and then applying Ginibre's inequality,
    \begin{align}
        \dE_p \left\bra \prod_{i=x,y} 1\{\nabla_{\be_i} \alpha\ne 0\} \cos \nabla_{\be_i}\theta \right\ket^\mathrm{s}_{\dZ^d,\alpha,\ve,\beta} &= \dE_p \left\bra \prod_{i=a,b}1\{\nabla_{\be_i} \alpha\ne 0\} \cos \nabla_{\be_i}\theta \right\ket^\mathrm{s}_{\dZ^d,\alpha,\ve,\beta}\\
        &\ge \dE_p\left[\left\bra \prod_{i=a,b} \cos \nabla_{\be_i}\theta\right\ket^\mathrm{s}_{\dZ^d,\alpha,\ve,\beta}\Bigg| E_{L, e_a, e_b}\right]\dP_p [E_{L,e_a,e_b}],
    \end{align}
    where again, the edges are oriented from vacant to occupied vertices.
    
    Let
    \begin{equation}
        \beta_0^\mathrm{s}(d,p,\ve_0)   = 2d \max\left(2 J_0^\mathrm{c}(d), J_0^{\mathrm{c}\prime}(d,n)\right)
    \end{equation}
    where $n=n_L$ depends on $d,p,\ve_0$.
    By Theorem \eqref{thm:lower-bound}, for $\alpha\in E_{L, e_a, e_b}$ we have
    \begin{align}
        \bra \cos \nabla_{\be_a}\theta \cos \nabla_{\be_b}\theta\ket^\mathrm{s}_{\dZ^d,\alpha,\ve,\beta} &\geq  \bra \cos \varphi_{\emb (a)} \cos \varphi_{\emb (b)}\ket^\mathrm{c}_{\dZ^d|_{n}^{r_{\alpha}} ,\ve, \beta/2d} .
    \end{align}
    Where $\emb = \emb_L$ depends on $L$ (and thus on $d,p,h_0)$.
    From this inequality it follows that 
    \begin{align}
            \dE_p [\bra \cos \nabla_{\be_a}\theta \cos \nabla_{\be_b}\theta\ket^\mathrm{s}_{\dZ^d,\alpha,\ve,\beta} |E_{L,e_a,e_b}]
            &\ge \dE_p [\bra \cos \varphi_{\emb (a)} \cos \varphi_{\emb (b)}\ket^\mathrm{c}_{\dZ^d|_{n}^{r_{\alpha}},\ve,\beta/2d}|E_{L,e_a,e_b}]
    \end{align}
    By Lemma \eqref{lem:stoch-dom}, the RHS of this inequality is bounded by
    \begin{align}
        \dE_p [\bra \cos \varphi_{\emb (a)} \cos \varphi_{\emb (b)}\ket^\mathrm{c}_{\dZ^d|_{n}^{r_{\alpha}},\ve,\beta/2d}|E_{L,e_a,e_b}]
        \label{eq:stoch-dom-renorm-applied}
        &\ge \dE_{p_0(d,2J_0^\mathrm{c}(d),\ve_0)}\bra \cos \varphi_{\emb (a)} \cos \varphi_{\emb (b)}\ket^\mathrm{c}_{\dZ^d|_{n}^{r},\ve,\beta/2d} \\
        \label{eq:stoch-dom-p-applied}
        &\ge \dE_{p_0(d,2J_0^\mathrm{c}(d),\ve)}\bra \cos \varphi_{\emb (a)} \cos \varphi_{\emb (b)}\ket^\mathrm{c}_{\dZ^d|_{n}^{r},\ve,\beta/2d} 
    \end{align}
    where the second  inequality \eqref{eq:stoch-dom-p-applied} utilizes $p_0(d,\beta,\ve) \le p_0(d,\beta,\ve_0).$ 
    Using the Ginibre inequality and then Lemma \eqref{lem:well}, we see that for $\beta > \beta_0^\mathrm{s}(d,p,h_0)$,
    \begin{align}
        \dE_{p_0(d,2J_0^\mathrm{c}(d),\ve)}\bra \cos \varphi_{\emb (a)} \cos \varphi_{\emb (b)}\ket^\mathrm{c}_{\dZ^d|_{n}^{r},\ve,\beta/2d}        \label{eq:monotone-beta-applied}
        &\ge \dE_{p_0(d,2J_0^\mathrm{c}(d),\ve)}\bra \cos \varphi_{\emb (a)} \cos \varphi_{\emb (b)}\ket^\mathrm{c}_{\dZ^d|_{n}^{r}, \ve, 2J_0^\mathrm{c}(d),J_0^{c\prime}(d,n)} \\ 
        \label{eq:well-inequality-apply}
        &\ge \bra \cos \varphi_{\emb (a)} \cos \varphi_{\emb (b)}\ket^\mathrm{c}_{\dZ^d|_{n},J_0^{\mathrm{c}}(d), J_0^{\mathrm{c}\prime}(d,n),\ve}
    \end{align}
    Finally, by Proposition \eqref{prop:H-clean}, the proof of Theorem \eqref{thm:infinite-stability} is completed.
    The proof of Theorem \eqref{thm:infinite-quasi-stability} is similar and omitted.
\end{proof}

\section{$\dZ_2$ degeneracy of $\sH^\mathrm{c}$}
\label{sec:H-clean}
We have seen that the effective Hamiltonian $\sH^\mathrm{s}$ in the strong disorder limit $\ve\gg1$ is lower bounded by a variant of the clean Hamiltonian $\sH^\mathrm{c}$ in the weak inter-layer interaction $1/\ve\ll 1$.
Motivated by this  fact, in this section we consider $\sH^\mathrm{c}$ in more detail, motivated by the assumption that the pair of models have broadly similar behavior, especially in the weak limit $1/\ve \to 0$ on $\dZ^d$.

In fact, to make things easier still, we further assume, discussed in Sec. \eqref{sec:weak-disorder}, that the average phase and phase difference  degrees of freedom decouple in the weak limit $1/\ve \to 0$. Since the variable of interest at present is the phase difference, $\varphi$, the model reduces to the Hamiltonian
\begin{equation}
    \label{eq:H-clean-weak}
    \sH^\mathrm{cw}_{G,\ve}(\varphi) = -\sum_{e\in E} \cos \nabla_e \varphi -\frac{1}{2\ve} \sum_{i\in V} \cos^2 \varphi_i
\end{equation}
This is an $XY$ model with an additional weak double well potential with minima at $0, \pi.$  Intuitively, one would expect $\Z_2$ ordering at these minimia, but the situation in two dimensions seems to be delicate, as we point out below.

Note that  the expected ordering of $\sH^s$ is in the $Y$ direction, i.e., $e^{i\phi}\sim^\mathrm{s} \pm i$.  Through Eq. \eqref{eq:heuristic}, this corresponds in $\sH^\mathrm{cw}$ to a phase difference $\varphi$ ordering in the $X$ direction. 


\subsection{Absence of Ordering in $Y$}
\label{sec:absense-of-ordering-in-Y}
For completeness, we first observe that $\sH^\mathrm{cw}$ does not order along he $Y$ direction for any finite temperature in any dimension.
The proof is a simple modification of the infrared bound and thus will be mostly left in the appendix.

\begin{prop}[Absence of ordering in $Y$]
    \label{cor:absence-Y}
    Let $\dZ_L^d$ denote the torus with linear length $L$. Then
    \begin{equation}
        \left\bra \left[\frac{1}{|\dZ_L^d|} \sum_{x\in \dZ_L^d} \sin \varphi_x \right]^2\right\ket^\mathrm{cw}_{\dZ_L^d,\ve,\beta} \le \frac{\ve}{\beta |\dZ_L^d|} \to 0, \quad L\to \infty
    \end{equation}
\end{prop}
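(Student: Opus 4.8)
The plan is to derive the bound from a spin-wave (infrared) estimate applied to the $Y$-components of the spins, exactly as one proves absence of magnetization for the classical $XY$ model via reflection positivity, with the mild complication that the Hamiltonian $\sH^\mathrm{cw}$ contains the extra single-site term $-\tfrac{1}{2\ve}\sum_i\cos^2\varphi_i$. First I would note that this extra term is a function of $\cos\varphi_i$ alone, hence it is even and reflection-positive in the usual sense on the torus $\dZ_L^d$ (it is diagonal in the single-site variables and does not couple different sites), so the whole measure $\mu^\mathrm{cw}_{\dZ_L^d,\ve,\beta}$ is reflection positive with respect to reflections through hyperplanes of sites. This is the only structural input needed beyond what is already used in the paper for the clean $XY$ interaction.

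The key steps, in order, are: (1) Write $Y_x=\sin\varphi_x$ and consider the Fourier mode $\hat Y(0)=\sum_x Y_x$; the quantity to be bounded is $\frac{1}{|\dZ_L^d|^2}\bra \hat Y(0)^2\ket^\mathrm{cw}$. (2) Establish the Gaussian-domination / infrared bound: for $p\neq 0$,
\begin{equation}
\bra |\hat Y(p)|^2\ket^\mathrm{cw}_{\dZ_L^d,\ve,\beta}\ \le\ \frac{|\dZ_L^d|}{2\beta\, E(p)},\qquad E(p)=\sum_{j=1}^d\bigl(1-\cos p_j\bigr),
\end{equation}
by the standard argument: shift $\varphi_x\mapsto\varphi_x$ by adding a constant to each spin angle in the direction conjugate to $Y$, differentiate the partition function twice, and use reflection positivity (the chessboard estimate applied to the perturbed partition function) to conclude that the second derivative is maximized at the unshifted configuration. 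The single-site term $\cos^2\varphi_i$ produces only a site-diagonal contribution to the relevant Hessian, which drops out of the inequality because the shift is a pure gauge mode for the nearest-neighbour part; more carefully, one checks that the single-site term is itself reflection positive and contributes nonnegatively in the direction that makes the Gaussian-domination inequality go the right way. (3) For $p=0$, there is no decay from $E(p)$, so instead bound $\bra \hat Y(0)^2\ket$ directly: since $-\tfrac{1}{2\ve}\cos^2\varphi_i=-\tfrac{1}{2\ve}(1-Y_i^2)$, completing the square / integrating out shows the zero mode has variance controlled by $\ve$ alone — more precisely, a direct second-derivative computation with respect to a uniform shift gives $\bra \hat Y(0)^2\ket\le \ve|\dZ_L^d|/\beta$, which is exactly the claimed bound (note the nearest-neighbour term is invariant under uniform rotation, so only the $\cos^2$ term contributes a restoring force of strength $\sim 1/\ve$). (4) Combine: $\frac{1}{|\dZ_L^d|^2}\bra\hat Y(0)^2\ket\le \frac{\ve}{\beta|\dZ_L^d|}\to 0$.

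The main obstacle I expect is step (3) — or rather, getting the constant exactly right. The decay in the bound comes entirely from the single-site well, so one must verify that the uniform-shift argument yields precisely the factor $\ve/(\beta|\dZ_L^d|)$ and not something off by a dimension-dependent or $\cos$-expansion constant. Concretely: let $Z(t)$ be the partition function with $\varphi_x\mapsto\varphi_x+t$ for all $x$; then $Z(t)$ depends on $t$ only through $\sum_i\cos^2(\varphi_i+t)$, and convexity/reflection-positivity in $t$ together with $\frac{d^2}{dt^2}\log Z(t)\big|_{t=0}\le$ (the curvature of the well) should give $\var(\sum_i Y_i)\le \beta^{-1}\cdot\tfrac{1}{2\ve}\cdot 2|\dZ_L^d|=\ve^{-1}\beta^{-1}|\dZ_L^d|$ after the correct bookkeeping of factors of $2$. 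Once the $p=0$ mode is handled with the sharp constant, the $p\neq0$ modes are negligible in the $L\to\infty$ limit (their total contribution to the spatial average is $O(1/|\dZ_L^d|)$ times a convergent sum, in fact they don't even enter the spatial average $\tfrac{1}{|\dZ_L^d|}\sum_x\sin\varphi_x$ at all since that is exactly $\tfrac{1}{|\dZ_L^d|}\hat Y(0)$), so the argument closes. Since the paper explicitly says this is "a simple modification of the infrared bound" and defers details to the appendix, I would present steps (1)–(4) at this level of detail and relegate the reflection-positivity verification of the $\cos^2$ term to the appendix.
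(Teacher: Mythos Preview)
Your overall strategy---infrared bound with the zero mode controlled by the mass term $\sim 1/\ve$---is correct, and you correctly identify that only the $p=0$ Fourier mode matters. But step (3), where the actual content lies, has a genuine gap.

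The ``uniform shift'' $\varphi_x\mapsto \varphi_x+t$ does not give what you want. If $Z(t)=\int d\varphi\, e^{-\beta\sH^{\mathrm{cw}}(\varphi+t)}$, then by change of variables $Z(t)\equiv Z(0)$, so $\tfrac{d^2}{dt^2}\log Z(t)=0$ identically. Expanding that identity at $t=0$ produces a Ward identity for $\sum_i\sin(2\varphi_i)$ (since $\tfrac{d}{dt}\cos^2(\varphi_i+t)=-\sin(2\varphi_i)$), not for $\sum_i\sin\varphi_i$. The curvature-of-the-well heuristic is the right intuition but does not become a proof this way; nor does spatial reflection positivity help with a \emph{spatially uniform} perturbation.

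What the paper actually does is bake the mass into the Gaussian-domination quadratic form: one defines the shifted partition function with source $s=a\hat n$ via the operator $-\Delta+\ve^{-1}\dnum^Y$ (so $\sH_{s}=\tfrac12(\sigma+s)(-\Delta+\ve^{-1}\dnum^Y)(\sigma+s)$) and proves $\sZ_{a\hat n}\le \sZ_{a\hat x}$ for every unit-vector field $\hat n$. The nontrivial part is \emph{not} the reduction to uniform $\hat n$ (that is the standard chessboard argument, which you describe), but rather the second step: among uniform $\hat n$, showing $\hat x$ beats $\hat y$. This requires a separate positivity argument (in the paper, Lemmas on polynomials in auxiliary Ising sign variables), because reflection positivity in space says nothing about the internal spin direction. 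Once $\sZ_{a\hat y}\le \sZ_{a\hat x}=\sZ_0$ is established, expanding to second order in $a$ and using $\langle\sum_i\sin\varphi_i\rangle=0$ gives exactly
\[
\Big\langle\Big(\sum_i\sin\varphi_i\Big)^2\Big\rangle^{\mathrm{cw}}\le \frac{\ve}{\beta}\,|\dZ_L^d|,
\]
with the sharp constant you were worried about. So: keep your framework, but replace step (3) by the modified Gaussian domination with $-\Delta+\ve^{-1}\dnum^Y$, and be prepared to prove that $\hat x$ maximizes $\sZ_{a\hat n}$ among uniform directions---that is where the real work is.
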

\begin{proof}
    By the infrared bound in Corollary \eqref{cor:infrared}, take $\hat{n}$ to be uniformly equal to $\hat{y}$ (the unit vector in the $Y$ direction) and thus the statement follows.
\end{proof}

\subsection{Dimension $d=2$}

By Proposition \eqref{cor:absence-Y}, the effective Hamiltonian $\sH^\mathrm{cw}$ in Eq. \eqref{eq:H-clean-weak} does not order in the $Y$ direction regardless of temperature $T$ and interaction strength $1/\ve >0$. 
On the other hand, it satisfies Ginibre's inequality \cite{ginibre1970general}, and we may conclude that if $d\ge 3$, the model orders in the $X$ direction and exhibit LRO below a temperature independent of interaction strength, i.e., $T^\mathrm{cw}_\mathrm{LRO} \sim 1$ in the limit $1/\ve \to 0$.

However, when the spatial dimension $d=2$ dimension, due to the underlying BKT phase of the XY model, it's less clear whether such a statement holds.  Indeed, 
by Ginibre and Proposition \eqref{cor:absence-Y}, the effective Hamiltonian $\sH^\mathrm{cw}$ has at least quasi-LRO in the $X$ component below some $T_\mathrm{QLRO}^\mathrm{cw}\sim 1$ \textit{independent} of interaction strength $1/\ve \to 0$.  Whether the system exhibits LRO with $T_\mathrm{LRO}^\mathrm{cw}\sim 1$ remains an open question.  Conventional wisdom (i.e. relevance of the $\cos(2\varphi)$ interaction term under renormalization group transformations \cite{Seibergetal2021}) suggests it should order.  Our next result shows that if so, the ordering is extremely weak for temperatures of order one.

 Let $\bra \cdots \ket_{\partial,\Lambda_N,\ve,\beta}^\mathrm{cw}$ denote the Gibbs measure with respect to Hamiltonian $\sH^\mathrm{cw}$ in Eq. \eqref{eq:H-clean-weak} on $\Lambda_N$ with wired boundary conditions, i.e., spins $\phi_i=0$ are fixed at $i\in \partial \Lambda_N$, and let $\bra \cdots \ket_{\partial,\dZ^2,\ve,\beta}^\mathrm{cw}$ denote the thermodynamic limit, again  existing due to Ginibre's inequality.
\begin{theorem}[Modified McBryan-Spencer]
    \label{thm:mcbryan}
    For every sufficiently small $\delta >0$, there exists $\beta_0,\ve _0 > 0$ and constants $C_0,C_1>0$ such that if $\beta > \beta_0$ and $\ve > \ve_0$, 
    \begin{equation}
        \bra \cos \varphi_0\ket_{\partial,\dZ^2,\ve,\beta}^\mathrm{cw} = \limsup_{N \to \infty}\bra \cos \varphi_0\ket_{\partial,\Lambda_N,\ve,\beta}^\mathrm{cw} \le C_1 \exp\left[ -\frac{1}{1+\delta}\frac{1}{8\pi\beta} \log \ve\right]
    \end{equation}
   
\end{theorem}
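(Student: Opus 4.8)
The plan is to set up a McBryan--Spencer style complex-translation bound for $\langle e^{i\varphi_0}\rangle$, adapted to the extra on-site term $\tfrac1{2\ve}\cos^2\varphi$, and then to optimize the translation over a family of truncated logarithmic profiles cut off at the natural mass scale $\sqrt\ve$ (equivalently, over multiples of the massive lattice Green's function $G_m$ with $m^2=1/\ve$).

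The opening moves are routine. By the $\varphi\mapsto-\varphi$ symmetry of both $\sH^{\mathrm{cw}}$ and the wired boundary condition, $\langle\cos\varphi_0\rangle^{\mathrm{cw}}_{\partial,\Lambda_N,\ve,\beta}=\langle e^{i\varphi_0}\rangle^{\mathrm{cw}}_{\partial,\Lambda_N,\ve,\beta}$ and is real, so it suffices to bound the latter. The integrand $e^{i\varphi_0}\exp(-\beta\sH^{\mathrm{cw}})$ is entire and $2\pi$-periodic in each $\varphi_x$, so for any $a\colon\dZ^2\to\dR$ supported in a ball $B_R\subseteq\Lambda_{N-1}$ the translation $\varphi_x\mapsto\varphi_x+ia_x$ of the interior contours leaves the numerator integral unchanged. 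Taking moduli and using $\Re\cos(u+iv)=\cos u\cosh v$ and $\Re\cos^2(u+iv)=\tfrac12+\tfrac12\cos(2u)\cosh(2v)$, together with $\cos\le1$ and $\cosh(\cdot)-1\ge0$, yields the deterministic bound
\begin{equation*}
\langle e^{i\varphi_0}\rangle^{\mathrm{cw}}_{\partial,\Lambda_N,\ve,\beta}\le\exp\Big(-a_0+\beta\sum_{e}(\cosh\nabla_e a-1)+\frac{\beta}{4\ve}\sum_{x}(\cosh 2a_x-1)\Big),
\end{equation*}
uniformly in $N$, hence also for the thermodynamic limit $\langle\cos\varphi_0\rangle^{\mathrm{cw}}_{\partial,\dZ^2,\ve,\beta}$. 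The one genuinely new feature relative to the XY case is the last sum, which to leading order is the mass term $\tfrac\beta{2\ve}\sum_x a_x^2$ with $m^2=1/\ve$.

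It then remains to choose $a$. I would take $a_x=c\,\log^{+}\big(\sqrt\ve/\max(|x|,1)\big)$ and optimize over $c$. The needed estimates are: (i) $\max_e|\nabla_e a|=O(c)$, so that for $c=O(1/\beta)$ one has $\cosh\nabla_e a-1\le\tfrac{1+o_\beta(1)}2(\nabla_e a)^2$; (ii) $\sum_e(\nabla_e a)^2=\pi c^2\log\ve\,(1+o_\ve(1))$, the annulus Dirichlet-energy/capacity computation; and (iii) $\sum_x(\cosh2a_x-1)\asymp c^2\ve$, valid precisely when $c<1$. Granting these, the mass term is $\tfrac\beta{4\ve}\cdot O(c^2\ve)=O(\beta c^2)=O(1/\beta)$ and negligible against $-a_0=-\tfrac c2\log\ve$; the exponent becomes $\tfrac{\log\ve}2\big(-c+(1+o(1))\pi\beta c^2\big)+O(1/\beta)$, and minimizing over $c$ (optimum $c=\tfrac1{2\pi\beta}(1+o(1))$, i.e.\ $a=\tfrac1\beta G_m$) gives $-\tfrac1{1+\delta}\cdot\tfrac{\log\ve}{8\pi\beta}$ once $\beta\ge\beta_0(\delta)$ and $\ve\ge\ve_0(\delta)$, the bounded errors going into $C_1$.

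I expect estimate (iii) to be the main obstacle, and it is what pins down the constant. The naive bound $\sum_x(\cosh2a_x-1)\le|B_R|(\cosh2a_0-1)$ is far too wasteful because $a_0=c\log\sqrt\ve$ need not be small when $\ve$ is huge relative to $\beta$; instead one must use the decay of the profile, writing $\sum_x(\cosh2a_x-1)\approx2\pi\ve\int_0^\infty(\cosh2ct-1)e^{-2t}\,dt$ and noting that this integral converges (to $\tfrac{c^2}{2(1-c^2)}$) exactly because $2c<2$. It is this interplay --- logarithmic decay of the shift against two-dimensional volume growth, tamed by $\cosh$ so long as $c<1$ --- that both makes the argument work for all $\ve\ge\ve_0$ and forces the cutoff at scale $\sqrt\ve$, hence the factor $\tfrac1{8\pi\beta}$. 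The remaining ingredients --- the contour-shift bookkeeping, Ginibre monotonicity for the infinite-volume limit, and the standard discrete-harmonic / discrete-to-continuum asymptotics in (ii) --- are classical and contribute only the $1+\delta$ slack and the constant $C_1$.
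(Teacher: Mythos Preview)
Your proposal is correct and is essentially the same McBryan--Spencer argument the paper gives: complex translation, the same deterministic exponential bound, and a shift profile at the massive scale $m^2=1/\ve$, with the main work being to show the on-site $\cosh$ contribution stays $O(1)$ in the exponent. The only real difference is packaging: you work with the explicit truncated logarithm $a_x=c\log^{+}(\sqrt{\ve}/\max(|x|,1))$ and treat the whole on-site term as a bounded error via your estimate (iii), whereas the paper takes $u=\tfrac{1}{(1+\delta)\beta}G_{N}$ with $G_N$ the actual massive lattice Green function, folds the quadratic part $\tfrac{1}{2\ve}\sum u_i^2$ into the optimization, and then separately bounds the quartic remainder $\sum_i(\cosh 2u_i-1-2u_i^2)=O(\ve)$ using Bessel-function asymptotics for $G$. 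Your route is a bit more elementary (no appendix on lattice Green functions needed) at the cost of the discrete-to-continuum bookkeeping in (ii); the paper's route is more systematic and makes the role of the mass term explicit, but both yield the same constant $\tfrac{1}{8\pi\beta}$.
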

\begin{proof}
    For notation simplicity, we shall omit the subscript $\Lambda_N$.
    Following McBryan-Spencer \cite{mcbryan1977decay}, we apply an imaginary shift $\varphi \mapsto \varphi +iu$ for some $u:\Lambda_N \to \dR$ with $u_x =0$ for all $x\in \partial\Lambda_N.$  Defining the corresponding energy shift by 
    \begin{align}
        -\delta_u \sH^{cw}(\varphi)&=- \sH^{cw}(\varphi+iu)+ \sH^{cw}(\varphi) =\sum_{e\in \Lambda_N} (\cosh \nabla_e u -1) \cos \nabla_e \varphi +\frac{\beta}{4\ve} \sum_{i\in \Lambda_N} (\cosh (2u_i) -1)\cos(2\varphi_i) \nonumber\\
        &-i\sum_{e\in \Lambda_N} \sinh \nabla_e u \sin \nabla_e \varphi -i \frac{1}{2\ve} \sum_{i\in \Lambda_N} \sinh (2u_i) \sin(2\varphi_i)
    \end{align}
we have
    \begin{align}
        \bra \cos\varphi_0\ket_{\partial,\ve,\beta}^\mathrm{cw} &= \frac{1}{\sZ_{\partial,\ve,\beta}^\mathrm{cw}}\int d\varphi e^{i\varphi_0} e^{-\beta \sH_{\partial,\ve}^\mathrm{cw}(\varphi)} \\
        &=\frac{1}{\sZ_{\partial,\ve,\beta}^\mathrm{cw}}e^{-u_0}\int d\varphi e^{i\varphi_0}  e^{-\beta \sH_{\partial,\ve}^\mathrm{cw}(\varphi)}   e^{-\delta_u \sH^{cw}(\varphi)}
        \end{align}
Using the triangle inequality and the bound
\begin{equation}
    |e^{-\beta \delta_u \sH^{cw}(\varphi)}|\leq e^{-\beta \Re( \delta_u \sH^{cw}(\varphi))},
\end{equation}
it follows that
        \begin{align}
        |\bra \cos\varphi_0 \ket_{\partial,\ve,\beta}^\mathrm{cw}| 
        &\le \exp\left[ \beta \left(-\frac{u_0}{\beta} +\sum_{e\in\Lambda_N} (\cosh \nabla_e u-1) +\frac{1}{4\ve} \sum_{i\in\Lambda_N} (\cosh (2u_i)-1)\right)\right].
    \end{align}

    The remainder of the proof seeks to optimize the upper bound by a clever choice for $u$.  In the usual application, the term proportional to $1/h$ is absent and the bound is (nearly) optimized by the harmonic function which is $1$ at $0$ and $0$  on $\partial \Lambda_N.$  In the present case the presence of the second term means that this trial function needs to be revisited.  The natural guess is to take $u$ to be the corresponding massive Green function, with $m^2=1/h,$ the only question being whether the higher order nonlinear terms allow the bound to be optimized further.  As we now argue, these terms are insignificant.

   Since it may not be completely standard,  let us first run through a sketch of the proof before introducing the technical details.
    Rewrite $\sE(u)$ as 
    \begin{equation}
        \label{eq:functional}
        \sE(u)\equiv \underbrace{-\frac{u_0}{\beta} +\frac{1}{2} u \left(-\Delta_N +\frac{1}{\ve}\right)u }_{\text{at most quadratic in }u} +\frac{1}{4\ve}\sum_{i\in \Lambda_N} (\cosh (2u_i) -2u_i^2-1).
    \end{equation}
   Note that the mass term $m^2 = 1/\ve$ is chosen so that $\cosh(2u)-2u^2 -1 \sim u^4$ is small when $u$ is small.
    Optimize the first two terms by taking
    \begin{equation}
        u= \frac{1}{\beta}\left(-\Delta_N +\frac{1}{\ve}\right)^{-1} \delta_0
    \end{equation}
    Heuristically take $N\to \infty$ so that the we can replace the solution with the $d=2$ dimensional continuum massive Green functions, i.e.,
    \begin{equation}
        u_x \sim  \frac{1}{\beta} C_x, \quad C_x = \frac{1}{2\pi} K_0\left(\frac{\norm{x}}{\sqrt{\ve}}\right)
    \end{equation}
    Where $K_0$ is the modified Bessel function of the second kind.
    Hence, by properties of the massive Green function, we note that $|\nabla_e u| \lesssim 1/\beta$ uniformly for edges $e$ and thus for sufficiently large $\beta$ (independent of $\ve$), the lattice gradients can be made sufficiently small.
    Note that $K_0(\omega)$ has the following asymptotics,
    \begin{equation}
        K_0(\omega) \sim 
        \begin{dcases} 
            -\log\left(\frac{\omega}{2}\right) + \gamma, & \omega \to 0 \\
            \sqrt{\frac{\pi}{2\omega}} e^{-\omega}, & \omega \to \infty
        \end{dcases}
    \end{equation}
    where $\gamma$ is the Euler–Mascheroni constant.
The first two terms in Eq. \eqref{eq:functional} scale as 
    \begin{equation}
        -\frac{1}{2\beta} C_0 \approx -\frac{1}{8\pi \beta} \log \ve, \quad R\to \infty
    \end{equation}
    On the other hand,
    passing from a lattice sum to a continuum integral, we can partition the higher order contribution as
    \begin{equation}
        \sum_{i} (\cosh(2u_i) -2u_i^2 -1) \sim \underbrace{\int_{r \lesssim \sqrt{\ve}} \exp\left(-\frac{1}{\pi\beta}\log\left(\frac{r}{2\sqrt{\ve}}\right)\right)rdr}_{\sim \ve} + \underbrace{\int_{r\gtrsim \sqrt{\ve}} \frac{\ve}{r^2} e^{-4r/\sqrt{\ve}},rdr}_{\sim\ve}
    \end{equation}
    where we used the fact that $\cosh (2u)-1-2u^2 \sim e^{2|u|}$ for large $u$ and $\sim u^4$ for small $u$.
    Thus the higher order term in Eq. \eqref{eq:functional} is bounded as $\ve \to \infty$ and the theorem statement follows.

    We now make the analysis precise.
    For simplicity, we shall use $A,B,A_1,B_1,...>0$ to denote constants that may change values from line to line.
    Fix $\delta >0$ and write $1/m^2 = \ve(1+\delta)$. 
    Let $G,G_N$ denote the Green function on $\dZ^2,\Lambda_N$ as given in Proposition \eqref{prop:prob-Green} and \eqref{prop:prob-Green-N}.
    Then define
    \begin{equation}
        u_x = \frac{1}{1+\delta}\frac{1}{\beta} G_N(x)
    \end{equation}
    By definition, we see that $u_x=0$ on the boundary $x\in \partial\Lambda_N$.
    By Corollary \eqref{cor:gradient}, note that there exists constant $A$ (independent of $N$) such that
    \begin{equation}
        |\nabla_e u| \le \frac{A}{\beta}, \quad e\in \Lambda_N
    \end{equation}
    Hence, for sufficiently large $\beta$, we see that 
    \begin{equation}
        \cosh(\nabla_e u)-1 \le \frac{1}{2}(1+\delta) (\nabla_e u)^2
    \end{equation}
    In particular, we see that 
    \begin{align}
        -\frac{u_0}{\beta} + \sum_{e\in \Lambda_N} (\cosh (\nabla_e u) -1) +\frac{1}{2 \ve} u^2 &\le -\frac{u_0}{\beta} + \frac{1}{2} (1+\delta) u(-\Delta +m^2) u\\
        &\le -\frac{1}{1+\delta} \frac{1}{2\beta^2} G_N(0) \\
        \limsup_{N\to \infty} \left[ -\frac{u_0}{\beta} + \sum_{e\in \Lambda_N} (\cosh (\nabla_e u) -1) +\frac{1}{2 \ve} u^2\right] &\le -\frac{1}{1+\delta} \frac{1}{2\beta^2} G(0) \\
        &\le -\frac{1}{1+\delta} \frac{1}{8\beta^2} \log\ve +\frac{A}{\beta^2}
    \end{align}
    To bound the error term, note that
    \begin{equation}
        \cosh(2u) -1-2u^2 \le A \min\left(e^{2|u|}, u^4\right)
    \end{equation}
    From Proposition \eqref{prop:small-mx} and \eqref{prop:large-mx}, we see that if $1/\sqrt{\ve}$ is bounded above, then there exists constants $C,C_1,C_2,D_1,D_2 >0$ independent of $\ve$ such that 
    \begin{equation}
        G(x) \le 
        \begin{dcases} 
            -C_1 \log\left(\frac{1}{m\norm{x}}\right) + C_2, & m\norm{x}  \le C \\
            \frac{D_1}{\sqrt{m\norm{x}}} e^{-D_2 m\norm{x}}, & m\norm{x} \ge C
        \end{dcases}
    \end{equation}
    Hence,
    \begin{align}
        \sum_{x\in \Lambda_N} (\cosh (2u_x) -1-2u_x^2) &\le A \left[\exp\left(\frac{A}{\beta} \log m^{-1}\right)\right.\\
        &\quad\quad\quad+\sum_{x\in \dZ^2,x\ne 0} \exp\left(\frac{A}{\beta} \log\left(\frac{1}{m\norm{x}}\right)\right) 1\{ m\norm{x}\le C\} \nonumber\\
        &\quad\quad\quad\left.+ \sum_{x\in \dZ^2} \frac{1}{(m\norm{x})^2} \exp\left(-B m\norm{x}\right) 1\{ m\norm{x}\ge C\}\right] \nonumber\\
        &\le A \left[\exp\left(\frac{A}{\beta} \log m^{-1}\right)\right.\\
        &\quad\quad\quad+\int_{\dR^2} dx \exp\left(\frac{A}{\beta} \log\left(\frac{1}{m\norm{x}}\right)\right) 1\{ m\norm{x}\le C\} \nonumber\\
        &\quad\quad\quad\left.+ \int_{\dR^2} \frac{dx}{(m\norm{x})^2} \exp\left(-B m\norm{x}\right) 1\{ m\norm{x}\ge C\} \right] \nonumber\\
        &\le A \left[\frac{1}{m^{A/\beta}} +\frac{1}{m^2} \int_0^C \rho^{1-A/\beta} d\rho +\frac{1}{m^2}\int_C^\infty \frac{d\rho}{\rho} e^{-B\rho}\right] \\
        &= O\left(\frac{1}{m^2}\right) = O\left(\ve\right)
    \end{align}
    Where the first inequality uses the fact that $0\le G_N(x)\le G(x)$ (which follows immediately from Proposition \eqref{prop:prob-Green} and \eqref{prop:prob-Green-N}), second inequality uses the fact that the integrand is decreasing with respect to $\norm{x}$ and the last equality uses the fact that $\beta$ is sufficiently large. 
    Therefore, we see that the error term is bounded, and thus the statement follows.
\end{proof}

\medskip
\noindent
\textbf{Acknowledgments.}
We thank Steven A. Kivelson, Julian May-Mann, Akshat Pandey, Christophe Garban, and Ron Peled for helpful discussions.
ACY was supported by NSF grant No. DMR-2310312 at Stanford and the Laboratory for Physical Sciences at CMTC.  NC was supported by ISF grant No. 1557-21.

\appendix

\section{Reflection Positivity and the Infrared Bound}

In this section, we shall prove a modification of the infrared bound for the clean Hamiltonian $\sH^\mathrm{cw}$ in Eq. \eqref{eq:H-clean-weak}.
In particular, we require the following lemmas.
\begin{lemma}
    \label{lem:conic}
    Let $p(\tau)$ denote a polynomial over Ising variables $\tau \in \{\pm 1\}^S$ for some finite subset $S$, with nonnegative coefficients.
    Then 
    \begin{equation}
        \sum_{\tau \in \{\pm 1\}^S} p(\tau) \ge 0
    \end{equation}
\end{lemma}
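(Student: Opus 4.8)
The plan is to reduce the statement to one elementary computation of a sum of monomials over the hypercube. First I would make precise what the hypothesis means: "polynomial with nonnegative coefficients" refers to a fixed monomial expansion $p(\tau)=\sum_{\mathbf n} c_{\mathbf n}\prod_{i\in S}\tau_i^{n_i}$ with all $c_{\mathbf n}\ge 0$, the sum running over finitely many exponent vectors $\mathbf n=(n_i)_{i\in S}\in\mathbb Z_{\ge0}^S$. By linearity of the sum over $\tau\in\{\pm1\}^S$, it then suffices to show that $\sum_{\tau\in\{\pm1\}^S}\prod_{i\in S}\tau_i^{n_i}\ge 0$ for every single exponent vector $\mathbf n$; the full claim follows by taking the nonnegative combination $\sum_{\mathbf n}c_{\mathbf n}(\,\cdot\,)$.

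Second, I would evaluate that monomial sum directly. Since the summand is a product of functions of the individual coordinates, the sum over the product measure factorizes: $\sum_{\tau\in\{\pm1\}^S}\prod_{i\in S}\tau_i^{n_i}=\prod_{i\in S}\bigl(\sum_{\tau_i=\pm1}\tau_i^{n_i}\bigr)=\prod_{i\in S}\bigl(1+(-1)^{n_i}\bigr)$. Each factor equals $2$ when $n_i$ is even and $0$ when $n_i$ is odd, so the product is $2^{|S|}$ if every $n_i$ is even and $0$ otherwise — in all cases nonnegative. Plugging this back in gives $\sum_{\tau}p(\tau)=2^{|S|}\sum_{\mathbf n:\ n_i\text{ all even}}c_{\mathbf n}\ge 0$, which is the assertion.

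I do not expect any real obstacle; the only care needed is in the bookkeeping of the hypothesis. One alternative I would note in passing is that one may first reduce $p$ modulo the relations $\tau_i^2=1$ to a multilinear polynomial — this reduction only replaces $\tau_i^{2k}\mapsto 1$ and $\tau_i^{2k+1}\mapsto\tau_i$, so it never introduces a minus sign and hence preserves nonnegativity of the (collected) coefficients — after which $\sum_\tau\prod_{i\in A}\tau_i$ is $2^{|S|}$ for $A=\varnothing$ and $0$ otherwise, giving the same conclusion. I would present the factorization argument as the main one since it avoids even that reduction and is a couple of lines.
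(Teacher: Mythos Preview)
Your proposal is correct and follows essentially the same approach as the paper: both reduce by linearity to showing $\sum_{\tau}\prod_i\tau_i^{n_i}\ge 0$ for each monomial, and then observe that this sum vanishes whenever some $n_i$ is odd (and is manifestly nonnegative otherwise). Your factorization $\prod_i(1+(-1)^{n_i})$ just makes that last step explicit.
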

\begin{proof}
    Since the coefficients of the polynomial are nonnegative, it's sufficient to show that 
    \begin{equation}
        \sum_{\tau \in \{\pm 1\}^S} \prod_{i\in S} \tau_i^{n_i} \ge 0
    \end{equation}
    For all $n_i\in \dN$.
    Indeed, this is straightforward by noticing that the summation is zero if there exists an odd integer $n_i$.
\end{proof}
\begin{lemma}
    \label{lem:XY-to-Ising}
    Let $f(e^{i\theta}):\dS^1 \to \dR$ denote a continuous function. Then 
    \begin{equation}
        \int_{-\pi}^\pi d\theta f(\theta) = \frac{1}{4} \int_{-\pi}^\pi d\theta \sum_{\xi,\eta =\pm 1} f(\xi|\cos\theta| +i\eta|\sin \theta|)
    \end{equation}
\end{lemma}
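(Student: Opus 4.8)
The plan is to prove the identity by symmetrization, using the two reflection symmetries of the circle together with the $2\pi$-periodicity inherited from the fact that $\theta \mapsto f(e^{i\theta})$ descends from a function on $\dS^1$. First I would record two elementary reductions. Since $\cos^2\theta + \sin^2\theta = 1$, each of the four numbers $\xi|\cos\theta| + i\eta|\sin\theta|$ with $\xi,\eta\in\{\pm1\}$ lies on $\dS^1$, so the right-hand side is well-defined; and because $\cos\theta\in\{|\cos\theta|,-|\cos\theta|\}$ and $\sin\theta\in\{|\sin\theta|,-|\sin\theta|\}$, relabeling the signs $\xi,\eta$ shows that, for every fixed $\theta$,
\begin{equation}
    \sum_{\xi,\eta=\pm1} f\bigl(\xi|\cos\theta|+i\eta|\sin\theta|\bigr) = \sum_{\xi,\eta=\pm1} f\bigl(\xi\cos\theta+i\eta\sin\theta\bigr).
\end{equation}
Hence it suffices to show $\int_{-\pi}^{\pi} f(e^{i\theta})\,d\theta = \tfrac14\int_{-\pi}^{\pi}\sum_{\xi,\eta=\pm1} f(\xi\cos\theta+i\eta\sin\theta)\,d\theta$.

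To do this I would symmetrize via two measure-preserving changes of variable on $(-\pi,\pi]$. The substitution $\theta\mapsto-\theta$ gives $\int_{-\pi}^{\pi} f(e^{i\theta})\,d\theta = \int_{-\pi}^{\pi} f(\cos\theta - i\sin\theta)\,d\theta$. Writing $g(\theta):=f(e^{i\theta})$, which is $2\pi$-periodic, the substitution $\theta\mapsto\pi-\theta$ maps $(-\pi,\pi]$ onto the interval $[0,2\pi)$, which is again a full period, so $\int_{-\pi}^{\pi} g(\theta)\,d\theta = \int_{-\pi}^{\pi} g(\pi-\theta)\,d\theta$; that is, $\int_{-\pi}^{\pi} f(e^{i\theta})\,d\theta = \int_{-\pi}^{\pi} f(-\cos\theta + i\sin\theta)\,d\theta$. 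Composing the two reflections yields the fourth representation with integrand $f(-\cos\theta - i\sin\theta)$. Averaging these four equal quantities produces precisely $\tfrac14\int_{-\pi}^{\pi}\sum_{\xi,\eta=\pm1} f(\xi\cos\theta+i\eta\sin\theta)\,d\theta$, and combining with the relabeling identity above completes the argument.

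There is no genuine obstacle here; the only step requiring a moment's care is verifying that $\theta\mapsto\pi-\theta$ preserves the value of $\int_{-\pi}^{\pi}$, which is exactly where the $2\pi$-periodicity of $g$ is used (the image interval $[0,2\pi)$ must be folded back onto $(-\pi,\pi]$). Continuity of $f$ enters only to ensure integrability, and boundedness on the compact circle would suffice.
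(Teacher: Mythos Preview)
Your proof is correct and amounts to the same idea as the paper's, which dispatches the lemma in one line by ``denoting $\xi,\eta$ as the sign of the $X,Y$ components of $e^{i\theta}$.'' That hint points to the quadrant decomposition: split $(-\pi,\pi]$ into the four arcs determined by the signs of $\cos\theta,\sin\theta$, observe that on each arc $e^{i\theta}=\xi|\cos\theta|+i\eta|\sin\theta|$ for the corresponding pair $(\xi,\eta)$, and note that the summed integrand depends only on $(|\cos\theta|,|\sin\theta|)$ so the integral over each arc is the same. Your version packages the same symmetry as two global measure-preserving reflections $\theta\mapsto-\theta$ and $\theta\mapsto\pi-\theta$ followed by averaging; the relabeling step you insert to strip the absolute values is exactly what makes the two presentations match. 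Either route works; yours is slightly more explicit about the role of $2\pi$-periodicity.
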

\begin{proof}
    The proof follows straightforwardly if we denote $\xi,\eta$ as the sign of the $X,Y$ components of $e^{i\theta}$.
\end{proof}
\begin{theorem}[Reflection Positivity and Gaussian Domination]
    \label{thm:gaussian-dom}
    Define the following Hamiltonian on the torus $\dZ^d_L$ of length $L$,
    \begin{align}
        \label{eq:H-clean-weak-torus}
        \sH^\mathrm{cw}_{\dZ^d_L, \ve,s}(\sigma) &= \frac{1}{2} (\sigma+s)\left(-\Delta + \ve^{-1}\dnum^Y\right)(\sigma+s)\\
        &\equiv  \frac{1}{2} \sum_{i\in \dZ^d_L}(\sigma_i+s_i) \left[\left(-\Delta + \ve^{-1}\dnum^Y\right)(\sigma+s)\right]_i
    \end{align}
    where $\sigma = e^{i\varphi}$ denotes the spin in $\dS^1$, $-\Delta$ is the discrete lattice Laplacian, and $\dnum^Y$ is the projection operator of the spin $\sigma$ onto its $Y$ axis, i.e., $\dnum^Y \sigma_i = \sin \varphi_i$. If $a >0$ amd $\hat{n}:\dZ_L^d \to \dS^1$, then the partition functions satisfy
    \begin{equation}
        \sZ^\mathrm{cw}_{\dZ_L^d, \ve, s=a\hat{n}} \le \sZ^\mathrm{cw}_{\dZ_L^d, \ve, s=a\hat{x}}
    \end{equation}
    Where $\hat{x}$ is uniformly the unit vector in the $X$-direction in $\dS^1$.
\end{theorem}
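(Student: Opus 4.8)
\emph{Overview and reorganizing the source.} This is a Gaussian–domination inequality, and the plan is to prove it by the reflection–positivity method of Fröhlich–Simon–Spencer, adapted to (i) $\dS^1$–valued spins, (ii) the on-site term $\ve^{-1}(\sigma_x\cdot\hat y)^2$, and (iii) a source entering $\sH^{\mathrm{cw}}$ quadratically through $A:=-\Delta+\ve^{-1}\dnum^Y$. First I would reorganize the source: expanding the square gives, up to a $\varphi$–independent additive constant $\tfrac12\,s\cdot As\ge 0$,
\[
\sH^{\mathrm{cw}}_{\dZ^d_L,\ve,s}(\sigma)=\sH^{\mathrm{cw}}_{\dZ^d_L,\ve,0}(\sigma)+\sum_{x}\sigma_x\cdot(As)_x,
\]
so the source acts only through the \emph{on-site} linear terms $\sigma_x\cdot(As)_x$. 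Since a constant field in the $X$–direction is annihilated by both $-\Delta$ and $\dnum^Y$, for $s\equiv a\hat x$ one has $As\equiv 0$ and $\sZ^{\mathrm{cw}}_{\dZ^d_L,\ve,s=a\hat x}=\sZ^{\mathrm{cw}}_{\dZ^d_L,\ve,0}$; thus the claim is equivalent to $\sZ^{\mathrm{cw}}_{\dZ^d_L,\ve,s}\le \sZ^{\mathrm{cw}}_{\dZ^d_L,\ve,0}$ for every $s$ with $|s_x|\equiv a$.

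\emph{Reflection positivity.} Next I would show that the sourced model is reflection positive for reflections through coordinate hyperplanes of the torus, through sites and through bonds; this is where Lemmas \ref{lem:XY-to-Ising} and \ref{lem:conic} are used. Writing $\sigma_x=(\xi_x|\cos\varphi_x|,\,\eta_x|\sin\varphi_x|)$ and averaging over $\xi,\eta\in\{\pm1\}^V$ via Lemma \ref{lem:XY-to-Ising}, each bond term $\cos\nabla_e\varphi$, the on-site term $(\sigma_x\cdot\hat y)^2$, and the source terms $\sigma_x\cdot(As)_x$ become \emph{affine} in the Ising variables with \emph{nonnegative} bond couplings; reflection positivity then reduces to that of ferromagnetic Ising interactions, which follows from the conic positivity of Lemma \ref{lem:conic}. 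The on-site pieces — including the full source after completing the square — are automatically compatible with any reflection.

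\emph{Chessboard reduction and the constant case.} Reflection positivity plus Cauchy–Schwarz for the reflection form give, for each plane $\Pi$, $(\sZ^{\mathrm{cw}}_{\dZ^d_L,\ve,s})^2\le \sZ^{\mathrm{cw}}_{\dZ^d_L,\ve,\mathcal R^+_\Pi s}\,\sZ^{\mathrm{cw}}_{\dZ^d_L,\ve,\mathcal R^-_\Pi s}$, where $\mathcal R^{\pm}_\Pi s$ reflects the restriction of $s$ to one side of $\Pi$ across the torus and preserves $|s_x|\equiv a$; iterating over all planes collapses $s$ to a spatially constant source, so $\sZ^{\mathrm{cw}}_{\dZ^d_L,\ve,s}\le\max_{|c|=a}\sZ^{\mathrm{cw}}_{\dZ^d_L,\ve,c}$. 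Since $-\Delta$ kills the constant shift, the right-hand side depends on $c$ only through $t:=c\cdot\hat y$, and completing the square in the on-site term yields
\[
\sZ^{\mathrm{cw}}_{\dZ^d_L,\ve,c}=\exp\!\Big(-\tfrac{\beta t^2|V|}{2\ve}\Big)\,\sZ^{\mathrm{cw}}_{\dZ^d_L,\ve,0}\,\Big\langle\exp\!\Big(-\tfrac{\beta t}{\ve}\textstyle\sum_x\sin\varphi_x\Big)\Big\rangle^{\mathrm{cw}}_{\dZ^d_L,\ve,0},
\]
so the remaining task is the zero-momentum bound $\big\langle e^{\mu\sum_x\sin\varphi_x}\big\rangle^{\mathrm{cw}}_{\dZ^d_L,\ve,0}\le e^{\mu^2\ve|V|/(2\beta)}$, which I would obtain from the same reflection–positivity estimate applied to the constant $\hat y$–source, using that the $Y$–sector capacity is $\sum_{x,y}(-\Delta+\ve^{-1})^{-1}_{xy}=\ve|V|$ — the value saturated by the massive Gaussian reference measure, with the constraint $|\sigma_x|=1$ only improving the bound.

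\emph{Main obstacle.} The genuine difficulty is this last point. For a pure gradient (bond) source the reduction to a constant source finishes the argument at once, because a constant bond field telescopes away on the torus; the constant \emph{on-site} $\hat y$–source in the mass term does not, so the chessboard step merely isolates the hard case rather than dispatching it. Closing it will require the quantitative structure of the massive Green's function of $-\Delta+\ve^{-1}\dnum^Y$ (in effect a hands-on proof of the $k=0$ infrared bound), together with a careful verification that reflection positivity holds for bond reflections — needed so the chessboard iteration reaches honest constants on the even torus — and uniformly over all sources in the comparison class.
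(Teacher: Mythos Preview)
Your reduction to a spatially constant source via reflection positivity is correct and matches the paper. However, the final step --- showing that among constant sources $c$ with $|c|=a$ the partition function is maximized at $c=a\hat x$ --- is where your proposal has a genuine gap, and it is also where you have misidentified the role of Lemmas~\ref{lem:conic} and~\ref{lem:XY-to-Ising}.

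\textbf{The circularity.} After reducing to constant $c$ with $t=c\cdot\hat y$, you correctly arrive at
\[
\sZ^{\mathrm{cw}}_{c}=e^{-\beta t^2|V|/(2\ve)}\,\sZ^{\mathrm{cw}}_{0}\,\big\langle e^{-\beta t\ve^{-1}\sum_x\sin\varphi_x}\big\rangle^{\mathrm{cw}}_{0},
\]
and state that it remains to prove $\langle e^{\mu\sum_x\sin\varphi_x}\rangle_0\le e^{\mu^2\ve|V|/(2\beta)}$. But this exponential bound is precisely the $k=0$ Gaussian domination inequality for the $Y$-sector --- i.e., the very special case $s=a\hat y$ of the theorem you are proving. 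Your suggested route (``the quantitative structure of the massive Green's function \dots in effect a hands-on proof of the $k=0$ infrared bound'') is circular: the infrared bound is a \emph{corollary} of Gaussian domination, not an independent input. The chessboard step has not isolated an easier case; it has isolated the only case with content.

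\textbf{How the paper closes it.} The paper avoids the MGF bound entirely. Writing the constant source as $\hat n=e^{i\gamma}$, it differentiates:
\[
\frac{\partial}{\partial\gamma}\sZ^{\mathrm{cw}}_{a\hat n}=-\frac{\beta a\cos\gamma}{\ve}\sum_i\int d\varphi\,e^{-\beta\sH^{\mathrm{cw}}_{a\hat n}}\big(\sin\varphi_i+a\sin\gamma\big),
\]
so that $\partial_\gamma\sZ\le 0$ on $(0,\pi/2)$ follows once one shows $\int d\varphi\,e^{-\beta\sH^{\mathrm{cw}}_{a\hat n}}\sin\varphi_x\ge 0$ whenever the source has a positive $Y$-component. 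Expanding the source term in powers of $\sin\varphi$, this reduces to
\[
\int d\varphi\,\Big(\prod_x\sin^{n_x}\varphi_x\Big)\,e^{\frac{\beta}{4\ve}\sum_y\cos 2\varphi_y}\,e^{\frac{\beta}{2}\sigma\cdot(-\Delta)\sigma}\ge 0\qquad\forall\,(n_x)\in\dN^{V}.
\]
\emph{This} is where Lemmas~\ref{lem:conic} and~\ref{lem:XY-to-Ising} enter: decomposing $\sigma_x=(\xi_x|\cos\varphi_x|,\eta_x|\sin\varphi_x|)$, the bond weight $e^{\beta\sigma_i\cdot\sigma_j}$ becomes a polynomial with nonnegative coefficients in $\xi_i\xi_j$ and $\eta_i\eta_j$, the on-site $\cos 2\varphi_y$ is sign-insensitive, and $\sin^{n_x}\varphi_x=\eta_x^{n_x}|\sin\varphi_x|^{n_x}$; summing over $\xi,\eta$ and invoking Lemma~\ref{lem:conic} gives nonnegativity.

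\textbf{On your use of the lemmas.} You invoke Lemmas~\ref{lem:conic}--\ref{lem:XY-to-Ising} to establish reflection positivity, but the paper does not need them there: RP follows directly from writing $\sH^{\mathrm{cw}}_s$ as $\frac12\sum_e|\nabla_e(\sigma+s)|^2+\frac{1}{2\ve}\sum_i(\sigma_i^Y+s_i^Y)^2$ and noting the on-site terms split across any reflection while the cross-plane edges give the standard $-2\sum C_{x^+}\bar C_{x^+}$ structure. The Ising decomposition is reserved for the derivative-sign argument above --- exactly the step your proposal leaves open.
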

Note that up to unimportant additive constants, the Hamiltonian in Eq. \eqref{eq:H-clean-weak} and Eq. \eqref{eq:H-clean-weak-torus} with $h=0$ are equal and thus they define the same finite-volume Gibbs measure.
\begin{proof}
    For notation simplicity, we shall omit the subscripts $\dZ_L^d,\ve$.
    Note that for $h=a\hat{n}$,
    \begin{equation}
        \sH^\mathrm{cw}_{s}= \frac{1}{2} \sum_{e\in \dZ_L^d}  |\nabla_e (\sigma+s)|^2 +\frac{1}{2\ve} \sum_i (\sigma^Y_i +s_i^Y)^2
    \end{equation}
    where $\sigma^Y,s^Y$ denote the $Y$-component of $\sigma,s$, respectively.
    Note that the second term acts onsite and thus $\sH^\mathrm{cw}_{s}$ is reflection positive \cite{friedli2017statistical}. 
    More specifically, consider a vertical cut $\gamma$ of $\dZ_L^d$ through some edge so that the vertices of $\dZ_L^d$ are partitioned into $V^\pm$ (with induced edges $E^\pm$), and let $E^0$ denote the edge intersecting the cut. 
    Then
    \begin{align}
        \sH^\mathrm{cw}_{s}&= \frac{1}{2} \sum_{e\in E^\pm} |\nabla_e (\sigma+s)|^2 \\
        &\quad+ \frac{1}{2} \sum_{e=i^\pm \in E^0:x^\pm \in V^\pm} \left[|\sigma_{x^+}+s_{x^+}|^2 +|\sigma_{x^-}+s_{x^-}|^2 -2 (\sigma_{x^+}+s_{x^+})\cdot (\sigma_{x^-}+s_{x^-})\right]\\
        &\quad +\frac{1}{2\ve} \sum_{x\in V^\pm} (\sigma_x^Y+s_x^Y)^2 \\
        &= A+ \gamma B -2\sum_{x^+ \in V^+: x^+\sim V^-}C_{x^+} \bar{C}_{x^+}
    \end{align}
    where $\sim$ denotes adjacency and 
    \begin{align}
        A &= \frac{1}{2} \sum_{e\in E^+} |\nabla_e (\sigma+s)|^2+ \frac{1}{2} \sum_{x^+ \in V^+:x^+\sim V^-}  |\sigma_{x^+}+s_{x^+}|^2+\frac{1}{2\ve} \sum_{x\in V^+} (\sigma_x^Y+s_x^Y)^2 \\
        \gamma B &= \frac{1}{2} \sum_{e\in E^-} |\nabla_e (\sigma+s)|^2+ \frac{1}{2} \sum_{x^- \in V^-:x^-\sim V^+}  |\sigma_{x^-}+s_{x^-}|^2+\frac{1}{2\ve} \sum_{x\in V^-} (\sigma_x^Y+s_x^Y)^2 \\
        C_{x} &= \sigma_{x}+s_{x}
    \end{align}
    Here, we write $\gamma B$ to be the reflection of $B$ with respect to the vertical cut $\gamma$, i.e., $(\gamma B)_x = B_{\gamma x}$ where $\gamma x$ is the site obtained from site $x$ after reflection across the vertical cut $\gamma$.
    Since $A,B$ and $C_{x^+}, x^+\in V^+$ are locally supported on $V^+$, we see by reflection positivity,
    \begin{equation}
        (\sZ^\mathrm{cw}_{s})^2 \le \sZ^\mathrm{cw}_{s^+}\sZ^\mathrm{cw}_{s^-}
    \end{equation}
    where $s^+: \dZ_L^d \to \dS^1$ is such that $s^+=s$ on $V^+$ and $=\gamma s$ on $V^-$, and similarly for $s^-$.

    Let $\sN(\hat{n})$ denote the collection of edges $e=xy$ such that $\hat{n}_x \ne \hat{n}_y$ and let $\hat{n}^\star:\dZ_L^d \to \dS^1$ be a configuration which maximizes the partition $\hat{n}\mapsto \sZ^\mathrm{cw}_{s=a\hat{n}}$ and has the minimum $|\sN(\hat{n}^\star)|$.
    Suppose that $|\sN(\hat{n}^\star)|>0$. 
    Then there exists a vertical cut through one of the edges in $\sN(\hat{n}^\star)$ and thus 
    \begin{equation}
        \sN(\hat{n}^{\star+}) +\sN(\hat{n}^{\star-}) <2\sN(\hat{n}^{\star})
    \end{equation}
    Without loss of generality, we can assume that $\sN(\hat{n}^{\star+}) \le \sN(\hat{n}^{\star-})$ so $\sN(\hat{n}^{\star+}) < \sN(\hat{n}^{\star})$.
    Since $\hat{n}^\star$ maximizes the partition function $\hat{n}\mapsto \sZ^\mathrm{cw}_{a\hat{n}}$, the previous inequality obtained from reflection positivity implies that
    \begin{equation}
        \sZ^\mathrm{cw}_{a\hat{n}^\star} \le \sZ^\mathrm{cw}_{a\hat{n}^{\star+}}
    \end{equation}
    And thus we reach a contradiction. 
    Hence, $\sN(\hat{n}^\star) =\emptyset$ and thus $\hat{n}^\star$ uniformly points in some direction for all sites in $\dZ_L^d$.

    We further claim that $\hat{n}^\star =\hat{x}$ uniformly maximizes the partition function. 
    Indeed, let $\hat{n} = e^{i\gamma}$. Then we see that 
    \begin{equation}
        \frac{\partial}{\partial \gamma} \sZ_{a\hat{n}}^\mathrm{cw} = -\frac{\beta a}{\ve} \sum_{i \in \dZ_L^d} \int d\varphi e^{-\beta \sH^\mathrm{cw}_{a\hat{n}}(\varphi)} \cos \gamma (\sin \varphi_i + a \sin \gamma)
    \end{equation}
    Since $\sH^\mathrm{cw}_{a\hat{n}}$ depends on $\hat{n}$ only implicitly via its $Y$-component, we see that it's sufficient to show that if $\gamma \in (0,\pi/2)$, then 
    \begin{equation}
        \int d\varphi e^{-\beta \sH^\mathrm{cw}_{a\hat{n}}(\varphi)}  \sin \varphi_i \ge 0
    \end{equation}
    Indeed, note that (using $\propto$ to denote proportional up to a nonnegative constant)
    \begin{align}
        \int d\varphi e^{-\beta \sH^\mathrm{cw}_{a\hat{n}}(\varphi)}  \sin \varphi_x &\propto \int d\varphi \sin \varphi_x \exp\left[-\frac{\beta}{2\ve} \sum_{y\in \dZ_L^d} (\sin^2 \varphi_y + 2a \sin \varphi_y \sin \gamma ) \right] \exp\left[\frac{\beta}{2} \sigma (-\Delta\sigma) \right]\\
        &\propto \int d\varphi \sin \varphi_x \exp\left[\frac{\beta a}{\ve} \sin \gamma \sum_{y} \sin \varphi_y  \right]\exp\left[\frac{\beta \ve}{4} \sum_y \cos (2\varphi_y)\right]\exp\left[\frac{\beta}{2} \sigma (-\Delta\sigma) \right]
    \end{align}
    Where the last equality follow from the transform $\varphi\mapsto \pi-\varphi$. 
    It's then sufficient to show that 
    \begin{equation}
        \int d\varphi \prod_x \sin^{n_x} \varphi_x \exp\left[\frac{\beta \ve}{4} \sum_y \cos (2\varphi_y)\right]\exp\left[\frac{\beta}{2} \sigma (-\Delta\sigma) \right] \ge 0, \quad \forall n_i \in \dN
    \end{equation}
    Note that 
    \begin{equation}
        \exp\left[\frac{\beta}{2} \sigma (-\Delta\sigma) \right] \propto \exp \left[ \beta \sum_{e=ij} (\xi_i\xi_j |X_i X_j| +\eta_i\eta_j|Y_i Y_j|)\right] = p(\xi_i\xi_j,\eta_i\eta_j: e=ij)
    \end{equation}
    Where $X_i,Y_i$ denote the $X,Y$ components of spin $\sigma$ with signs $\xi_i,\eta_i =\pm 1$, and $p$ denotes a polynomial of arguments $\xi_i\xi_j,\eta_i\eta_j$ with nonnegative coefficients.
    Hence, by Lemma \eqref{lem:conic} and \eqref{lem:XY-to-Ising}, we have
    \begin{equation}
        \int d\varphi \prod_x \sin^{n_x} \varphi_x \exp\left[\frac{\beta \ve}{4} \sum_y \cos (2\varphi_y)\right]\exp\left[\frac{\beta}{2} \sigma (-\Delta\sigma) \right] \propto \int d\varphi_i \sum_{\xi,\eta}  \prod_{i} \eta_i^{n_i} p(\xi_i\xi_j,\eta_i\eta_j) \ge 0
    \end{equation}
\end{proof}

\begin{corollary}[Infrared Bound]
    \label{cor:infrared}
    Let $\hat{n}:\dZ^d_L \to \dS^1$ and $s=a\hat{n}$. Then
    \begin{equation}
        \bra \exp\left[ -\beta \sigma(-\Delta +\ve^{-1} \dnum^Y)\sigma\right]\ket^\mathrm{cw}_{\dZ_L^d,\ve,\beta} \le \exp\left[\frac{\beta}{2} s(-\Delta +\ve^{-1} \dnum^Y)h\right]
    \end{equation}
    Where $\bra \cdots\ket^\mathrm{cw}_{\dZ_L^d,\ve,\beta}$ is with respect to the Hamiltonian $\sH^\mathrm{cw}_{\dZ_L^d,\ve}$.
    In particular,
    \begin{equation}
        \bra [\sigma(-\Delta +\ve^{-1}\dnum^Y) \hat{n}]^2\ket^\mathrm{cw}_{\dZ_L^d,\ve,\beta} \le \frac{1}{\beta} \hat{n} (-\Delta+\ve^{-1} \dnum^Y) \hat{n}
    \end{equation}
\end{corollary}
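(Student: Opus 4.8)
The plan is to derive both displayed inequalities from the Gaussian-domination statement of Theorem~\eqref{thm:gaussian-dom} by the classical infrared-bound argument (Fr\"ohlich--Simon--Spencer): first turn the variational comparison of that theorem into the ordinary bound $\sZ^\mathrm{cw}_{\dZ_L^d,\ve,s}\le\sZ^\mathrm{cw}_{\dZ_L^d,\ve,0}$, then complete the square to expose a moment-generating function, then differentiate twice in the source amplitude.

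Write $A:=-\Delta+\ve^{-1}\dnum^Y$, so that $\sH^\mathrm{cw}_{\dZ_L^d,\ve,s}(\sigma)=\tfrac12(\sigma+s)A(\sigma+s)$. First I would record that a globally constant source pointing in the $X$-direction lies in the kernel of $A$: for $s=a\hat x$ with $a$ constant we have $\nabla_e(\sigma+a\hat x)=\nabla_e\sigma$ on every edge $e$ and $(a\hat x)^Y=0$, so $\sH^\mathrm{cw}_{\dZ_L^d,\ve,a\hat x}=\sH^\mathrm{cw}_{\dZ_L^d,\ve,0}$ identically and hence $\sZ^\mathrm{cw}_{\dZ_L^d,\ve,a\hat x}=\sZ^\mathrm{cw}_{\dZ_L^d,\ve,0}$. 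Feeding this into Theorem~\eqref{thm:gaussian-dom} gives $\sZ^\mathrm{cw}_{\dZ_L^d,\ve,a\hat n}\le\sZ^\mathrm{cw}_{\dZ_L^d,\ve,0}$ for every field $\hat n\colon\dZ_L^d\to\dS^1$ and every $a\ge0$.

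Next I would complete the square. From $(\sigma+a\hat n)A(\sigma+a\hat n)=\sigma A\sigma+2a\,\sigma A\hat n+a^2\,\hat nA\hat n$, dividing by $\sZ^\mathrm{cw}_{\dZ_L^d,\ve,0}$ gives
\begin{equation*}
\frac{\sZ^\mathrm{cw}_{\dZ_L^d,\ve,a\hat n}}{\sZ^\mathrm{cw}_{\dZ_L^d,\ve,0}}=e^{-\frac{\beta a^2}{2}\hat nA\hat n}\;\bra e^{-\beta a\,\sigma A\hat n}\ket^\mathrm{cw}_{\dZ_L^d,\ve,\beta},
\end{equation*}
so the previous step (applied to $\hat n$ for $a\ge0$ and to $-\hat n$ for $a<0$, the case $a=0$ being trivial) yields $\bra e^{-\beta a\,\sigma A\hat n}\ket^\mathrm{cw}_{\dZ_L^d,\ve,\beta}\le e^{\frac{\beta a^2}{2}\hat nA\hat n}$ for all $a\in\dR$, which is the first displayed inequality (with $s=a\hat n$). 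For the second (``in particular'') bound I would note that $g(a):=e^{\frac{\beta a^2}{2}\hat nA\hat n}-\bra e^{-\beta a\,\sigma A\hat n}\ket^\mathrm{cw}_{\dZ_L^d,\ve,\beta}$ is real-analytic in $a$ (finite torus, bounded integrand), is nonnegative, and vanishes at $a=0$; hence $a=0$ is a global minimum and $g''(0)\ge0$. Since $\tfrac{d^2}{da^2}\big|_{0}\bra e^{-\beta a\,\sigma A\hat n}\ket^\mathrm{cw}=\beta^2\bra(\sigma A\hat n)^2\ket^\mathrm{cw}$ and $\tfrac{d^2}{da^2}\big|_{0}e^{\frac{\beta a^2}{2}\hat nA\hat n}=\beta\,\hat nA\hat n$, the inequality $g''(0)\ge0$ reads exactly $\bra(\sigma A\hat n)^2\ket^\mathrm{cw}_{\dZ_L^d,\ve,\beta}\le\tfrac1\beta\,\hat nA\hat n$. (As a byproduct $g'(0)=0$ forces $\bra\sigma A\hat n\ket^\mathrm{cw}=0$, consistent with the $\sigma\mapsto-\sigma$ symmetry of $\sH^\mathrm{cw}_{\dZ_L^d,\ve,0}$.)

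I do not expect a real obstacle: the reflection-positivity input, which is the substantive part, is already delivered by Theorem~\eqref{thm:gaussian-dom}. The only step that is not purely mechanical is the zero-mode observation above — that the uniform $\hat x$ configuration against which Theorem~\eqref{thm:gaussian-dom} compares has the same energy as the zero source — and that is precisely what promotes the theorem's comparison to ordinary Gaussian domination; the completion of the square and the two short derivative computations then finish the argument.
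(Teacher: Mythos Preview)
Your proposal is correct and follows exactly the paper's route: the paper likewise observes that $\sZ^{\mathrm{cw}}_{a\hat x}$ equals the zero-source partition function, expands $\sZ^{\mathrm{cw}}_{a\hat n}$ by completing the square to obtain the moment-generating inequality, and then reads off the second bound by Taylor expansion about $a=0$. Your formulation via $g''(0)\ge 0$ at an interior minimum is the same Taylor argument said slightly more carefully, and your extension to $a<0$ via $-\hat n$ is a clean way to secure the two-sided nonnegativity of $g$.
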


\begin{proof}
    For notation simplicity, we shall omit the subscripts $\dZ_L^d,\ve$.
    Note that 
    \begin{equation}
        \sZ^\mathrm{cw}_{a\hat{x}} = \int d\varphi e^{-\beta \sH^\mathrm{cw}(\varphi)}
    \end{equation}
    And that for any $\hat{n}:\dZ_L^d \to \dS^1$, 
    \begin{equation}
        \sZ_{a\hat{n}}^\mathrm{cw} = \exp\left[-\frac{\beta}{2} a^2 \hat{n} (-\Delta +\ve^{-1} \dnum^Y) \hat{n}\right] \int d\varphi  e^{-\beta \sH^\mathrm{cw}(\varphi)} \exp\left[ -\beta a \sigma (-\Delta +\ve^{-1}\dnum^Y)\hat{n}\right] \\
    \end{equation}
    Hence, by the previous Gaussian domination bound in Theorem \eqref{thm:gaussian-dom}, we see that 
    \begin{equation}
        \bra \exp\left[ -\beta \sigma(-\Delta +\ve^{-1} \dnum^Y)\sigma\right]\ket^\mathrm{cw}_{\dZ_L^d,\ve,\beta} \le \exp\left[\frac{\beta}{2} h(-\Delta +\ve^{-1} \dnum^Y)h\right]
    \end{equation}
    In particular, if we take $a\to 0$, then Taylor expansion implies that 
    \begin{equation}
        \bra [\sigma(-\Delta +\ve^{-1}\dnum^Y) \hat{n}]^2\ket^\mathrm{cw}_{\dZ_L^d,\ve,\beta} \le \frac{1}{\beta} \hat{n} (-\Delta+\ve^{-1} \dnum^Y) \hat{n}
    \end{equation}
\end{proof}

\section{Lattice Green Function}
\begin{definition}
    \label{def:lattice-Green}
    The lattice Green function with mass $m$ on $\dZ^d,d\ge 2$ is defined as 
    \begin{equation}
        G(x) \equiv \int_{(-\pi,\pi)^d} \frac{d k}{(2\pi)^d} \frac{e^{ikx}}{\tilde{k}^2/2d +m^2}, \quad \tilde{k}^2 =\sum_{i=1}^2 \left(2 \sin\left(\frac{k_i}{2}\right)\right)^2
    \end{equation}
    Where we have written it in this manner to closely mirror the continuum Green function, i.e., $\tilde{k}^2 \approx k^2$ in the limit $k\to 0$.
\end{definition}

\begin{remark}
    Using the fact that $\lambda^{-1} = \int_0^\infty e^{-s\lambda}ds$, the lattice Green function can be rewritten as 
    \begin{equation}
        G(x) = \int_0^\infty e^{-s m^2} \prod_{i=1}^2 \bar{I}_{x_i} \left(\frac{s}{2}\right), \quad \bar{I}_n (s) = e^{-s} I_{n} \left(s\right)
    \end{equation}
    Where $I_n(s)$ is the modified Bessel function of the first kind and has the following integral form 
    \begin{equation}
        I_n(s) = \int_{-\pi}^\pi \frac{d\theta}{2\pi} e^{s \cos \theta + in\theta}
    \end{equation}
\end{remark}

\begin{lemma}[Lemma B.1 of Ref. \cite{michta2021asymptotic}]
    There exists $A,\delta>0$ such that
    \begin{equation}
        \bar{I}_\nu (\nu^2 s) \le A \left(\frac{1}{\nu\sqrt{s}}e^{-\delta^2 /s}1\{2\nu s\ge 1\}+e^{-\delta \nu} 1\{2\nu s <1\} \right)
    \end{equation}
\end{lemma}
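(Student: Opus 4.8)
The plan is to reduce the estimate to the classical uniform (Debye) asymptotics for Bessel functions of large order, followed by one–variable calculus. Set $z:=\nu s$, so that the quantity to be bounded is $\bar{I}_\nu(\nu^2 s)=e^{-\nu z}\,I_\nu(\nu z)$, and put
\[
    \eta(z)=\sqrt{1+z^2}+\log\frac{z}{1+\sqrt{1+z^2}},\qquad g(z):=\eta(z)-z .
\]
The starting point is the leading term of the Debye expansion together with its standard uniform remainder bound: there is an absolute constant $A_0$ so that for all $\nu\ge 1$ and all $z>0$,
\[
    I_\nu(\nu z)\;\le\;\frac{A_0\,e^{\nu\eta(z)}}{\sqrt{\nu}\,(1+z^2)^{1/4}},
\]
whence $\bar{I}_\nu(\nu^2 s)\le A_0\,\nu^{-1/2}(1+z^2)^{-1/4}\,e^{\nu g(z)}$. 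One may quote this directly (it is exactly what is used in \cite{michta2021asymptotic}) or establish it by steepest descent applied to the Schl\"afli representation $I_\nu(\nu z)=\frac{1}{2\pi i}\oint e^{\nu(z\cosh w-w)}\,dw$: the saddle sits at $w_0=\operatorname{arcsinh}(1/z)$, the critical value is $z\cosh w_0-w_0=\eta(z)$, and the Gaussian factor around $w_0$ has size $\nu^{-1/2}(1+z^2)^{-1/4}$.

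Next I would record two elementary properties of $g$ on $(0,\infty)$. Differentiating, $g'(z)=\eta'(z)-1=\big(\sqrt{1+z^2}-z\big)/z>0$, so $g$ is strictly increasing; since $\sqrt{1+z^2}-z\to 0$ and $z/(1+\sqrt{1+z^2})\to 1$ as $z\to\infty$, we get $g(z)\uparrow 0$, hence $g(z)<0$ for every $z>0$, and in particular $g(z)\le g(1/2)=:-c_1<0$ whenever $z\le 1/2$. Expanding at $z=\infty$ gives $g(z)=-\tfrac{1}{2z}+O(z^{-3})$, so $z\,g(z)$ is continuous and strictly negative on $[1/2,\infty)$ with limit $-\tfrac12$ at infinity; hence $\sup_{z\ge 1/2} z\,g(z)=:-c_2<0$.

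The proof then splits on the dichotomy in the statement. If $2\nu s<1$, then $z=\nu s<\tfrac12$, so $(1+z^2)^{1/4}\ge 1$ and $g(z)\le -c_1$, giving $\bar{I}_\nu(\nu^2 s)\le A_0\,\nu^{-1/2}e^{-c_1\nu}\le A_0\,e^{-c_1\nu}$ for $\nu\ge1$. If $2\nu s\ge1$, then $z\ge\tfrac12$, so $(1+z^2)^{1/4}\ge z^{1/2}$ and therefore $\nu^{-1/2}(1+z^2)^{-1/4}\le(\nu z)^{-1/2}=(\nu^2 s)^{-1/2}=(\nu\sqrt{s})^{-1}$, while $\nu g(z)=(\nu/z)\cdot z\,g(z)\le -c_2\,\nu/z=-c_2/s$; thus $\bar{I}_\nu(\nu^2 s)\le A_0\,(\nu\sqrt{s})^{-1}e^{-c_2/s}$. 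Choosing $A:=\max(A_0,1)$ (the enlargement handling the degenerate case $\nu=0$, where $\nu^2 s=0$ and $\bar{I}_0(0)=1$) and $\delta:=\min(c_1,\sqrt{c_2})>0$ yields the asserted bound in both regimes at once.

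The only nontrivial ingredient is the uniform Bessel bound in the first display; everything afterward is routine calculus on the single function $g$. So the main obstacle, if one does not wish simply to cite it, is supplying the uniform-in-$\nu$ error control in the saddle-point estimate — but this is entirely standard, and once it is in hand the conclusion follows from the two–regime split above.
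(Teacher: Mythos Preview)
Your proof is correct and follows essentially the same route as the paper's: both invoke the uniform Debye bound $\bar{I}_\nu(\nu z)\le A\,\nu^{-1/2}(1+z^2)^{-1/4}e^{\nu g(z)}$ for $\nu\ge 1$ (the paper secures uniformity by citing the Fr\"ohlich--Spencer remainder $O(1/\nu(1+t^2))$, which is exactly the ``standard'' input you invoke), and then split according to $z=\nu s\gtrless 1/2$ --- your function $g$ is identical to the paper's $\psi$. The paper defers the post-Debye calculus to \cite{michta2021asymptotic}, whereas you have written those details out explicitly and correctly.
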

\begin{proof}
    For simplicity, we will use $A >0$ as a constant which may change values from line to line.
    Moreover, we can assume without loss of generality that $\nu >0$.
    Note that in Lemma B.1 of Ref. \cite{michta2021asymptotic}, the inequality was proven for sufficiently large $\nu$. 
    The reason is that Ref. \cite{michta2021asymptotic} utilized the fact that 
    \begin{equation}
        \bar{I}_\nu (\nu t) = L_\nu (t) (1+o(1)), \quad \nu \to \infty
    \end{equation}
    Where the convergence is uniform in $t$ and 
    \begin{equation}
        L_\nu (t) = \frac{1}{\sqrt{2\pi \nu}} \frac{e^{\nu \psi(t)}}{(1+t^2)^{1/4}}, \quad \psi(t) = -t +\sqrt{1+t^2} -\sinh^{-1} (t^{-1})
    \end{equation}
    For sufficiently large $\nu$, we then have the upper bound $\bar{I}_\nu (\nu t) \le 2 L_\nu (t)$.
    One can relax this condition by utilizing Eq. (B.12) in Ref. \cite{frohlich1981kosterlitz}, in which it was shown that 
    \begin{equation}
        \bar{I}_\nu (\nu t) =L_\nu (t) \left(1+O\left(\frac{1}{n(1+t^2)}\right)\right)
    \end{equation}
    Hence, if $n\ge 1$, then $\bar{I}_\nu (\nu t) \le A L_\nu (t)$ and the remainder of the proof is exactly the same as that shown in Lemma B.1 of Ref. \cite{michta2021asymptotic}.
\end{proof}

\begin{prop}
    \label{prop:small-mx}
    Let $G(x)$ denote the lattice Green function on $\dZ^2$ with mass $m$. For any sufficiently large constant $C>0$, there exists constants $C_1,C_2 >0$ (independent of $m$) such that if $0<\norm{x}_2 a \le C$ with $x\in \dZ^2$, then
    \begin{equation}
        G(x) \le C_1 \log \left(\frac{1}{m \norm{x}_2}\right) +C_2
    \end{equation}
\end{prop}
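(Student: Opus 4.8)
The plan is to work from the heat-kernel representation
\[
G(x)=\int_0^\infty e^{-sm^2}\,\bar I_{x_1}(s/2)\,\bar I_{x_2}(s/2)\,ds
\]
recorded in the Remark after Definition~\eqref{def:lattice-Green}. The crucial feature is that the integrand is \emph{nonnegative}, so all of the decay in $\|x\|$ is already built in through the fact that $\bar I_\nu(t)$ is exponentially small once its order $\nu$ exceeds its argument $t$; no cancellation must be extracted by hand. Write $r=\|x\|_2>0$ and, after relabelling coordinates, assume $|x_1|\ge|x_2|$, so that $x_1^2\ge r^2/2$ and (since $x\in\mathbb Z^2\setminus\{0\}$) $|x_1|\ge 1$; recall also $\bar I_n=\bar I_{|n|}$. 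First I would dispose of the regime $m\ge 1/C$: there $\|x\|_2\le C/m\le C^2$, so $x$ ranges over finitely many lattice points and $G(x)\le G(0)\le\int\frac{dk}{(2\pi)^2}\big(\tilde k^2/4+C^{-2}\big)^{-1}=:\Gamma_0(C)<\infty$; since $m\|x\|_2\le C$ gives $\log\big(1/(m\|x\|_2)\big)\ge-\log C$, this case is absorbed by enlarging $C_2$. Hence I may assume $m\le 1/C$, which together with $|x_1|\le r\le C/m$ forces $|x_1|\le 1/m^2$, the inequality that makes the splitting below legitimate.

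For $m\le 1/C$ I would split $\int_0^\infty=\int_0^{|x_1|}+\int_{|x_1|}^{1/m^2}+\int_{1/m^2}^\infty$. Applying the estimate on $\bar I_\nu(\nu^2 s)$ from the preceding lemma with $\nu=|x_1|\ge1$ and the substitution $s/2=\nu^2\sigma$ (which turns the dichotomy $2\nu\sigma<1$ versus $2\nu\sigma\ge 1$ into $s<\nu$ versus $s\ge\nu$) gives $\bar I_{x_1}(s/2)\le Ae^{-\delta|x_1|}$ for $s<|x_1|$ and $\bar I_{x_1}(s/2)\le A\sqrt2\,s^{-1/2}e^{-2\delta^2x_1^2/s}\le A\sqrt2\,s^{-1/2}e^{-\delta^2 r^2/s}$ for $s\ge|x_1|$. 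Since $\bar I_{x_2}\le\bar I_0$, the standard bounds $\bar I_0(t)\le1$ and $\bar I_0(t)\le C_0 t^{-1/2}$ for $t\ge1$ yield $\bar I_{x_2}(s/2)\le C_0' s^{-1/2}$ on the middle interval (where $s\ge|x_1|\ge1$) and $\bar I_{x_2}(s/2)\le1$ elsewhere. The first piece is then $\le A|x_1|e^{-\delta|x_1|}$, bounded uniformly in $x$; the third is $\le\int_{1/m^2}^\infty e^{-sm^2}s^{-1}ds=\int_1^\infty e^{-v}v^{-1}dv\le e^{-1}$ after $v=sm^2$. For the middle piece I drop $e^{-sm^2}\le1$ to reduce to a constant times $\int_{|x_1|}^{1/m^2}s^{-1}e^{-\delta^2 r^2/s}\,ds$, and substitute $u=\delta^2 r^2/s$ to obtain $\int_{\delta^2(rm)^2}^{\delta^2 r^2/|x_1|}u^{-1}e^{-u}\,du$, whose upper limit is $\le\sqrt2\,\delta^2 r$. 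Splitting this at $u=1$ and using $u^{-1}e^{-u}\le u^{-1}$ on $(0,1]$ and $\le e^{-u}$ on $[1,\infty)$ bounds it by $\log_+\!\big(\delta^{-2}(rm)^{-2}\big)+e^{-1}\le 2\log\big(1/(m\|x\|_2)\big)+B$, the last step using $rm\le C$ to absorb the case in which $\log(1/(rm))$ is slightly negative. Summing the three contributions gives $G(x)\le C_1\log\big(1/(m\|x\|_2)\big)+C_2$ with $C_1,C_2$ depending only on $A,\delta,C_0,C$, hence independent of $m$ and of $x$.

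I expect the only genuinely delicate point to be the bookkeeping of the three competing scales $|x_1|\asymp\|x\|$ (the Bessel order), $\|x\|^2$ (the diffusive scale at which $e^{-\delta^2 r^2/s}$ switches on), and $1/m^2$ (the mass cutoff), while keeping every constant truly free of $m$; in particular one must know $|x_1|\le 1/m^2$ before splitting at $1/m^2$, which is exactly why the regime $m\ge 1/C$ is peeled off at the start. It is also essential to retain the $s^{-1/2}$ decay of \emph{both} Bessel factors on the middle interval — using only $\bar I_{x_2}\le1$ would leave a non-integrable $s^{-1/2}$ singularity at the lower endpoint — and this is the lattice reflection of the logarithmic (rather than power-law) growth of $G$ in two dimensions. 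Everything else (the Bessel estimates, the substitutions, and the elementary one-variable integral bounds) is routine once this scale structure is in place.
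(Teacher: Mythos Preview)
Your argument is correct and follows the same overall strategy as the paper (heat-kernel representation plus the Bessel lemma), but the bookkeeping is organized differently. The paper applies the lemma to \emph{both} factors $\bar I_{x_1},\bar I_{x_2}$ (handling $x_2=0$ as a separate case), combines them into a single bound of the form $A\big[e^{-\delta\|x\|}1\{s\le B\|x\|\}+s^{-1}e^{-\delta^2\|x\|^2/s}\big]$, and then recognizes the resulting time integral as $K_0(2\delta m\|x\|)$, whose small-argument asymptotics produce the logarithm. You instead bound only the dominant factor $\bar I_{x_1}$ via the lemma, control the other by the cruder $\bar I_{x_2}\le\bar I_0\le\min(1,C_0's^{-1/2})$, and split the time integral at the two scales $|x_1|$ and $1/m^2$, evaluating the middle piece directly via the substitution $u=\delta^2r^2/s$. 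Your route is a bit more elementary (no appeal to $K_0$, no case split on whether $x_2=0$), at the price of tracking one extra interval and the preliminary reduction to $m\le 1/C$.

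One small slip in exposition: on the third interval $s\ge 1/m^2$ you say you use $\bar I_{x_2}(s/2)\le 1$, but that would leave only $s^{-1/2}$ from the $\bar I_{x_1}$ factor, not the $s^{-1}$ you need for $\int_{1/m^2}^\infty e^{-sm^2}s^{-1}\,ds$ to be bounded. Since $s\ge 1/m^2\ge C^2\ge 1$ there, the $s^{-1/2}$ bound on $\bar I_{x_2}$ is equally available and gives the claimed $s^{-1}$; only the description of which bound is invoked needs correcting, not the inequality itself.
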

\begin{proof}
    We will use $A,B,\delta >0$ as constants which may change values from line to line.
    Also note that all norms on $\dZ^2 \subseteq \dR^2$ are equivalent (the ratio of distinct norms are bounded above and below uniformly in $x$) and thus we will always use $\norm{x} \equiv \norm{x}_2$ as the default norm.
    Without loss of generality, we shall assume that $x_1 \ge x_2 \ge 0$ and that $x\ne 0$ (since at $x=0$, it's know that the inequality holds, e.g., Proposition 8.32 of Ref. \cite{friedli2017statistical}).
    Using the previous lemma, we see that if $x_i > 0$, then 
    \begin{align}
        \bar{I}_{x_i} \left(\frac{s}{2} \right) &= \bar{I}_{x_i}\left( x_i^2 \frac{s}{2x_i^2}\right) \\
        &\le A \left(\frac{1}{\sqrt{s}} e^{-(\delta  x_i)^2/s} 1\{ s \ge x_i\} +e^{-\delta x_i} 1\{ s < x_i\} \right)
    \end{align}
    Alternatively, if $x_i =0$, we note that
    \begin{equation}
        \bar{I}_{x_i} \left(\frac{s}{2} \right) \le A \min\left(1,\frac{1}{\sqrt{s}}\right)
    \end{equation}
    We then claim that
    \begin{equation}
        G(x) \le A \int_0^\infty e^{-s m^2} \left[e^{-\delta^2 \norm{x} }  1\{s\le B\norm{x}\} + \frac{1}{s} e^{-\delta^2 \norm{x}^2 /s}\right]
    \end{equation}
    Indeed, if $x_2 >0$, then
    \begin{equation}
        \prod_{i=1}^2 \bar{I}_{x_i} \left(\frac{s}{2}\right) \le A \left[
            \frac{1}{s} e^{-(\delta \norm{x}_2)^2/s} 1\left\{s \ge x_1\right\} + 
            e^{-\delta x_1} \frac{1}{\sqrt{s}} e^{-(\delta x_2)^2/s} 1\left\{x_1 > s \ge x_2\right\} +
            e^{-\delta \norm{x}_1} 1\left\{x_2 >s \right\}
            \right]
    \end{equation}
    Note that for $s\ge x_2$, we have 
    \begin{equation}
        \frac{1}{\sqrt{s}} e^{-(\delta x_2)^2/s} \le \frac{1}{\sqrt{x_2}} \le A e^{-\delta x_2}
    \end{equation}
    where we utilized the fact that $x_2 \ge 1$. 
    Hence, for $x_i \ne 0$ for both $i=1,2$, we have
    \begin{align}
        \prod_{i=1}^2 \bar{I}_{x_i} \left(\frac{s}{2}\right) \le A \left[
            \frac{1}{s} e^{-(\delta \norm{x}_2)^2/s} 1\left\{s \ge \norm{x}_\infty\right\} 
            +e^{-\delta \norm{x}_1} 1\left\{s < \norm{x}_\infty\right\}\right]
    \end{align}
    The case is similar if $x_2=0$.
    Utilizing the fact that all norms are equivalent, the claim then follows.

    We can then bound the first term as follows 
    \begin{align}
        \int_0^\infty e^{-s m^2} e^{-\delta \norm{x} }  1\{s\le B\norm{x}\} &= e^{-\delta \norm{x}} \frac{1}{m^2} \left(1- e^{-m^2 B\norm{x}}\right) \\
        &\le B\norm{x} e^{-\delta \norm{x}} \\
        &\le A
    \end{align}
    The second term is bounded by 
    \begin{align}
        \int_0^\infty e^{-s m^2}  e^{-\delta^2 \norm{x}^2/s} \frac{ds}{s} &= \int_{-\infty}^\infty e^{-2\delta m\norm{x} \cosh t} dt \\
        &= K_0 (2\delta m\norm{x} ) \\
        &\le C_1 \log\left(\frac{1}{m\norm{x}}\right)+A
    \end{align}
    Where $K_0$ is the modified Bessel function of the second kind.
    Hence, the statement follows.
\end{proof}

\begin{prop}[Theorem 1.3 in Ref. \cite{michta2021asymptotic}]
    \label{prop:large-mx}
    Let $G(x)$ be the lattice Green function on $\dZ^2$ with mass $m$. 
    Let $m$ be upper bounded (say $m \le 1$). Then there exists constants $C,C_1,C_2 >0$ (dependent on upper bound, but independent of $m$) such that if $\norm{x}_2 m \ge C$, then 
    \begin{equation}
        G(x) \le \frac{C_1}{\sqrt{m\norm{x}_2}} \exp\left( -C_2 m\norm{x}_2 \right)
    \end{equation}
\end{prop}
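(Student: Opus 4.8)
The plan is to run the subordination (heat-kernel) representation recorded above,
\[
G(x)=\int_0^\infty e^{-sm^2}\,\bar I_{x_1}\!\big(\tfrac s2\big)\,\bar I_{x_2}\!\big(\tfrac s2\big)\,ds ,
\]
together with the Bessel asymptotics of Lemma~B.1 of \cite{michta2021asymptotic}, in exactly the spirit of the proof of Proposition~\ref{prop:small-mx} but now tracking the exponential-in-$m\|x\|_2$ decay instead of the logarithmic behaviour. By symmetry assume $x_1\ge x_2\ge0$, so $x_1=\|x\|_\infty\ge\|x\|_2/\sqrt2$; since $m\le1$ and $m\|x\|_2\ge C$, both $x_1$ and $\|x\|_2$ are large. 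Rescaling the lemma with $\nu=x_i$ gives, for $x_i\ge1$,
\[
\bar I_{x_i}\!\big(\tfrac s2\big)\le A\Big(\tfrac1{\sqrt s}\,e^{-c_0x_i^2/s}\,\mathbf 1\{s\ge x_i\}+e^{-\delta x_i}\,\mathbf 1\{s<x_i\}\Big),
\]
with $A,c_0,\delta>0$ absolute, and $\bar I_0(\tfrac s2)\le A\min(1,s^{-1/2})$ in case $x_2=0$; all constants are uniform in $m$.

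First I would treat the dominant region $s\ge x_1$. There the product of the two Bessel factors is at most $\tfrac{A^2}{s}e^{-c_0(x_1^2+x_2^2)/s}=\tfrac{A^2}{s}e^{-c_0\|x\|_2^2/s}$, so this part of the integral is bounded by
\[
A^2\int_0^\infty\frac{ds}{s}\,e^{-sm^2-c_0\|x\|_2^2/s}=2A^2\,K_0\!\big(2\sqrt{c_0}\,m\|x\|_2\big),
\]
after the substitution $s=\sqrt{c_0}\,\|x\|_2\,u/m$. Using $K_0(w)=\sqrt{\pi/2w}\,e^{-w}(1+O(1/w))$, so that $K_0(w)\le C'w^{-1/2}e^{-w}$ for $w\ge w_0$, and choosing $C$ large enough that $2\sqrt{c_0}\,m\|x\|_2\ge w_0$, this region contributes at most $C_1(m\|x\|_2)^{-1/2}e^{-C_2m\|x\|_2}$ with $C_1,C_2$ depending only on $c_0$.

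Next I would dispose of the region $s<x_1$. There I bound $\bar I_{x_1}(\tfrac s2)\le Ae^{-\delta x_1}$, $\bar I_{x_2}(\tfrac s2)\le1$ (as $\bar I_n\le\bar I_0\le1$), and $\int_0^{x_1}e^{-sm^2}\,ds\le x_1$, so this part is at most $Ax_1e^{-\delta x_1}\le A'e^{-\delta x_1/2}\le A'e^{-\delta\|x\|_2/(2\sqrt2)}$. Since $m\le1$ gives $\|x\|_2\ge m\|x\|_2$, this is $\le A'e^{-cm\|x\|_2}$, and the elementary bound $\sqrt w\,e^{-cw/2}\lesssim1$ turns it into $\le C_1'(m\|x\|_2)^{-1/2}e^{-(c/2)m\|x\|_2}$. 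Combining the two regions and lowering $C_1,C_2$ as needed gives the claim. (Alternatively one may simply cite Theorem~1.3 of \cite{michta2021asymptotic}; the only point needing attention is that its constants be uniform over $m\in(0,1]$, which the argument above makes explicit.)

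The main obstacle is purely bookkeeping: ensuring every constant -- in particular those emerging from the $K_0$ asymptotics and from the crude estimate $\int_0^{x_1}e^{-sm^2}ds\le x_1$ -- is genuinely uniform in $m\in(0,1]$, and making sure the polynomial prefactor $(m\|x\|_2)^{-1/2}$ in the stated bound is actually produced rather than only a bare exponential $e^{-cm\|x\|_2}$; the latter is arranged by trading a constant fraction of the exponential rate for the prefactor via $\sqrt w\,e^{-cw}\lesssim1$.
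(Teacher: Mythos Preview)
The paper does not prove this proposition at all: it is stated with the attribution ``Theorem~1.3 in Ref.~\cite{michta2021asymptotic}'' and no argument is given. Your proposal therefore goes beyond what the paper does, supplying a self-contained proof in the same heat-kernel/Bessel style used for Proposition~\ref{prop:small-mx}. The argument is correct: the split into $s\ge x_1$ and $s<x_1$, the reduction of the first region to the $K_0$ integral, the large-argument asymptotics $K_0(w)\lesssim w^{-1/2}e^{-w}$, and the crude treatment of the small-$s$ region via $\bar I_{x_2}\le 1$ and $\int_0^{x_1}e^{-sm^2}\,ds\le x_1$ all check out, and the uniformity in $m\in(0,1]$ is indeed only used when you invoke $\|x\|_2\ge m\|x\|_2$. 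The final trick of trading exponential rate for the $(m\|x\|_2)^{-1/2}$ prefactor is fine. In short, your route is a genuine (and correct) proof where the paper merely cites; the benefit of your approach is self-containment and explicit uniformity in $m$, while the paper's citation is of course shorter.
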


\section{Probabilistic Interpretation}
\label{app:prob-G}
As discussed in detail in Ref. \cite{friedli2017statistical} (also see Ref. \cite{lawler2012intersections} for the massless lattice Green function), the lattice Green function with mass $m$ has a nice probabilistic interpretation which can be naturally restricted to finite boxes $\Lambda_N \subseteq \dZ^d$.
More specifically, let $S_n$ denote a simple symmetric random walk on $\dZ^d$ with probability $e^{-\lambda^2} = (1+m^2)^{-1}$ to be killed at any step and denote the corresponding probability as $\bE$. 
Then the lattice Green function $G(x)$ in Definition \eqref{def:lattice-Green} can be rewritten as 
\begin{prop}[Theorem 8.26 in Ref. \cite{friedli2017statistical}]
    \label{prop:prob-Green}
    Let $\Delta$ be the Laplacian operator on $\dZ^d,d\ge 2$. 
    Let $x,y\in \dZ^d$ and define
    \begin{equation}
        G(x,y) \equiv  e^{-\lambda^2} \bE_x \left[\sum_{n=0}^\infty 1\{S_n =y \}\right] = e^{-\lambda^2} \sum_{n=0}^\infty e^{-\lambda^2 n}\bP_x [S_n =y]
    \end{equation}
    Then $G(x) =G(0,x)$ and
    \begin{equation}
        G = \left(-\frac{\Delta}{2d} + m^2 \right)^{-1}
    \end{equation}
\end{prop}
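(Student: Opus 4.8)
The plan is to prove both assertions of Proposition \eqref{prop:prob-Green} — the operator identity $G=(-\tfrac{\Delta}{2d}+m^2)^{-1}$ and the coincidence of the Fourier-analytic $G(x)$ of Definition \eqref{def:lattice-Green} with the random-walk expression $G(0,x)$ — by combining a Fourier-symbol computation with a Neumann (geometric) series expansion that directly exhibits the walk representation.

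First I would record the relevant Fourier symbols. Observe that $\tilde k^2/(2d)=1-\lambda(k)$, where $\lambda(k):=\tfrac1d\sum_{i=1}^d\cos k_i$ is the Fourier multiplier of the one-step averaging operator $P$, $(Pf)(x)=\tfrac1{2d}\sum_{z\sim x}f(z)$ (indeed $Pe_k=\lambda(k)e_k$ for $e_k(x):=e^{ikx}$), and that $-\Delta/(2d)=I-P$. Consequently, under the unitary Fourier transform $\ell^2(\dZ^d)\to L^2\big((-\pi,\pi]^d,\tfrac{dk}{(2\pi)^d}\big)$, the operator $-\tfrac{\Delta}{2d}+m^2$ becomes multiplication by $1-\lambda(k)+m^2=\tilde k^2/(2d)+m^2$, which is valued in $[m^2,2+m^2]$. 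Since $m^2>0$ this multiplier is bounded and bounded away from $0$, so $-\tfrac{\Delta}{2d}+m^2$ is a bounded invertible operator whose inverse is multiplication by $(\tilde k^2/(2d)+m^2)^{-1}$; the kernel of that inverse is exactly the translation-invariant function $G(x-y)$ of Definition \eqref{def:lattice-Green}. This yields $G=(-\tfrac{\Delta}{2d}+m^2)^{-1}$, and translation invariance of the kernel gives $G(x,y)=G(x-y)=G(0,x-y)$, hence in particular $G(x)=G(0,x)$.

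Next I would expose the random-walk representation. Writing $e^{-\lambda^2}=(1+m^2)^{-1}$ and using $|\lambda(k)|\le 1<1+m^2$, expand
\[
\hat G(k)=\frac{1}{(1+m^2)-\lambda(k)}=\frac{1}{1+m^2}\sum_{n\ge0}\Big(\frac{\lambda(k)}{1+m^2}\Big)^n ,
\]
a series converging absolutely and uniformly in $k$ (geometric ratio $\le(1+m^2)^{-1}<1$). Integrating term by term against $e^{ikx}$ — legitimate by uniform convergence — and using $\int\tfrac{dk}{(2\pi)^d}e^{ikx}\lambda(k)^n=(P^n\delta_0)(x)=:p_n(0,x)$, the $n$-step transition probability of the simple symmetric walk, gives $G(x)=e^{-\lambda^2}\sum_{n\ge0}e^{-\lambda^2 n}p_n(0,x)$. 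For the killed walk $S_n$, survival up to time $n$ has probability $(\text{per-step survival})^n$ and is independent of the increments, so $\bP_x[S_n=y]=(\text{per-step survival})^n\,p_n(x,y)$; taking the per-step survival probability to be $(1+m^2)^{-1}=e^{-\lambda^2}$ matches the previous display, and since $\bE_x\big[\sum_{n\ge0}1\{S_n=y\}\big]=\sum_{n\ge0}\bP_x[S_n=y]$ by monotone convergence we obtain $G(x,y)=e^{-\lambda^2}\bE_x\big[\sum_{n\ge0}1\{S_n=y\}\big]$, as claimed. As a cross-check of the operator identity one can also verify it directly from the series: with $p_0(x,y)=\delta_{xy}$, $p_n(x,y)=\tfrac1{2d}\sum_{z\sim x}p_{n-1}(z,y)$ for $n\ge1$, and $1+m^2=e^{\lambda^2}$, one computes $\big((-\tfrac{\Delta}{2d}+m^2)G\big)(x,y)=(1+m^2)G(x,y)-\tfrac1{2d}\sum_{z\sim x}G(z,y)=\sum_{n\ge0}e^{-\lambda^2 n}p_n(x,y)-\sum_{n\ge1}e^{-\lambda^2 n}p_n(x,y)=\delta_{xy}$, and symmetrically $G(-\tfrac{\Delta}{2d}+m^2)=\mathrm{Id}$.

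This is the standard random-walk representation of the massive lattice resolvent (Theorem 8.26 of Ref. \cite{friedli2017statistical}), so I do not expect a genuine obstacle. The only points requiring (routine) care are the justification of term-by-term inversion of the geometric series — valid precisely because $m^2>0$ forces the ratio strictly below $1$ uniformly in $k$ — and being careful with the killing convention so that ``survival to time $n$'' factorizes as $e^{-\lambda^2 n}$ independently of the walk's position; both are immediate.
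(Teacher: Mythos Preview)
Your argument is correct and complete: the Fourier-symbol identification $\tilde k^2/(2d)=1-\lambda(k)$, the invertibility for $m^2>0$, and the Neumann-series expansion yielding the random-walk representation are all standard and carefully justified. Note, however, that the paper does not supply its own proof of this proposition at all---it is stated as a citation to Theorem~8.26 of Ref.~\cite{friedli2017statistical} and used as a black box---so there is nothing to compare against; you have simply written out what the paper defers to the literature. One minor remark: the paper's phrasing ``probability $e^{-\lambda^2}$ to be killed'' appears to be a slip for ``probability $e^{-\lambda^2}$ to survive,'' and you have (correctly) adopted the latter convention, which is the only one consistent with the displayed formula for $G(x,y)$.
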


Let $\Lambda_N$ denote a finite box of $\dZ^d$ with linear length $N$ and let $\Delta_N = \dnum_N \Delta \dnum_N$ denote the projected Laplacian, i.e., $\dnum_N$ is the projection operator onto $\Lambda_{N-1} $ so that $(\dnum_N u)_i =u_i 1\{ i\in \Lambda_{N-1}\}$.
If we wish to consider the inverse of the $(-\Delta_N/2d +m^2)$ restricted to $\Lambda_N$, then the probabilistic interpretation provides the following simple representation.
\begin{prop}[Theorem 8.26 in Ref. \cite{friedli2017statistical}]
    \label{prop:prob-Green-N}
    Let $d\ge 2$ and $\Lambda_N\subseteq\dZ^d$ and $x,y\in \dZ^d$ and define 
    \begin{equation}
        G_N(x,y) \equiv e^{-\lambda^2} \bE_x \left[\sum_{n=0}^{\tau -1} 1\{S_n=y\}\right]
    \end{equation}
    Where $\tau$ is the stopping time when $S_n$ leaves $\Lambda_{N-1}$, i.e., $\tau = \inf \{n\ge 0: S_n \notin  \Lambda_{N-1}\}$.
    Then
    \begin{equation}
        G_N = \left(-\frac{\Delta_N}{2d} + m^2 \right)^{-1}
    \end{equation}
    when considered as operators restricted to functions on $\Lambda_{N-1}$, i.e., $\dC^{\Lambda_{N-1}}$.
\end{prop}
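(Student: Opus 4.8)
The plan is to read both sides as matrix entries of the resolvent of a substochastic transition operator, so that the identity becomes the finite-box analogue of Proposition \eqref{prop:prob-Green} and reduces to a Neumann-series expansion together with a combinatorial reading of its terms.

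First I would introduce the one-step transition operator $P$ of the simple symmetric walk, $(Pf)(x)=\frac1{2d}\sum_{y\sim x}f(y)$, so that $\Delta=2d(P-I)$ and hence, restricted to $\dC^{\Lambda_{N-1}}$ (where $\dnum_N$ acts as the identity), $-\frac{\Delta_N}{2d}=I-\dnum_N P\dnum_N$. Writing $e^{\lambda^2}=1+m^2$ gives
\[
-\frac{\Delta_N}{2d}+m^2=(1+m^2)I-\dnum_N P\dnum_N=e^{\lambda^2}\left(I-e^{-\lambda^2}\dnum_N P\dnum_N\right)
\]
as operators on the finite-dimensional space $\dC^{\Lambda_{N-1}}$. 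Since $m^2>0$ this operator dominates $m^2 I$ and is invertible, and $\|e^{-\lambda^2}\dnum_N P\dnum_N\|_\infty\le e^{-\lambda^2}<1$, so the Neumann series converges and
\[
\left(-\frac{\Delta_N}{2d}+m^2\right)^{-1}=e^{-\lambda^2}\sum_{n=0}^\infty\left(e^{-\lambda^2}\dnum_N P\dnum_N\right)^n=e^{-\lambda^2}\sum_{n=0}^\infty e^{-\lambda^2 n}\,(\dnum_N P\dnum_N)^n .
\]

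Next I would identify the entries of the powers. For $x,y\in\Lambda_{N-1}$, expanding $(\dnum_N P\dnum_N)^n(x,y)$ and using that each intermediate factor $\dnum_N$ forces the corresponding site into $\Lambda_{N-1}$ shows
\[
(\dnum_N P\dnum_N)^n(x,y)=\bP_x[\,S_n=y,\ \tau>n\,],
\]
where $S$ is the unkilled walk and $\tau=\inf\{n\ge0:S_n\notin\Lambda_{N-1}\}$; note the event is exactly $\{S_0,\dots,S_n\in\Lambda_{N-1},\,S_n=y\}$, since $S_0=x$ and $S_n=y$ already lie in $\Lambda_{N-1}$. Multiplying by $e^{-\lambda^2 n}$ reinstates the geometric killing exactly as in Proposition \eqref{prop:prob-Green} — i.e.\ $e^{-\lambda^2 n}\bP_x[S_n=y,\tau>n]$ is the killed-walk probability of being at $y$ at time $n$ before exiting $\Lambda_{N-1}$ — and summing yields
\[
\left(-\frac{\Delta_N}{2d}+m^2\right)^{-1}(x,y)=e^{-\lambda^2}\,\bE_x\!\left[\sum_{n=0}^{\tau-1}1\{S_n=y\}\right]=G_N(x,y),
\]
where the cutoff at $\tau-1$ loses nothing because $1\{S_n=y\}=0$ once the killed walk has entered the cemetery (in particular once it has left $\Lambda_{N-1}$).

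I do not expect a real obstacle: the content is the resolvent expansion plus the bookkeeping of the matrix powers, and this is precisely the restricted-domain version of the already-quoted Proposition \eqref{prop:prob-Green} (equivalently, Theorem 8.26 of \cite{friedli2017statistical} applied to the walk killed on exiting $\Lambda_{N-1}$). The only points that need care are keeping the two mechanisms that terminate the walk — geometric killing at rate $1-e^{-\lambda^2}$ and exit from $\Lambda_{N-1}$ — from interfering inside the indicator $1\{S_n=y\}$, and noting that the weights $e^{-\lambda^2 n}$ make every series here absolutely convergent, so that the Neumann expansion and the interchange of sum and expectation are legitimate.
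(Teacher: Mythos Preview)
Your proof is correct and follows the standard resolvent/Neumann-series argument. The paper itself does not give a proof of this proposition: it simply cites Theorem~8.26 of \cite{friedli2017statistical} and moves on, so there is no in-paper argument to compare against; your write-up is essentially what one finds in that reference.
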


It's then clear that $0\le G_N(x,y) \le G(x,y)$ and that $G_N(x,y) \nearrow G(x,y)$ as $N \to \infty$.
Note that $G_N(x,y)=0$ if either $x$ or $y$ is not in $\Lambda_{N-1}$.
By reversing paths, it's also clear that $G_N(x,y) = G_N(y,x)$.
Let us also further collect the following result, which follows the same proof as in Ref. \cite{lawler2012intersections}.
\begin{prop}[Modified version of Proposition 1.5.8 in Ref. \cite{lawler2012intersections}]
    \label{prop:finite-infinite-connection}
    Let $G_N,G$ be defined as in Proposition \eqref{prop:prob-Green-N} and \eqref{prop:prob-Green}. Then
    \begin{equation}
        G_N(x,y) = G(x,y) -\sum_{z\in \partial\Lambda_N} \dP_x[S_\tau = z] G(z,y)
    \end{equation}
    where $\tau$ the stopping time that $S_n$ leaves $\Lambda_{N-1}$.
\end{prop}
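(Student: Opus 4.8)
The plan is to deduce the identity from the \emph{first-exit decomposition} of the expected number of visits of the killed walk, which is the killed-walk form of Proposition~1.5.8 in \cite{lawler2012intersections}. Writing $\tilde G(x,y)=e^{\lambda^2}G(x,y)$ and $\tilde G_N(x,y)=e^{\lambda^2}G_N(x,y)$, Propositions~\eqref{prop:prob-Green} and~\eqref{prop:prob-Green-N} identify
\[
\tilde G(x,y)=\bE_x\Big[\sum_{n\ge 0}\mathbf 1\{S_n=y\}\Big],\qquad \tilde G_N(x,y)=\bE_x\Big[\sum_{n=0}^{\tau-1}\mathbf 1\{S_n=y\}\Big],
\]
i.e.\ the expected number of visits to $y$ made, respectively, over the whole lifetime and strictly before the first exit time $\tau=\inf\{n\ge 0:S_n\notin\Lambda_{N-1}\}$ of the random walk $S$ killed at rate $1-e^{-\lambda^2}$ per step (with a cemetery state $\dagger\notin\Lambda_{N-1}$). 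Dividing the asserted identity by $e^{\lambda^2}$, it suffices to prove
\[
\tilde G(x,y)=\tilde G_N(x,y)+\sum_{z\in\partial\Lambda_N}\bP_x[S_\tau=z]\,\tilde G(z,y),\qquad x\in\Lambda_{N-1},\ y\in\dZ^d.
\]
We may restrict to $x\in\Lambda_{N-1}$: for $x\in\partial\Lambda_N$ one has $\tau=0$, $S_\tau=x$, and both sides reduce to $0$; for $x\notin\Lambda_N$ the object $G_N(x,\cdot)$ is not the one described by Proposition~\eqref{prop:prob-Green-N}.

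First I would record two elementary facts: $\tau<\infty$ a.s.\ (the walk either leaves the finite set $\Lambda_{N-1}$ or is killed), and, on the event that it leaves $\Lambda_{N-1}$ before being killed, $S_\tau\in\partial\Lambda_N$ (a single step out of $\Lambda_{N-1}$ lands in $\Lambda_N\setminus\Lambda_{N-1}$), while otherwise $S_\tau=\dagger$. Thus for $z\in\partial\Lambda_N$, $\bP_x[S_\tau=z]$ is exactly the probability that the killed walk escapes $\Lambda_{N-1}$ alive at $z$, and $\sum_{z\in\partial\Lambda_N}\bP_x[S_\tau=z]\le 1$.

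Then I would split the lifetime visit count at the stopping time $\tau$,
\[
\sum_{n\ge 0}\mathbf 1\{S_n=y\}=\sum_{n=0}^{\tau-1}\mathbf 1\{S_n=y\}+\sum_{n\ge\tau}\mathbf 1\{S_n=y\},
\]
and take $\bE_x$. The first term yields $\tilde G_N(x,y)$ by definition. For the second, on $\{S_\tau=\dagger\}$ the sum is empty since $S_n=\dagger\ne y$ for all $n\ge\tau$; on $\{S_\tau=z\}$ with $z\in\partial\Lambda_N$, the strong Markov property at $\tau$ says that $(S_{\tau+j})_{j\ge 0}$ is, conditionally on $\sF_\tau$, a killed walk started from $z$, so $\bE_x\big[\sum_{n\ge\tau}\mathbf 1\{S_n=y\}\ \big|\ \sF_\tau\big]=\tilde G(z,y)$ on that event. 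Averaging (all summands being nonnegative, so that Tonelli's theorem licenses the interchange of sum and expectation) gives $\bE_x\big[\sum_{n\ge\tau}\mathbf 1\{S_n=y\}\big]=\sum_{z\in\partial\Lambda_N}\bP_x[S_\tau=z]\,\tilde G(z,y)$, which is the decomposition, and hence the proposition.

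There is no substantive obstacle here: this is the classical first-entrance/strong-Markov identity for a killed walk, and the only points requiring a little care are the bookkeeping of the killing (entirely absorbed into the $e^{\lambda^2}$ normalization and into reading $\bP_x[S_\tau=z]$ as the escape probability to $z$), the a.s.\ finiteness of $\tau$, and the monotone/Tonelli interchange above. If one prefers an analytic route, the same identity follows from uniqueness for the discrete elliptic boundary value problem: the function $x\mapsto G(x,y)-\sum_{z}\bP_x[S_\tau=z]\,G(z,y)$ solves $\big(-\tfrac{1}{2d}\Delta+m^2\big)(\cdot)=\delta_y$ on $\Lambda_{N-1}$ (the Poisson-type term being $\big(-\tfrac{1}{2d}\Delta+m^2\big)$-harmonic in $\Lambda_{N-1}$ by a one-step decomposition, and equal to $G(\cdot,y)$ on $\partial\Lambda_N$) and vanishes on $\partial\Lambda_N$, whence it equals $G_N(\cdot,y)$ by invertibility of $-\tfrac{1}{2d}\Delta_N+m^2$ on $\ell^2(\Lambda_{N-1})$.
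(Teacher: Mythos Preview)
Your proof is correct and follows exactly the standard first-exit/strong-Markov decomposition that the paper is referencing; the paper's own proof consists only of the remark that the argument in \cite{lawler2012intersections} for the massless case carries over verbatim to the massive (killed) walk, and your write-up is precisely that argument spelled out.
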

\begin{proof}
    Ref. \cite{lawler2012intersections} considers the scenario of a symmetric random walk (corresponding to massless Green functions). However, the proof is exactly the same for the massive lattice Green function.
\end{proof}
\begin{corollary}
    \label{cor:gradient}
    Write $G_N(x) \equiv G_N(0,x)$ where $G_N$ is defined in Proposition \eqref{prop:prob-Green-N}.
    Then there exists constant $C>0$ such that for any edge $e\in \dZ^d$,
    \begin{equation}
        |\nabla_{\be} G_N| \le C
    \end{equation}
\end{corollary}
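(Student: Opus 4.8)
The plan is to derive the bound from two ingredients: the last-exit decomposition of Proposition~\eqref{prop:finite-infinite-connection}, which expresses $G_N$ through the translation-invariant infinite-volume Green function $G$, together with an elementary Fourier estimate showing that every single-edge gradient of $G$ is bounded by a constant depending only on $d$ and, crucially, \emph{not} on the mass $m$ (equivalently not on $\ve$). This mass-uniformity is precisely what is needed when the Corollary is applied in the proof of Theorem~\eqref{thm:mcbryan}, and, as I explain at the end, it is also the one place where care is required.

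\smallskip\noindent\textbf{Step 1 (a mass-uniform gradient bound for $G$).} I would first show: for $d\ge 2$, any $m>0$, any coordinate direction $e_j$, and any $u\in\dZ^d$,
\[
    \bigl|G(u\pm e_j)-G(u)\bigr|\ \le\ 2d\int_{(-\pi,\pi)^d}\frac{dk}{(2\pi)^d}\,\frac{1}{\sqrt{\tilde k^2}}\ =:\ C_d\ <\ \infty .
\]
By Definition~\eqref{def:lattice-Green} the left side equals $\bigl|\int \tfrac{dk}{(2\pi)^d}\,e^{iku}(e^{\pm ik_j}-1)/(\tilde k^2/2d+m^2)\bigr|$, and since $|e^{\pm ik_j}-1|=2|\sin(k_j/2)|\le\sqrt{\tilde k^2}$ while $\tilde k^2/2d+m^2\ge \tilde k^2/2d$, the bound follows. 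Finiteness of $C_d$ for $d\ge2$ is immediate from $\tilde k^2\ge \tfrac{4}{\pi^2}|k|^2$ on $(-\pi,\pi)^d$, which makes the integrand $O(|k|^{-1})$ near the origin (the integral does diverge for $d=1$, consistent with the Corollary being used only for $d\ge2$).

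\smallskip\noindent\textbf{Step 2 (reduce $\nabla_{\be}G_N$ to edge-gradients of $G$).} Writing $G_N(x)=G_N(0,x)=G_N(x,0)$ and applying Proposition~\eqref{prop:finite-infinite-connection} with first argument $0$ gives, for every $v\in\dZ^d$,
\[
    G_N(0,v)=G(0,v)-\sum_{z\in\partial\Lambda_N}q_z\,G(z,v),\qquad q_z:=\dP_0[S_\tau=z]\ge0,\quad \sum_{z\in\partial\Lambda_N}q_z\le1,
\]
with the weights $q_z$ independent of $v$. For an edge $e=xy$, subtracting the $v=y$ identity from the $v=x$ identity and using translation invariance $G(z,w)=G(w-z)$,
\[
    \nabla_{\be}G_N\;=\;\bigl[G(x)-G(y)\bigr]\;-\;\sum_{z\in\partial\Lambda_N}q_z\,\bigl[G(x-z)-G(y-z)\bigr].
\]
Both $x-y$ and $(x-z)-(y-z)=x-y$ are unit coordinate vectors, so each bracket is at most $C_d$ in absolute value by Step~1; since $q_z\ge0$ and $\sum_zq_z\le1$, we get $|\nabla_{\be}G_N|\le2C_d$, which is the claim with $C=2C_d$. (If $e$ straddles $\partial\Lambda_{N-1}$ then one of $G_N(0,x),G_N(0,y)$ vanishes and the same identity, with the vanishing term dropped, gives the same bound.)

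\smallskip\noindent\textbf{Main obstacle.} The delicate point is Step~1: the crude estimate $|\nabla_{\be}G_N|\le 2\sup_vG_N(0,v)\le 2G(0)$ is useless, because $G(0)\asymp\log(1/m)$ blows up as $\ve\to\infty$ in $d=2$. One must instead extract the cancellation in $G(u\pm e_j)-G(u)$ — which the Fourier computation above does uniformly in $m$ — and then propagate it through the last-exit decomposition, where the key observation is that $G(z,x)-G(z,y)$ is again a single-edge gradient of $G$ regardless of where $z$ lies.
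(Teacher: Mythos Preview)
Your proof is correct and follows essentially the same route as the paper's: reduce $\nabla_{\be}G_N$ to edge-gradients of the infinite-volume $G$ via the last-exit decomposition of Proposition~\ref{prop:finite-infinite-connection}, and then bound $|\nabla_{\be}G|$ by a direct Fourier estimate that is uniform in the mass $m$. You are simply more explicit than the paper about how the reduction in Step~2 works (the paper states only that it ``suffices'' to bound $|\nabla_{\be}G|$, leaving the convex-combination argument with the weights $q_z$ implicit), and you use the equivalent numerator bound $|e^{ik_j}-1|\le\sqrt{\tilde k^2}$ where the paper writes $1-\cos k_1$; both lead to the same $O(|k|^{-1})$ integrand and hence the same conclusion.
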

\begin{proof}
    Note that by Proposition \eqref{prop:finite-infinite-connection}, it's sufficient to show that $|\nabla_{\be} G| \le C$ for some constant $C>0$ for all edges $e$.
    For notation simplicity, we shall use $A,B>0$ to denote constants that may change from line to line.
    Without loss of generality, let $e=xy$ where $y_1=x_1+1$ and $y_i=x_i$ for all $i\ge 2$.
    Note that
    \begin{align}
        |\nabla_{\be} G| &\le \int_{(-\pi,\pi)^d} \frac{dk}{(2\pi)^2} \frac{1-\cos k_1}{\tilde{k}^2/2d +m^2} \\
        &\le A\int_{(-\pi,\pi)^d} dk \frac{1}{\norm{k}}\\
        &\le A
    \end{align}
    where use the fact that $1/\norm{k}$ is locally integrable for $d\ge 2$
\end{proof}

\bibliographystyle{alpha}
\bibliography{main.bbl}

\end{document}